\newtheorem{theorem}{Theorem}
\newtheorem{lemma}{Lemma}
\newtheorem{definition}{Definition}
\newtheorem*{conjecture*}{Conjecture}
\newtheorem*{informal}{Informal Theorem}
\newcommand{\zhiyi}[1]{\todo[inline, color=blue!20]{Zhiyi: #1}}
\newcommand{\opt}{\textsc{Opt}}
\renewcommand{\vec}[1]{\mathbf{#1}}
\renewcommand{\Pr}{\mathbf{Pr}}
\newcommand{\E}{\mathbb{E}}
\newcommand{\skl}{D_{\textsc{SKL}}}
\newcommand{\shading}{s_{N, n, \delta}}
\newcommand{\doubleshading}{d_{N, n, \delta}}
\newcommand{\Hellinger}{H}
\newcommand{\TV}{\delta}
\newcommand{\hel}{H^2}
\newcommand{\defeq}{\overset{\text{def}}{=}}
\newcommand{\hypo}{h}
\newcommand{\hypoclass}{\mathcal{H}}
\newcommand{\R}{\mathbb{R}}
\newcommand{\typespace}{T}
\newcommand{\type}{t}
\title{
    Generalizing Complex Hypotheses on Product Distributions:\\
    Auctions, Prophet Inequalities, and Pandora's Problem
}
\author{
    Chenghao Guo\thanks{IIIS, Tsinghua University. Email: \{guoch16, zhang-xz16\}@mails.tsinghua.edu.cn.}
    \and
    Zhiyi Huang\thanks{The Univeristy of Hong Kong. Email: zhiyi@cs.hku.hk.}
    \and
    Zhihao Gavin Tang\thanks{ITCS, Shanghai University of Finance and Economics. Email: tang.zhihao@mail.shufe.edu.cn.}
    \and
    Xinzhi Zhang\footnotemark[1]
}
\begin{document}

\begin{titlepage}
\thispagestyle{empty}
\maketitle
\begin{abstract}
    \thispagestyle{empty}
    This paper explores a theory of generalization for learning problems on product distributions, complementing the existing learning theories in the sense that it does not rely on any complexity measures of the hypothesis classes.
    The main contributions are two general sample complexity bounds:
    (1) $\tilde{O} \big( \frac{nk}{\epsilon^2} \big)$ samples are sufficient and necessary for learning an $\epsilon$-optimal hypothesis in \emph{any problem} on an $n$-dimensional product distribution, whose marginals have finite supports of sizes at most $k$;
    (2) $\tilde{O} \big( \frac{n}{\epsilon^2} \big)$ samples are sufficient and necessary for any problem on $n$-dimensional product distributions if it satisfies a notion of strong monotonicity from the algorithmic game theory literature.
    As applications of these theories, we match the optimal sample complexity for single-parameter revenue maximization (Guo et al., STOC 2019), improve the state-of-the-art for multi-parameter revenue maximization (Gonczarowski and Weinberg, FOCS 2018) and prophet inequality (Correa et al., EC 2019; Rubinstein et al., ITCS 2020), and provide the first and tight sample complexity bound for Pandora's problem.
\end{abstract}

\end{titlepage}

\section{Introduction}
\label{sec:introduction}

\begin{quote}
    \em
    The learning process is a process of choosing an appropriate function from a given set of functions.
    \hspace*{\fill} --- \citet{Vapnik/book/1998}
\end{quote}

\noindent
Generalization is widely recognized as one of the fundamental pillars of learning theory.
A general learning problem asks whether we can select a function, often referred to as a hypothesis, from a hypothesis class to maximize or minimize the expectation w.r.t.\ an underlying distribution over the data domain, based on samples from the distribution.
While it may be easy to select a hypothesis the maximizes or minimizes the average over the samples, how can we ensure that it \emph{generalizes} and gets a similar performance on the underlying distribution?
More quantitatively, we may ask about its sample complexity: 
\emph{how many samples are sufficient and necessary for choosing a hypothesis that is optimal on the true distribution up to an $\epsilon$ error?}

A widely studied example is the classification problem in supervised learning.
In this problem, each data point is a feature-label pair $(x, y) \in X \times Y$, where $X$ is the feature domain, e.g., $\R^n$, and $Y$ is the label domain, e.g., $\{0, 1\}$.
Each hypothesis corresponds to a classifier, i.e., a feature-to-label mapping $f : X \mapsto Y$;
its value on a data point $(x, y)$ is $L\big(f(x), y\big)$ for some loss function $L$, e.g., $\big| f(x) - y \big|$.
The goal is to learn a classifier from samples to minimize the expected loss on the underlying distribution.

Meanwhile, the general learning problem also captures a wide range of optimization problems in the Bayesian model.
The problem of learning revenue-maximizing auctions from data is a recent example, which has received a lot of attention in algorithmic game theory and more generally in theoretical computer science.
In this example, each data point comprises the valuations of the bidders;
each hypothesis corresponds to an auction and its value is defined to be the revenue of the auction on the given valuations.
We aim to learn an auction from sample valuations to maximize the expected revenue on the underlying value distributions.

\subsection{Generalization from Complexity Measures of the Hypothesis Class}
\label{sec:intro-complexity-measures}

Most sample complexity bounds in learning theory rely on detailed structures of the hypothesis class $\hypoclass$, and they hold for arbitrary distributions over the data domain. 
In particular, they build on various complexity measures of the hypothesis class, including the covering number \cite{AnthonyB/book/2009}, Vapnik-Chervonenkis (VC) dimension~\cite{VapnikC/2015}, Natarajan dimesnion~\cite{Natarajan/ML/1989}, pseudo-dimension~\cite{Pollard/1990}, fat-shattering dimension~\cite{BartlettLW/JCSS/1996}, Rademacher complexity~\cite{BartlettM/JMLR/2002, KoltchinskiiP/2000}, local Rademacher complexity~\cite{BartlettBM/COLT/2002}, etc.
Informally, each complexity measure provides a parameter $d$ which represents the ``degrees-of-freedom'' of the hypothesis class $\hypoclass$, and the corresponding sample complexity upper bound has the form $\tilde{O} \big( \frac{d}{\epsilon^2} \big)$.

For example, the VC dimension characterizes the sample complexity of binary classification problems, and the Natarajan dimesnion captures that of multiclass classification problems (see, e.g., \citet{Shalev-ShwartzB/book/2014}), \emph{if the underlying distribution could be arbitrary}.

Further, the example of learning revenue-optimal auctions from data, in particular, the special case of selling a single item to $n$ bidders, has been investigated using the covering number (e.g., \citet{DevanurHP/STOC/2016}, \citet{GonczarowskiN/STOC/2017}), pseudo-dimension (e.g., \citet{MorgensternR/NIPS/2015}), and Rademacher complexity (e.g., \citet{Syrgkanis/NIPS/2017}).
The ``degrees-of-freedom'' bounds in these works are all $\tilde{O}\big(\frac{n}{\epsilon}\big)$ and thus, lead to the same $\tilde{O} \big( \frac{n}{\epsilon^3} \big)$ sample complexity upper bound.%
\footnote{Nonetheless, these works are different in that they either prove sample complexity bounds for different families of distributions beyond the $[0, 1]$-bounded ones, and/or provide slightly different bounds in the logarithmic factors.}
The bound once again holds for \emph{arbitrary distributions of the valuations, even correlated ones}, although the problem of revenue-optimal auction design often considers product value distributions.

\subsection{Our Contributions: a Theory of Generalization on Product Distributions}

While the above theories suggest that simpler hypotheses generalize better, it has been increasingly important to consider complex ones.
On the one hand, deep neural networks generalize surprisingly well on classification problems on real-world data despite their complexity (e.g., \citet{ZhangBHRV/ICLR/2017}).
On the other hand, the revenue-optimal auction for selling even two heterogeneous items could have an infinite menu complexity (e.g., \citet{DaskalakisDT/EC/2013}).
To this end, this paper asks:

\begin{quote}
    \emph{Is there a complementary theory of generalization building on the simplicity of data instead of the hypothesis class?}
\end{quote}

In particular, it is standard to assume that the data is drawn from a product distribution in optimization problems in the Bayesian model, including the aforementioned revenue-maximization problem, and prophet inequality and Pandora's problem.
Can we get sample complexity bounds from the independence of data dimensions, and only minimum knowledge about the hypotheses?

\paragraph{Implicit Attempts in Previous Works.}
We first review several recent sample complexity bounds for specific optimization problems that implicitly explore the power of independent data dimensions.
\citet{ColeR/STOC/2014} and \citet{RoughgardenS/EC/2016} used independence in the single-parameter revenue maximization problems to analyze coordinate-wise the convergence of the empirical distribution to the true distribution.
\citet{correa2018prophet} employed a similar approach on prophet inequality.
Their analyses of convergence, however, are problem dependent.

\citet{CaiD/FOCS/2017} proposed a hybrid argument that used the independence of data dimensions in multi-item auctions to derive sample complexity bounds from complexity measures of the hypothesis class w.r.t.\ each coordinate of the data domain. 
Their hybrid approach benefits from independence, yet still relies on complexity measures of the hypothesis class.

\citet{focs/GonczarowskiW18} and \citet{guo2019settling} are the closest to this paper.
\citet{focs/GonczarowskiW18} exploited independence in multi-parameter revenue maximization to construct an improved covering number that holds specifically on product distributions.
Although they did not explicitly ask the above conceptual question, their techniques implicitly showed generalization of complex hypotheses on product distributions;
the resulting bound is inferior in the logarithmic factor compared to Theorem~\ref{thm:error_finite} in this paper.
\citet{guo2019settling} used independent data dimensions and a notion of monotonicity to derive optimal sample complexity bounds for single-parameter revenue maximization problems in the matroid setting.
It is the closest to this paper.
Part of our results can be viewed as generalizing theirs to all problems with the same notion of monotonicity.

\paragraph{Our Results.}
The contributions of the paper are two general sample complexity bounds from the independence of data dimensions, unrelated to any complexity measures of the hypothesis class.
Both results use the same algorithm which we call the \emph{product empirical reward maximizer/risk minimize} (PERM).
It selects the best hypothesis w.r.t.\ a product empirical distribution such that each coordinate is a uniform distribution over the corresponding coordinate of the samples.
This is different from the usual notion of empirical distribution, i.e., the uniform distribution over sample vectors, which is not a product distribution in general.
The first result considers finite data domains.

\begin{informal}
    Suppose the data has $n$ independent dimensions, each of which takes up to $k$ possible values.
    Then, $O \big( \frac{nk}{\epsilon^2} \log \frac{1}{\delta} \big)$ samples are sufficient for learning any hypothesis class up to $\epsilon$-optimal with probability at least $1-\delta$.
    Further, $\Omega \big( \frac{nk}{\epsilon^2} \big)$ samples are necessary.
\end{informal}

The proof of the upper bound is simple in hindsight. 
We bound the total variation distance between the product empirical distribution and the true distribution, using the connection between the total variation distance and the Hellinger distance, and a vector concentration inequality.

Despite its simplicity, the above result gives strong sample complexity upper bounds for the aforementioned optimization problems.
In single-parameter revenue maximization, e.g., single-item auctions, the value can be discretized to multiples of $\epsilon$~\cite{DevanurHP/STOC/2016}; 
replacing $k = \frac{1}{\epsilon}$ gives an $O \big( \frac{n}{\epsilon^3} \log \frac{1}{\delta} \big)$ upper bound.
In other words, without any knowledge of auction theory, other than the fact that the value domain could be discretized, the theorem improves the sample complexity upper bounds derived from various complexity measures based on detailed understandings of single-parameter auctions as discussed in Section~\ref{sec:intro-complexity-measures}, in the log factor.

In multi-parameter revenue maximization, e.g., with one unit-demand bidder and $n$ items, the value domain can be discretized to multiples of $\epsilon^2$~\cite{BalcanBHM/JCSS/2008};
hence, letting $k = \frac{1}{\epsilon^2}$ gives an $O \big(\frac{n}{\epsilon^4} \log \frac{1}{\delta} \big)$ upper bound, improving the state-of-the-art by \citet{focs/GonczarowskiW18} in the log factor.%
\footnote{\citet{focs/GonczarowskiW18} only claim polynomial sample complexity;
the stated bound is derived using their techniques to the best of our efforts.}

Further, we show that the type domain in prophet inequality can be discretized to multiples of $\epsilon$, leading to an $O \big( \frac{n}{\epsilon^3} \log \frac{1}{\delta} \big)$ upper bound.
It improves the best previous bound by \citet{correa2018prophet} in the dependence in $n$, and the concurrent effort by \citet{itcs/RubinsteinW2019} in the dependence in $\epsilon$.
In prophet inequality with i.i.d.\ rewards in particular, it implies that $\tilde{O} \big( n \big)$ samples are sufficient to learn a $0.745$-competitive algorithm, answering an open question by \citet{correa2018prophet}.%
\footnote{The best algorithm, with full knowledge of the distribution, is strictly better than $0.745$-competitive. 
Hence, we may consider $\epsilon$ a constant in this result. 
Further, unlike our model, \citet{correa2018prophet} consider unbounded distributions and multiplicative approximation;
nonetheless, Appendix~\ref{app:prophet-unbounded} shows how to get the stated bounds in their model.}

Finally, we show that the type domain of Pandora's problem can also be discretized to multiples of $\epsilon$, giving the first polynomial sample complexity bound for the problem.

\begin{table}[t]
    \centering
    \renewcommand{\arraystretch}{1.25}
    \begin{tabular}{|c|c|c|c|}
        \hline
        \multirow{2}{*}{}
        & \multicolumn{2}{|c|}{\bf This paper} & \multirow{2}{*}{Previous results} \\
        \cline{2-3}
        & Finite domain (\S\ref{sec:finite}) & Strong monotonicity (\S\ref{sec:strong-monotone}) & \\
        \hline
        General bound & $\bm{O \big( \frac{nk}{\epsilon^2}\,\textbf{log}\,\frac{1}{\delta} \big)}$ & $\bm{\tilde{O} \big( \frac{n}{\epsilon^2} \big)}$ & - \\
        \hline
        Single-parameter & $O \big( \frac{n}{\epsilon^3} \log \frac{1}{\delta} \big)$ & $\bm{\tilde{O} \big( \frac{n}{\epsilon^2} \big)}$ & $\tilde{O} \big( \frac{n}{\epsilon^2} \big)$~\cite{guo2019settling} \\
        \hline
        Multi-parameter & $\bm{O \big( \frac{n}{\epsilon^4}\,\textbf{log}\,\frac{1}{\delta} \big)}$ & - & $O \big( \frac{n}{\epsilon^4} \log \frac{n}{\epsilon \delta} \big)$~\cite{focs/GonczarowskiW18} \\
        \hline
        Prophet inequality & $O \big( \frac{n}{\epsilon^3} \log \frac{1}{\delta} \big)$ & $\bm{\tilde{O} \big( \frac{n}{\epsilon^2} \big)}$ & $\tilde{O} \big( \frac{n^2}{\epsilon^2} \big)$~\cite{correa2018prophet}, $\tilde{O} \big( \frac{n}{\epsilon^6} \big)$~\cite{itcs/RubinsteinW2019} \\
        \hline
        Pandora's problem & $O \big( \frac{n^3}{\epsilon^3} \log \frac{1}{\delta} \big)$ & $\bm{\tilde{O} \big( \frac{n}{\epsilon^2} \big)}$ & - \\
        \hline
    \end{tabular}
    \caption{Summary of sample complexity upper bounds, in comparisons with the state-of-the-art. 
    The results in bold are the best upper bounds in different settings.
    We use single-item auctions, and $n$-item auctions with a unit-demand bidder as the running examples of single- and multi-parameter revenue maximization. 
    The bounds may vary in other settings; see Sec.~\ref{sec:finite} and \ref{sec:strong-monotone}.}
    \label{tab:summary}
\end{table}

Our second result revisits the notion of strong monotonicity, a key ingredient of the optimal sample complexity bounds for single-parameter revenue maximization by \citet{guo2019settling}.
Strong monotonicity means that the expected value of the optimal hypothesis w.r.t.\ a distribution $\vec{\tilde{D}}$ does not decrease when it is applied to another distribution $\vec{D}$ that stochastically dominates $\vec{\tilde{D}}$.
We generalize the analysis by \citet{guo2019settling} to any strongly monotone problem.

\begin{informal}
    Suppose the data has $n$ independent dimensions. 
    Then, $\tilde{\Theta}\big( \frac{n}{\epsilon^2} \big)$ samples are sufficient and necessary for learning an arbitrary strongly monotone hypothesis class.
\end{informal}

Further, we show that prophet inequality and Pandora's problem are both strongly monotone.
Using this theorem, we get an $\tilde{O} \big( \frac{n}{\epsilon^2} \big)$ upper bound for single-parameter revenue maximization, prophet inequality, and Pandora's problem.%
\footnote{Concurrently and independently, \citet{fu2020learning} also proved an $\tilde{O} \big( \frac{n}{\epsilon^2} \big)$ upper bound for Pandora's problem.}
The linear dependence on the data dimension $n$ is tight for all three problems.
We remark that while the bound for single-parameter revenue maximization is the same as that by \citet{guo2019settling}, ours directly uses the PERM, which corresponds to the empirical Myerson auction, while that by \citet{guo2019settling} needs an appropriate regularization to the product empirical distributions and uses the corresponding regularized empirical Myerson auction.

\section{Preliminaries}
\label{sec:preliminary}

\subsection{Model}

A \emph{general learning problem} (e.g., Chapter 1.4 of \citet{Vapnik/2013}) is defined by a \emph{hypothesis class} denoted as $\hypoclass$.
We will abuse notation and refer to the problem defined by a hypothesis class $\hypoclass$ as problem $\hypoclass$.
Each \emph{hypothesis} $\hypo \in \hypoclass$ is a mapping from $\vec{\typespace} = \typespace_1 \times \typespace_2 \times \dots \times \typespace_n$ to $[0, 1]$, where $\typespace_i \subseteq \R$ is the domain of the $i$-th coordinate of the data type.
We will refer to $n$ as the \emph{data dimension} of the problem, to make a distinction with various learning dimensions in the literature which measure the complexity of the hypothesis class.
For a concrete running example, readers may think of $\typespace_1 = \typespace_2 = \dots = \typespace_n = [0, 1]$. 
For any data type $\vec{\type} \in \vec{\typespace}$, and any hypothesis $\hypo \in \hypoclass$, $\hypo(\vec{\type})$ is the reward obtained by hypothesis $\hypo$ on a data point of type $\vec{\type}$.

Given a distribution $\vec{D}$ over $\vec{\typespace}$, we seek to pick a hypothesis $\hypo$ to maximize the expected reward:
\[
    \hypo \big( \vec{D} \big) \defeq \E_{\vec{\type} \sim \vec{D}} \big[ \hypo \big( \vec{\type} \big) \big]
    ~.
\]

Further, let $\opt_{\hypoclass} \big( \vec{D} \big)$ denote the optimal expected reward.
We will omit the subscript $\hypoclass$ for brevity when the hypothesis class is clear from the context.
\[
    \opt_{\hypoclass} \big( \vec{D} \big) \defeq \sup_{\hypo \in \hypoclass} \hypo \big( \vec{D} \big)
    ~.
\]

Throughout this paper, we will assume that $\vec{D} = D_1 \times D_2 \times \dots \times D_n$ is a product distribution, as it is a standard assumption in all examples considered in the paper. 
See Section~\ref{sec:examples} for details.

A \emph{learning algorithm} for a problem $\hypoclass$ takes $N$ i.i.d.\ samples from the underlying distribution $\vec{D}$ as input and returns a hypothesis $\hypo \in \hypoclass$.
Let $E_i$ denote the uniform distribution over the $i$-th coordinate of the samples.
We call $\vec{E} = E_1 \times E_2 \times \dots \times E_n$ the \emph{product empirical distribution}, and its optimal hypothesis $\hypo_{\vec{E}}$ the \emph{product empirical reward maximizer} (PERM).

For any $0 \le \epsilon \le 1$, a hypothesis $\hypo$ is an \emph{$\epsilon$-additive approximation} if:
\[
    \hypo \big( \vec{D} \big) \ge \opt \big( \vec{D} \big) - \epsilon
    ~.
\]

The \emph{sample complexity} of a problem $\hypoclass$ is the minimum number of samples $N$ for which there is a learning algorithm so that, \emph{for any distribution $\vec{D}$}, it takes $N$ i.i.d.\ samples and returns an $\epsilon$-additive approximation with probability at least $1 - \delta$.
The sample complexity bounds in this paper depend on the data dimension $n$, the approximation parameter $\epsilon$, and the confidence parameter $\delta$.
In our first set of results, they further depend on the sizes of the data type domain $\typespace_i$'s.
Importantly, they are independent of any complexity measure of the hypothesis class $\hypoclass$.

\subsection{Examples}
\label{sec:examples}

Next, we define several example problems for which the theory developed in this paper improves or matches the state-of-the-art sample complexity bounds.
We define each problem only with the minimum detail necessary for verifying that it is a special case of the above model.
In particular, we intentionally do not characterize the optimal hypothesis to stress the main feature of our theory:
\emph{it requires almost no knowledge of the hypothesis class}; 
instead, it only needs that (1) $\vec{D}$ is a product distribution, and (2) some generic structural property, e.g., the data type domain can be discretized, or the problem satisfies strong monotonicity, which will be discussed in more details in Section~\ref{sec:strong-monotone}.

\paragraph{Single-parameter Revenue Maximization.}
For simplicity of exposition, we use single-item auctions with $n$ bidders as the running example.
Each bidder $i$ has a type $\type_i \in [0, 1]$ that represents its value for the item, and is drawn independently from $D_i$.
An auction $A$ maps any type profile $\vec{\type}$ to an allocation $\vec{x} \in [0, 1]^n$, $\|\vec{x}\|_1 \le 1$, and a payment vector $\vec{p} \in [0, 1]^n$.
For any bidder $i$, $x_i$ is the probability that the bidder gets the item, and $p_i$ is its payment.
Its utility equals $x_i \type_i - p_i$.

A bidder's type is private information known only to itself;
therefore, the auctioneer must ask the bidders to report the values.
Hence, the literature focuses on dominant-strategy incentive compatible (DSIC) auctions, which ensure that for any bidder, reporting the value truthfully always maximizes its utility.
The goal is to pick a DSIC auction to maximize the expected revenue. 
Readers are referred to \citet{Myerson/MOR/1981} for a characterization of the revenue optimal auction.

To place it in our framework, define the hypothesis class $\hypoclass$ by having a hypothesis $\hypo_A$ for every DSIC auction $A$, such that $\hypo_A(\vec{\type})$ equals the revenue of auction $A$ on a type profile $\vec{\type}$.

Readers who are familiar with auction theory may verify that the techniques in Section~\ref{sec:finite} apply to arbitrary single-parameter problems, and those in Section~\ref{sec:strong-monotone} apply to the matroid setting.%
\footnote{This part of our results rely on a notion called strong revenue monotonicity, which we will discuss in more details in Section~\ref{sec:strong-monotone}. 
It is only known to hold in the matroid setting. 
Whether it further generalizes is an open problem.}

\paragraph{Multi-parameter Revenue Maximization.}
For simplicity of exposition, we use single-bidder auctions with $n$ items as the running example.
Section~\ref{sec:finite} will discuss the extension to multi-bidder multi-item auctions.
The bidder has a type $\vec{\type} \in [0, 1]^n$ such that $\type_i$ is its value for item $i$, and is drawn independently from $D_i$.
There are various settings with different definitions of the bidder's value for subsets of items.
The most-studied ones are the \emph{unit-demand} bidder, whose value for a subset of items is the maximum value for a single item in the subset, and the \emph{additive} bidder, whose value is the sum of item values it gets.
An auction $A$ maps any type $\vec{\type}$ to a subset of items to be allocated to the bidder and its payment.
The bidder's utility is equal to its value for the allocated subset minus the payment.
The goal is to pick a DSIC auction to maximize the expected revenue.
Readers are referred to \citet{CaiDW/FOCS/2012} for an LP-based characterization of the optimal auction.

Similar to the single-parameter setting, define the hypothesis class $\hypoclass$ by having a hypothesis $\hypo_A$ for every DSIC auction $A$, such that $\hypo_A(\vec{\type})$ equals the revenue of auction $A$ on type $\vec{\type}$.

\paragraph{Prophet Inequality.}
Consider $n$ rewards which arrive one at a time; 
each reward $t_i \in [0, 1]$ is drawn independently from $D_i$.
On observing each reward, the algorithm must immediately decide whether to take it or not; 
it can take at most one reward.
The goal is to maximize the expected reward.
Readers are referred to \citet{Samuel-Cahn/AnnProbab/1984} for an algorithm that gets at least a half of the expected max reward, and \citet{CorreaFHOV/EC/2017} for an improved algorithm in the case of i.i.d.\ rewards that gets a $0.745$ fraction of the expected max reward.
Some readers may know it as the optimal stopping problem, while it is often known as prophet inequality in theoretical computer science.

To put it in our framework, define the hypothesis class $\hypoclass$ by having a hypothesis $\hypo_A$ for every algorithm $A$, such that $\hypo_A(\vec{\type})$ equals what the algorithm gets when the reward sequence is $\vec{\type}$.

\paragraph{Pandora's Problem.}
Consider $n$ boxes; 
each box $i$ has a reward $\type_i \in [0, 1]$ drawn from $D_i$ and a fixed cost $c_i \in [0, 1]$ for opening it.
In each round, the algorithm decides if to take the best observed reward, or to open a new box.
The goal is to maximize the reward it gets minus the total cost.

To interpret it in the our framework, define the hypothesis class $\hypoclass$ by having a hypothesis $\hypo_A$ for every algorithm $A$. 
If we let $\hypo_A(\vec{\type})$ equals what the algorithm gets minus the cost when the reward sequence is $\vec{\type}$, its range would be $[-n, 1]$ instead of $[0, 1]$.
In the main text, we use a simple normalization, which let $\hypo_A$ be the above value plus $n$ and scaled by $\frac{1}{n+1}$.
Appendix~\ref{app:pandora-improved} presents a more specialized method that gives the tight sample complexity bound. 

\subsection{Metrics for Probability Distributions}

Consider two distributions $P, Q$ over a sample domain $\typespace$.
For concreteness, think of $\typespace$ as a cube in the Euclidean space, e.g., $[0, 1]$ or $[0, 1]^n$.

\paragraph{Total Variation Distance.}
The \emph{total variation distance} is a half of the $L_1$ distance:
\begin{equation}
    \label{eqn:total-variation}
    \TV \big( P, Q \big) = \frac{1}{2} \| P - Q \|_1
    ~.
\end{equation}

The following useful fact about total variation distance follows by its definition.

\begin{lemma}
    \label{lem:total-variation}
    For any distributions $P, Q$ over a sample domain $\typespace$, and any function $h : \typespace \mapsto [0, 1]$:
    \[
        \big| h \big( P \big) - h \big( Q \big) \big| \le \TV \big( P, Q \big)
        ~.
    \]
\end{lemma}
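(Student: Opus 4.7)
The plan is to prove this by the variational characterization of total variation distance, taking care to avoid the off-by-a-factor-of-two pitfall that a naive bound would incur. First I would unfold the definition, writing
\[
    h(P) - h(Q) = \int_\typespace h(\type)\,\big(P(\type) - Q(\type)\big)\,d\type,
\]
so the question becomes how to bound this integral in terms of $\|P - Q\|_1$.

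A direct application of Hölder's inequality with $\|h\|_\infty \le 1$ would give $|h(P) - h(Q)| \le \|P - Q\|_1 = 2\,\TV(P, Q)$, which is a factor of two too weak. The key trick is that $\int(P - Q) = 0$, so for \emph{any} constant $c$ we may replace $h$ by $h - c$ without changing the integral:
\[
    h(P) - h(Q) = \int_\typespace \big(h(\type) - c\big)\,\big(P(\type) - Q(\type)\big)\,d\type.
\]
Choosing $c = \tfrac{1}{2}$ exploits the full hypothesis $h : \typespace \mapsto [0,1]$, since it forces $|h(\type) - \tfrac{1}{2}| \le \tfrac{1}{2}$ pointwise. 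Taking absolute values and applying Hölder then yields
\[
    \big|h(P) - h(Q)\big| \;\le\; \tfrac{1}{2}\,\|P - Q\|_1 \;=\; \TV(P, Q),
\]
which is exactly the claim, by the definition in equation~\eqref{eqn:total-variation}.

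An equivalent route, if one prefers probabilistic language, is via the maximal coupling: there exists a joint distribution $(X, Y)$ with $X \sim P$, $Y \sim Q$, and $\Pr[X \neq Y] = \TV(P, Q)$. Then
\[
    \big|h(P) - h(Q)\big| = \big|\E[h(X) - h(Y)]\big| \le \E\big[|h(X) - h(Y)| \cdot \mathbf{1}[X \neq Y]\big] \le \Pr[X \neq Y],
\]
where the last step uses $|h(X) - h(Y)| \le 1$ because $h$ takes values in $[0,1]$. There is no real obstacle in either approach; the whole content of the lemma is the shift by $\tfrac{1}{2}$ (or equivalently, the observation in the coupling proof that $|h(X) - h(Y)|$ is only nontrivial when $X \neq Y$), which converts the range $[0,1]$ into an effective radius of $\tfrac{1}{2}$ and saves the missing factor of two.
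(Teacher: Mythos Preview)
Your proposal is correct. The paper itself does not give an explicit proof, stating only that the lemma ``follows by its definition'' of total variation distance; your shift-by-$\tfrac{1}{2}$ argument (and the equivalent coupling version) is exactly the standard justification the paper is implicitly invoking.
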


Recall that we are interested in multi-dimensional product distributions.
It is hard to directly measure the total variation distance among such distributions.
The standard method is to instead consider either the Kullback-Leibler divergence or the Hellinger distance; 
they are both additive in that we can account for the distance in each coordinate separately, and both can be related to the total variation distance.
This paper uses the latter because it has better properties.

\paragraph{Hellinger Distance.}
The \emph{Hellinger distance} between $P$ and $Q$, denoted as $\Hellinger \big( P, Q \big)$, is given by:
\[
    \Hellinger^2 \big( P, Q \big) = \frac{1}{2} \int_{\typespace} \big( \sqrt{dP} - \sqrt{dQ} \big)^2
    ~.
\]

More formally, for any measure $\lambda$ over $\typespace$ so that both $P$ and $Q$ are absolutely continuous w.r.t.\ $\lambda$, let $\frac{dP}{d\lambda}$ and $\frac{dQ}{d\lambda}$ be the Radon-Nikodym derivatives. 
We have:
\[
    \Hellinger^2 \big( P, Q \big) = \frac{1}{2} \int_{\typespace} \bigg( \sqrt{\frac{dP}{d\lambda}} - \sqrt{\frac{dQ}{d\lambda}} \bigg)^2 d\lambda
    ~.
\]

For example, if $P$ and $Q$ are continuous over $[0, 1]$ with density functions $p$ and $q$, or if $P$ and $Q$ are distributions over a discrete set $\typespace$ with probability mass functions $p$ and $q$, we have:
\[
    \Hellinger^2 \big( P, Q \big) = \frac{1}{2} \int_0^1 \big( \sqrt{p(\type)} - \sqrt{q(\type)} \big)^2 d\type
    \quad \textrm{or} \quad
    \Hellinger^2 \big( P, Q \big) = \frac{1}{2} \sum_{\type \in \typespace} \big( \sqrt{p(\type)} - \sqrt{q(\type)} \big)^2
    ~.
\]

The next two lemmas relate the Hellinger distance with the total variation distance, and formalize its additivity.
Readers are referred to \citet{GibbsS/ISR/2002} for details of these properties and a comprehensive discussion on different metrics for probability distributions.

\begin{lemma} 
    \label{lem:hellinger-and-total-variation}
    Suppose $P$ and $Q$ are distributions over a sample domain $\typespace$.
    Then, we have:
    \[
        \Hellinger^2 \big( P, Q \big) \le \TV \big( P, Q \big) \le \sqrt{2} \Hellinger \big( P, Q \big)
        ~.
    \]
\end{lemma}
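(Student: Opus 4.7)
The plan is to reduce both inequalities to the elementary pointwise identity $|p(\type) - q(\type)| = |\sqrt{p(\type)} - \sqrt{q(\type)}| \cdot (\sqrt{p(\type)} + \sqrt{q(\type)})$, after first rewriting $\Hellinger^2$ in the more convenient Bhattacharyya form $\Hellinger^2(P,Q) = 1 - \int \sqrt{p\,q}$ (expanding the square and using $\int p = \int q = 1$). Throughout I will work with densities $p, q$ against a common dominating measure $\lambda$ whose existence is guaranteed (e.g., $\lambda = \tfrac12(P+Q)$); this reduces the abstract statement to an integral manipulation.

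For the lower bound $\Hellinger^2(P,Q) \le \TV(P,Q)$, I would first write
\[
    \Hellinger^2(P,Q) = \tfrac{1}{2} \int (\sqrt{p}-\sqrt{q})^2 = \tfrac{1}{2} \int |\sqrt{p}-\sqrt{q}| \cdot |\sqrt{p}-\sqrt{q}|,
\]
and then use the pointwise bound $|\sqrt{p} - \sqrt{q}| \le \sqrt{p} + \sqrt{q}$, which is immediate since both square roots are nonnegative. This turns the integrand into $|\sqrt{p}-\sqrt{q}|(\sqrt{p}+\sqrt{q}) = |p - q|$, and therefore $\Hellinger^2(P,Q) \le \tfrac12 \int |p-q| = \TV(P,Q)$.

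For the upper bound $\TV(P,Q) \le \sqrt{2}\,\Hellinger(P,Q)$, the key step is Cauchy--Schwarz applied to the same factorization:
\[
    \TV(P,Q)^2 = \Bigl(\tfrac{1}{2}\int |\sqrt{p}-\sqrt{q}|\,(\sqrt{p}+\sqrt{q})\Bigr)^2 \le \tfrac{1}{4} \int (\sqrt{p}-\sqrt{q})^2 \cdot \int (\sqrt{p}+\sqrt{q})^2.
\]
The first integral is $2\Hellinger^2(P,Q)$. For the second, expand to get $\int (p + q + 2\sqrt{p\,q}) = 2 + 2\int\sqrt{p\,q} = 2 + 2(1 - \Hellinger^2(P,Q)) = 4 - 2\Hellinger^2(P,Q)$. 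Combining, $\TV(P,Q)^2 \le \Hellinger^2(P,Q)\,(2 - \Hellinger^2(P,Q)) \le 2\Hellinger^2(P,Q)$, and taking square roots finishes the claim.

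I do not anticipate a genuine obstacle: both inequalities are classical, and the only mild subtleties are (i) justifying the reduction to a common dominating measure so the integrals are well-defined, and (ii) being careful that the Cauchy--Schwarz application uses the nonnegative factors $|\sqrt{p}-\sqrt{q}|$ and $\sqrt{p}+\sqrt{q}$ so that no absolute values are lost on the right-hand side. Everything else is algebraic expansion and the identity $\int\sqrt{p\,q} = 1 - \Hellinger^2(P,Q)$.
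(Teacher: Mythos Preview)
Your proof is correct and follows the standard textbook argument. The paper does not actually prove this lemma; it merely states it and refers the reader to \citet{GibbsS/ISR/2002} for details, so there is nothing to compare against beyond noting that your argument is precisely the classical one found in such references.
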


\begin{lemma}
    \label{lem:hellinger-decompose}
    Suppose $\vec{P}$ and $\vec{Q}$ are product distributions over $\vec{\typespace} = \typespace_1 \times \typespace_2 \times \dots \times \typespace_n$.
    Then:
    \[
        1 - \Hellinger^2 \big( \vec{P}, \vec{Q} \big) = \prod_{i = 1}^n \big( 1 - \Hellinger^2 \big( P_i, Q_i \big) \big)
        ~.
    \]
\end{lemma}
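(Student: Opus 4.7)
The plan is to rewrite the Hellinger distance in terms of the Bhattacharyya-style affinity $\int \sqrt{dP \, dQ}$, since this quantity is multiplicative over product measures in a way that $H^2$ itself is not. Concretely, I would first observe that expanding the square in the definition gives
\[
    \Hellinger^2 \big( P, Q \big) = \frac{1}{2} \int_{\typespace} \Big( \tfrac{dP}{d\lambda} + \tfrac{dQ}{d\lambda} - 2 \sqrt{\tfrac{dP}{d\lambda}\tfrac{dQ}{d\lambda}} \Big) d\lambda = 1 - \int_{\typespace} \sqrt{\tfrac{dP}{d\lambda}\tfrac{dQ}{d\lambda}} \, d\lambda,
\]
so that $1 - \Hellinger^2(P, Q)$ equals the affinity $\int \sqrt{dP \, dQ}$, which is what I want to factorize.

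Next, I would specialize to the product setting. Take $\lambda = \lambda_1 \times \cdots \times \lambda_n$ where each $\lambda_i$ dominates both $P_i$ and $Q_i$; then $\vec\lambda$ dominates both $\vec P$ and $\vec Q$, and the Radon--Nikodym derivatives factor as $\frac{d\vec P}{d\vec\lambda}(\vec \type) = \prod_i \frac{dP_i}{d\lambda_i}(\type_i)$, and similarly for $\vec Q$. Therefore the integrand in the affinity factors across coordinates:
\[
    \sqrt{\tfrac{d\vec P}{d\vec\lambda}(\vec\type) \cdot \tfrac{d\vec Q}{d\vec\lambda}(\vec\type)} = \prod_{i=1}^n \sqrt{\tfrac{dP_i}{d\lambda_i}(\type_i)\cdot\tfrac{dQ_i}{d\lambda_i}(\type_i)}.
\]

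Then, by Tonelli's theorem (the integrand is nonnegative), the integral over $\vec\typespace$ splits into a product of one-dimensional integrals, each of which is exactly $1 - \Hellinger^2(P_i, Q_i)$ by the first step. Chaining these equalities gives
\[
    1 - \Hellinger^2 \big( \vec{P}, \vec{Q} \big) = \int_{\vec\typespace} \sqrt{d\vec P \, d\vec Q} = \prod_{i=1}^n \int_{\typespace_i} \sqrt{dP_i \, dQ_i} = \prod_{i=1}^n \big( 1 - \Hellinger^2 \big( P_i, Q_i \big) \big),
\]
which is the claimed identity.

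There is no real mathematical obstacle here; the whole content is the algebraic identity $1 - H^2 = \int \sqrt{dP\,dQ}$ together with Tonelli. The only place to be careful is the measure-theoretic bookkeeping: choosing a common dominating measure on $\vec\typespace$ of product form so that the Radon--Nikodym derivative of $\vec P$ (resp.\ $\vec Q$) factors coordinatewise. Once that is set up, the rest is a direct computation.
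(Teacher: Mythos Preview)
Your proof is correct and is in fact the standard argument for this identity. The paper does not give its own proof of this lemma; it simply cites \citet{GibbsS/ISR/2002} for this and the preceding lemma as known properties of the Hellinger distance. Your write-up supplies exactly the expected details: the rewriting $1 - \Hellinger^2(P,Q) = \int \sqrt{dP\,dQ}$, the factorization of Radon--Nikodym derivatives over a product dominating measure, and Tonelli to split the integral. Nothing is missing.
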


\subsection{Vector Concentration Inequality}

%
	%
	%

We will use the following Bernstein-style concentration inequality that bounds the $\ell_2$-norm of the sum of independent random vectors. 
\begin{lemma}[Equation 6.12 of \cite{ProbBanach}]\label{lem:vector-bernstein}
	Let $\vec{X_1}, \vec{X_2}, \dots, \vec{X_N}$ be i.i.d.\ random vectors in $\mathbb{R}^d$ such that $\E \big[ \Vert \vec{X_i} \Vert_2^2 \big] \le \sigma^2$, and $ \left\Vert \vec{X_i} \right\Vert_2 \le M$ for some constant $M > 0$.
	Then, for any positive $\Delta$:
	\[
		\textstyle
		\Pr \left[ ~ \left| \big\Vert \sum_{i=1}^N \vec{X_i} \big\Vert_2 - \E \big[ \big\Vert \sum_{i=1}^N \vec{X_i} \big\Vert_2 \big] \right| > \Delta ~ \right] \le 2 \exp \Big( -\frac{\Delta^2}{ 2N \sigma^2} \big( 2 - \exp ( \frac{2M\Delta}{N\sigma^2} ) \big) \Big)
		~.
	\]
\end{lemma}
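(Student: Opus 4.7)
The plan is to derive this concentration inequality as an application of the exponential-moment machinery for suprema of empirical processes, exploiting the dual representation $Z \defeq \|\sum_{i=1}^N \vec{X_i}\|_2 = \sup_{\vec{v} \in B} \sum_{i=1}^N \langle \vec{v}, \vec{X_i} \rangle$, where $B$ is the Euclidean unit ball of $\mathbb{R}^d$. After centering, this rewrites $Z - \E Z$ as (the recentering of) a supremum of a bounded empirical process indexed by a countable dense subset of $B$, which is exactly the setting in which Talagrand/Bousquet-type Bernstein bounds operate, and which is the cleanest path to a \emph{dimension-free} tail bound such as the one stated.

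First I would check the two numerical hypotheses that feed such bounds. By Cauchy--Schwarz, each linear functional $\langle \vec{v}, \vec{X_i}\rangle$ is uniformly bounded by $M$ over $\vec{v} \in B$, and the weak variance satisfies $\sup_{\vec{v} \in B} \sum_{i=1}^N \E[\langle \vec{v}, \vec{X_i}\rangle^2] \le N \sigma^2$; these are exactly the $M$ and $N\sigma^2$ appearing in the claim. I would also record the self-bounded structure used by the entropy method: replacing a single $\vec{X_i}$ by an i.i.d.\ copy $\vec{X_i}'$ changes $Z$ by at most $\|\vec{X_i} - \vec{X_i}'\|_2$, so Efron--Stein already yields $\mathrm{Var}(Z) \le 2N\sigma^2$, matching the leading quadratic-in-$\Delta$ term in the exponent. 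Then I would run a Herbst argument: bound $\psi(\lambda) = \log \E[e^{\lambda(Z - \E Z)}]$ via a modified log-Sobolev inequality of the type in Boucheron--Lugosi--Massart, Chapter~6, and apply the Chernoff step $\Pr[Z - \E Z > \Delta] \le \exp(-\lambda \Delta + \psi(\lambda))$. The peculiar factor $2 - \exp(2M\Delta/(N\sigma^2))$ in the stated bound, rather than the cleaner optimized shape $\exp(-\Delta^2/(2N\sigma^2 + cM\Delta))$, is a hint that the target expression comes out of the exponential-moment bound at a specific (non-optimized) choice of $\lambda$ tied to $\Delta$.

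The main obstacle I expect is bookkeeping the constants precisely enough to recover the exact form of the exponent: any independent derivation would have to go through either Talagrand's convex-distance concentration or the modern entropy method, both of which are fairly heavy hammers relative to the simple statement being invoked. Because the lemma is quoted verbatim from Ledoux--Talagrand (Equation~6.12), the cleanest route is really to verify only that the abstract hypotheses there --- i.i.d.\ vector summands with bounded norm $M$ and controlled second moment $\sigma^2$ --- are met by the data above, and then import the inequality as a black box, leaving the nontrivial concentration machinery to the reference.
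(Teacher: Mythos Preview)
Your proposal is correct and aligns with the paper's treatment: the paper states this lemma purely as a citation (Equation~6.12 of Ledoux--Talagrand) with no proof whatsoever, and your final paragraph reaches exactly that conclusion --- verify the hypotheses and import the inequality as a black box. Your additional sketch of how the concentration machinery (dual representation, Efron--Stein, Herbst argument) would produce such a bound is accurate and goes beyond what the paper does, but it is commentary rather than a divergence in approach.
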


\section{Problems with Finite (Discretized) Domain}
\label{sec:finite}

\subsection{General Sample Complexity Bounds}

This section offers a theory of generalization for arbitrary hypotheses with finite domain. 
    
\begin{theorem} 
    \label{thm:error_finite}
    For any distribution $\mathbf{D}$ on $\vec{\typespace}$ such that $|\typespace_i| \le k$ for all $1 \le i \le n$, suppose for some sufficiently large constant $C > 0$, the number of samples is at least:
    \[
        C \cdot \frac{nk}{\epsilon^2} \log \frac{1}{\delta}
    \]
    Then, with probability at least $1-\delta$, for any $\hypo : \vec{\typespace} \to [0,1]$, we have:
    \[
        \big| \hypo(\vec{D}) - \hypo(\vec{E}) \big| \le \epsilon
        ~.
    \]
    In particular, the PERM is an $\epsilon$-additive approximation.
\end{theorem}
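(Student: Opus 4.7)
The plan is to reduce the uniform statement $|h(\vec{D}) - h(\vec{E})| \le \epsilon$ (for arbitrary bounded $h$) to a single distributional bound. By Lemma~\ref{lem:total-variation}, it suffices to show $\TV(\vec{D}, \vec{E}) \le \epsilon$, and Lemma~\ref{lem:hellinger-and-total-variation} lets me weaken this further to $\Hellinger(\vec{D}, \vec{E}) \le \epsilon/\sqrt{2}$. Because both distributions are products, Lemma~\ref{lem:hellinger-decompose} combined with the elementary inequality $1 - \prod_i(1 - x_i) \le \sum_i x_i$ (for $x_i \in [0,1]$) converts this into a coordinate-wise target: it is enough to prove that, with probability at least $1-\delta$,
\[
    \sum_{i=1}^n \hel(D_i, E_i) \;\le\; \epsilon^2/2,
\]
where independence of the summands is granted to us by the product structure of $\vec{D}$.

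The next step is to control each $\hel(D_i, E_i)$, where $D_i$ has support of size at most $k$ and $E_i$ is the empirical distribution of $N$ i.i.d.\ samples from $D_i$. Writing $\hel(D_i, E_i) = 1 - \sum_{t \in T_i} \sqrt{p_i(t)}\,\E[\sqrt{q_i(t)}]$ and applying a second-order expansion of $x \mapsto \sqrt{x}$ around $p_i(t)$ to the binomial count $N q_i(t)$ would give $\E[\hel(D_i, E_i)] = O(k/N)$ and hence $\E\big[\sum_i \hel(D_i, E_i)\big] = O(nk/N)$, which already matches the target rate at the level of expectations. For the high-probability concentration I would invoke the vector Bernstein inequality (Lemma~\ref{lem:vector-bernstein}). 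Applied to the per-sample vector $\vec{X}^{(s)} \in \mathbb{R}^{\sum_i |T_i|}$ with entries $X^{(s)}_{(i,t)} = \mathbbm{1}[\vec{t}^{(s)}_i = t] - p_i(t)$, the elementary bounds $\|\vec{X}^{(s)}\|_2^2 \le 2n$ a.s.\ and $\E\|\vec{X}^{(s)}\|_2^2 \le n$ yield $\sum_{i,t}(q_i(t) - p_i(t))^2 = O(n\log(1/\delta)/N)$ with probability $1-\delta$. This $\ell_2$ bound on raw probability differences is then converted to a Hellinger bound via $(\sqrt{p} - \sqrt{q})^2 \le (p-q)^2/\max(p,q)$ on atoms where $p_i(t)$ is non-negligible, together with a separate truncation argument on atoms with $p_i(t)$ below a $1/N$-scale threshold (for which $(\sqrt{p}-\sqrt{q})^2 \le p + q$ is used). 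Choosing $N = \Theta(nk\log(1/\delta)/\epsilon^2)$ then forces $\sum_i \hel(D_i, E_i) \le \epsilon^2/2$.

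The final PERM statement follows from a standard two-sided sandwich once the uniform bound is in hand: letting $h^\star \in \arg\max_{h \in \hypoclass} h(\vec{D})$ and $h_{\vec{E}}$ be the PERM, on the high-probability event one has $h_{\vec{E}}(\vec{D}) \ge h_{\vec{E}}(\vec{E}) - \epsilon/2 \ge h^\star(\vec{E}) - \epsilon/2 \ge h^\star(\vec{D}) - \epsilon$, after running the argument above with $\epsilon/2$ in place of $\epsilon$. The step I expect to be the main obstacle is the passage from the vector Bernstein $\ell_2$-bound on $(p_i(t) - q_i(t))$ to a tight Hellinger bound: a naive Cauchy--Schwarz conversion loses a factor of $\sqrt{nk}$ and yields only $N = \Omega(n^2 k/\epsilon^4)$, so recovering the advertised $nk/\epsilon^2$ rate requires either executing the low-probability/high-probability threshold split carefully, or engineering the vector Bernstein on a Hellinger-adapted per-sample vector whose squared $\ell_2$-norm tracks $\sum_i \hel(D_i, E_i)$ up to lower-order terms.
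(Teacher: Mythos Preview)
Your proposal is correct and follows essentially the same route as the paper: reduce to $\sum_i \hel(D_i,E_i)$ via Lemmas~\ref{lem:total-variation}, \ref{lem:hellinger-and-total-variation}, and \ref{lem:hellinger-decompose}, then control this sum with the vector Bernstein inequality. The paper resolves exactly the obstacle you flag by taking your second option---it applies Lemma~\ref{lem:vector-bernstein} directly to the Hellinger-adapted vector $X_{jit} = (\mathbf{1}[s_{ji}=t]-f_{D_i}(t))/\sqrt{\max\{f_{D_i}(t),\tfrac{1}{N}\log\tfrac{1}{\delta}\}}$, after the smoothed bound $(\sqrt{p}-\sqrt{q})^2 \le (p-q)^2/\max\{p,\tfrac{1}{N}\log\tfrac{1}{\delta}\} + \tfrac{1}{N}\log\tfrac{1}{\delta}$ (Lemma~\ref{lem:reduce-to-chi-square}) has absorbed the small-atom contribution into an additive $\tfrac{nk}{N}\log\tfrac{1}{\delta}$ term.
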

    

The proof is simple in hindsight.
The key is bounding the convergence of the product empirical distribution to the true distribution in term of the Hellinger distance, as in the next lemma.

\begin{lemma}\label{lem:hellinger_distance}
    With probability at least $1 - \delta$:
    %
    \[
        \hel(\mathbf{D},\mathbf{E}) = O \bigg( \frac{nk}{N} \log \frac{1}{\delta} \bigg) 
        ~.
    \]
\end{lemma}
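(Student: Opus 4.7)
The plan is to invoke Lemma~\ref{lem:hellinger-decompose} and the elementary inequality $1 - \prod_{i=1}^n (1 - x_i) \le \sum_{i=1}^n x_i$ (valid for $x_i \in [0, 1]$), reducing the task to proving $\sum_{i=1}^n H^2(D_i, E_i) = O(nk \log(1/\delta)/N)$ with probability at least $1-\delta$. Since $\vec{D}$ is a product distribution, the $n$ coordinates of the samples are independent, so the random variables $\{H^2(D_i, E_i)\}_{i=1}^n$ are themselves independent. For each coordinate $i$, writing $p_i$ and $\hat{p}_i$ for the PMFs of $D_i$ and $E_i$, I would use the identity $(\sqrt{a}-\sqrt{b})^2 (\sqrt{a}+\sqrt{b})^2 = (a-b)^2$ together with the lower bound $(\sqrt{p_i(t)}+\sqrt{\hat{p}_i(t)})^2 \ge p_i(t)$ to reduce the per-coordinate Hellinger squared to a chi-squared-type quantity: $2 H^2(D_i,E_i) \le \sum_{t : p_i(t) > 0} (\hat{p}_i(t) - p_i(t))^2 / p_i(t)$.

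To control $\sum_{i, t} (\hat{p}_i(t) - p_i(t))^2 / p_i(t)$ via Lemma~\ref{lem:vector-bernstein}, the natural choice is the random vector $\vec{Z}_j$ indexed by $(i, t)$ with entries $(Z_j)_{i, t} = (\mathbf{1}[t_i^{(j)} = t] - p_i(t))/\sqrt{p_i(t)}$, where $t_i^{(j)}$ is the $i$-th coordinate of the $j$-th sample, so that $\big\|\sum_{j=1}^N \vec{Z}_j\big\|_2^2 = N^2 \sum_{i, t} (\hat{p}_i(t) - p_i(t))^2 / p_i(t)$. A direct computation gives $\E[\|\vec{Z}_j\|_2^2] = \sum_{i, t}(1 - p_i(t)) \le nk$, so the parameter $\sigma^2$ in Lemma~\ref{lem:vector-bernstein} can be taken as $nk$, and independence plus zero mean yields $\E\big[\big\|\sum_j \vec{Z}_j\big\|_2\big] \le \sqrt{Nnk}$, exactly matching the target deviation scale $\Delta = \Theta(\sqrt{Nnk \log(1/\delta)})$.

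The main obstacle is that $\|\vec{Z}_j\|_2$ is not uniformly bounded: coordinates with tiny $p_i(t)$ blow up the parameter $M$ in Lemma~\ref{lem:vector-bernstein}, and the condition $2M\Delta/(N\sigma^2) = O(1)$ required for a sub-Gaussian tail can fail. The fix is truncation: restrict $\vec{Z}_j$ to coordinates with $p_i(t) \ge \tau$ for $\tau = \Theta(\log(1/\delta)/(Nk))$. Then $\|\vec{Z}_j\|_2^2 = O(n/\tau)$, and the threshold $\tau$ is calibrated precisely so that $2M\Delta/(N\sigma^2) = O(1)$; Lemma~\ref{lem:vector-bernstein} then yields $\sum_{i} \sum_{t : p_i(t) \ge \tau} (\sqrt{\hat{p}_i(t)} - \sqrt{p_i(t)})^2 = O(nk\log(1/\delta)/N)$ with probability $1-\delta/2$. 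The remaining tiny-probability coordinates $T_i^- := \{t : 0 < p_i(t) < \tau\}$ need a separate estimate: using the crude bound $(\sqrt{p_i(t)} - \sqrt{\hat{p}_i(t)})^2 \le p_i(t) + \hat{p}_i(t)$, their aggregate contribution is at most $nk\tau + \sum_i \hat{p}_i(T_i^-)$. The deterministic part is $O(n\log(1/\delta)/N)$ by the choice of $\tau$, and the random part is controlled by a scalar Bernstein inequality applied to the i.i.d.\ quantities $\sum_i \mathbf{1}[t_i^{(j)} \in T_i^-] \in [0, n]$, which have mean and variance both $O(nk\tau)$. Combining the two high-probability events completes the proof.
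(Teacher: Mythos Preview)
Your proposal is correct and follows essentially the same route as the paper: decompose via Lemma~\ref{lem:hellinger-decompose}, bound the per-coordinate squared Hellinger by a $\chi^2$-type quantity, encode it as the squared $\ell_2$-norm of a sum of i.i.d.\ zero-mean vectors with $\E[\|\vec{Z}_j\|_2^2]\le nk$, and apply Lemma~\ref{lem:vector-bernstein}. The one point of divergence is how you tame the parameter $M$. You truncate the support at $\tau=\Theta(\log(1/\delta)/(Nk))$ and handle the small-probability tail separately with a scalar Bernstein bound; the paper instead keeps all coordinates but replaces the denominator $p_i(t)$ by the smoothed $\max\{p_i(t),\frac{1}{N}\log\frac{1}{\delta}\}$ via Lemma~\ref{lem:reduce-to-chi-square}, so that a single application of Lemma~\ref{lem:vector-bernstein} suffices (the additive error from smoothing is deterministically $\le \frac{nk}{N}\log\frac{1}{\delta}$, no second concentration needed). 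Both devices yield the same $M=O\big(\sqrt{Nnk/\log(1/\delta)}\big)$ and the same final bound; the paper's smoothing is a bit cleaner, while your split is equally valid. One cosmetic remark: your observation that the $H^2(D_i,E_i)$ are independent across $i$ is true but unused, since you (like the paper) control the joint vector directly rather than coordinate-wise.
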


%
%

%
%

We proceed with the proof of Theorem~\ref{thm:error_finite} assuming the correctness of the lemma, whose proof is deferred to the end of the section.

\begin{proof}[Proof of Theorem~\ref{thm:error_finite}]
    By Lemma~\ref{lem:hellinger_distance} and the stated number of samples, with probability at least $1-\delta$:
    \[
        H^2(\mathbf{D}, \mathbf{E}) \le \frac{\epsilon^2}{2}
        ~.
    \]
    
    Further by Lemma~\ref{lem:hellinger-and-total-variation}, we have:
    \[
        \delta(\vec{D}, \vec{E}) \le \sqrt{2} \cdot \Hellinger (\vec{D}, \vec{E}) \le \epsilon
        ~.
    \]
    
    Then, the theorem follows by Lemma~\ref{lem:total-variation}.
\end{proof}


We complement Theorem~\ref{thm:error_finite} with a matching lower bound up to a logarithmic factor.
The proof is deferred to Appendix~\ref{app:finite-lb}.

\begin{theorem}
    \label{thm:finite_domain_lower_bound}
    There is a problem $\hypoclass$ on a finite domain $\vec{\typespace}$ with $|\typespace_i|=k$ for all $1 \le i \le n$, such that no algorithm gives an expected $\epsilon$-additive approximation if the number of samples is less than:
    \[
        c \cdot \frac{nk}{\epsilon^2} 
        ~,
    \]
    for a sufficiently small constant $c > 0$.
\end{theorem}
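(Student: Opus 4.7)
The plan is to prove the lower bound via an Assouad-style information-theoretic argument, reducing the hypothesis-optimization problem to a parameter-estimation problem over a carefully designed family of product distributions on $\vec{\typespace} = [k]^n$.

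First, I would construct a parameterized family of product distributions. Partition each $\typespace_i = [k]$ into $k/2$ disjoint pairs (assume $k$ even, otherwise waste one element). For a hidden parameter $\sigma = (\sigma_{i,j}) \in \{-1, +1\}^{nk/2}$ and a perturbation parameter $\alpha \in (0, 1/2)$ to be chosen later, define $\vec{D}_\sigma = \prod_i D_{i,\sigma_i}$, where within each pair $j$ of coordinate $i$ the two elements receive probabilities $(1 + \alpha \sigma_{i,j})/k$ and $(1 - \alpha \sigma_{i,j})/k$. This yields $2^{nk/2}$ product distributions, each a small perturbation of the uniform distribution on $[k]^n$.

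Second, I would design a hypothesis class $\hypoclass = \{\hypo_\sigma : \sigma \in \{-1,+1\}^{nk/2}\}$, each $\hypo_\sigma$ valued in $[0,1]$, so that $\hypo_\sigma$ is optimal under $\vec{D}_\sigma$ and the suboptimality of $\hypo_{\hat\sigma}$ on $\vec{D}_\sigma$ is bounded below by a suitable distance $\rho(\sigma, \hat\sigma)$. The most natural choice is an averaged per-coordinate indicator $\hypo_\sigma(\vec{\type}) = \frac{1}{n}\sum_i \mathbf{1}[\type_i \text{ is the } \sigma\text{-preferred element within its pair}]$, which makes the suboptimality proportional to the Hamming distance $H(\sigma, \hat\sigma)$.

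Third, I would invoke Assouad's lemma. The per-sample KL divergence between two neighboring $\vec{D}_\sigma$ and $\vec{D}_{\sigma'}$ (Hamming distance one) is $\Theta(\alpha^2/k)$ by direct calculation, so by Pinsker the $N$-fold TV distance is bounded by $O(\sqrt{N\alpha^2/k})$. Taking $N \le c \cdot k/\alpha^2$ for sufficiently small $c$ makes every neighbor TV at most $1/2$, and Assouad's lemma then yields $\E[H(\sigma, \hat\sigma)] = \Omega(nk)$ for any estimator. The final step is to calibrate $\alpha$ so that this Hamming error translates to $\Omega(\epsilon)$ expected suboptimality, and read off the lower bound on $N$.

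The main obstacle will be the calibration step: with the naive averaged-indicator hypothesis above, the $[0,1]$-range normalization by $\frac{1}{n}$ causes the per-coordinate signals to average out, giving only $\Omega(k/\epsilon^2)$ rather than the stated $\Omega(nk/\epsilon^2)$. To recover the factor of $n$, the hypothesis class must be engineered so that per-coordinate errors accumulate without being normalized away---for instance by tailoring $\hypo_\sigma$ so that its reward tracks a Hellinger-like (sum-of-squares) aggregation of per-coordinate mismatches, mirroring the tensorization identity in Lemma~\ref{lem:hellinger-decompose} that drives the matching upper bound. This forces each coordinate to be learned to accuracy $\epsilon/\sqrt{n}$ rather than $\epsilon$, producing the extra factor of $n$. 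Carrying out this construction carefully, and checking that the Assouad computation goes through with the right constants, is the substantive step of the proof.
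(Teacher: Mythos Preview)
Your Assouad-style framework is correct and is essentially the same scaffolding the paper uses: a hypercube of product distributions indexed by $\sigma \in \{\pm 1\}^{\Theta(nk)}$, a hypothesis class matched to the parameters, and a reduction from approximation error to Hamming error. You also correctly diagnose the obstacle: with the plain averaged-indicator hypothesis and uniform-on-$[k]$ marginals, the argument only yields $\Omega(k/\epsilon^2)$.

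The gap is in your proposed fix. The idea of engineering $\hypo_\sigma$ so that suboptimality aggregates ``like Hellinger'' (a square-root relationship in the Hamming distance) is left entirely unspecified, and it is not clear any such $[0,1]$-valued hypothesis exists: for product distributions the expectation of $\hypo_\sigma - \hypo_{\hat\sigma}$ decomposes linearly over coordinates in any natural construction, so recovering a $\sqrt{\cdot}$ dependence would require a genuinely new trick you have not supplied. Concretely, if you set $\alpha = \epsilon/\sqrt{n}$ to ``force per-coordinate accuracy $\epsilon/\sqrt{n}$,'' the per-bit suboptimality also shrinks by $1/\sqrt{n}$, so the total error drops to $\Theta(\epsilon/\sqrt{n})$ rather than $\Theta(\epsilon)$---you gain nothing.

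The paper resolves the obstacle by modifying the \emph{distribution family} rather than the hypothesis class. Each marginal $D_i$ places mass $1-\tfrac{1}{n}$ on a designated null value $0$, and spreads the remaining $\tfrac{1}{n}$ mass over $2k$ signed values with the $\pm\epsilon$ perturbation. The hypothesis $\hypo^{\vec v}$ is the indicator that exactly one coordinate is nonzero and matches its sign bit. Two things happen: (i) the event ``exactly one nonzero coordinate'' has probability $\Theta(1)$, so no $\tfrac{1}{n}$ normalization is needed and per-bit suboptimality stays at $\Theta(\epsilon/(nk))$; (ii) a sample is informative about bit $(i,j)$ only when $\type_i = \pm j$, which happens with probability $\Theta(1/(nk))$, so the per-sample KL between Hamming neighbors drops to $\Theta(\epsilon^2/(nk))$ rather than your $\Theta(\epsilon^2/k)$. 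The Assouad calculation then gives $N = \Omega(nk/\epsilon^2)$ directly. This ``dilute the informative mass by $1/n$'' device is the missing idea in your proposal.
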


\subsection{Applications}
Although some problems are defined on continuous domains, most can be discretized, including all examples defined in Section~\ref{sec:examples}.
Observe that by rounding each sample to the closest discretized value, we effectively sample from the discretized distribution.
Next we discuss the applications of Theorem~\ref{thm:error_finite} on the examples;
the discretization arguments are either from previous works, or deferred to the appendix because they do not provide much insight. 
While some results will be subsumed by those in the next section, they are already comparable with the state-of-the-art in meaningful ways.
We restress that our bounds are obtained knowing effectively nothing about the problems other than that the domains can be discretized, while previous works generally rely on detailed problem structures. 
We believe the same approach can be applied to other problems not covered in this paper.

\paragraph{Single-parameter Revenue Maximization.}
%
%
%
\citet{DevanurHP/STOC/2016} showed that we may w.l.o.g.\ round values down to multiples of $\epsilon$ in single-parameter revenue maximization if the target is an $O(\epsilon)$-additive approximation. 
Hence, with $k = \frac{1}{\epsilon}$, Theorem~\ref{thm:error_finite} matches the previous results by \citet{MorgensternR/NIPS/2015}, \citet{DevanurHP/STOC/2016}, and \citet{Syrgkanis/NIPS/2017} discussed in Section~\ref{sec:intro-complexity-measures}, i.e., the best bounds before the recent work of \citet{guo2019settling}.
In fact, our bound is better in the logarithmic factors.

\begin{theorem}
    In a single-item auction with $n$ bidders whose values are bounded in $[0, 1]$, the sample complexity is at most $O \big(\frac{n}{\epsilon^3} \log \frac{1}{\delta} \big)$.
\end{theorem}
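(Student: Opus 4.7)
The plan is to combine the discretization result of \citet{DevanurHP/STOC/2016} with Theorem~\ref{thm:error_finite} in the most direct way. Specifically, I would first invoke the fact that in single-parameter revenue maximization, we may round every value down to the nearest multiple of $\epsilon$ and lose only an $O(\epsilon)$ additive factor in expected revenue between the original optimal auction and the optimal auction on the rounded distribution. This turns each coordinate domain $\typespace_i$ into a finite set of size $k = 1/\epsilon + 1$ (or $1/\epsilon$ after absorbing constants), so the discretized problem falls squarely into the setting of Section~\ref{sec:finite}.

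Next I would set up the sampling reduction. Drawing $N$ i.i.d.\ samples from $\vec{D}$ and rounding each coordinate down to the nearest multiple of $\epsilon$ yields $N$ i.i.d.\ samples from the discretized product distribution $\vec{D}^\epsilon$. I then run PERM on these rounded samples, obtaining the hypothesis $\hypo_{\vec{E}^\epsilon}$ corresponding to the revenue-optimal DSIC auction on the product empirical distribution $\vec{E}^\epsilon$. Applying Theorem~\ref{thm:error_finite} with $k = 1/\epsilon$ and approximation parameter $\epsilon$ shows that $N = O(\frac{nk}{\epsilon^2}\log\frac{1}{\delta}) = O(\frac{n}{\epsilon^3}\log\frac{1}{\delta})$ samples suffice so that, with probability $1-\delta$, the returned hypothesis is $\epsilon$-additive optimal with respect to $\vec{D}^\epsilon$.

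The final step is a triangle-inequality-style argument to transfer the guarantee from $\vec{D}^\epsilon$ back to $\vec{D}$. On the one hand, the discretization result says $\opt(\vec{D}^\epsilon) \ge \opt(\vec{D}) - O(\epsilon)$. On the other hand, any DSIC auction's revenue on $\vec{D}$ differs from its revenue on $\vec{D}^\epsilon$ by at most $O(\epsilon)$, since the two inputs differ coordinatewise by at most $\epsilon$ and revenue is Lipschitz in values for single-parameter auctions (this is implicit in the Devanur–Huang–Psomas rounding). Combining these two observations with the $\epsilon$-additive guarantee on $\vec{D}^\epsilon$ yields an $O(\epsilon)$-additive guarantee on $\vec{D}$, and rescaling $\epsilon$ by a constant gives the stated bound.

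The main obstacle, if any, is a conceptual one rather than a technical one: one must be careful that the hypothesis returned by PERM on rounded samples is genuinely a valid (DSIC) auction whose revenue on the original, unrounded distribution $\vec{D}$ is well defined and can be bounded. Since a DSIC auction on the discretized type space can be extended to the continuous domain by first rounding inputs before running the auction (and this preserves DSIC because rounding is independent of reported values), this is not a real difficulty. Otherwise the proof is a plug-in application of Theorem~\ref{thm:error_finite} with $k = 1/\epsilon$, exactly as hinted in Section~\ref{sec:intro-complexity-measures}.
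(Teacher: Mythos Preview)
Your proposal is correct and takes essentially the same approach as the paper: cite the \citet{DevanurHP/STOC/2016} discretization to multiples of $\epsilon$, then plug $k=1/\epsilon$ into Theorem~\ref{thm:error_finite}. One small imprecision: the claim that ``revenue is Lipschitz in values for single-parameter auctions'' is not true for an arbitrary DSIC auction (a reserve-price threshold causes discontinuities), but your actual argument---extend the discretized auction to $[0,1]^n$ by rounding inputs down before running it---makes the two revenues \emph{equal} rather than $O(\epsilon)$-close, so the triangle step goes through cleanly regardless.
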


\paragraph{Multi-parameter Revenue Maximization.} We consider the case for selling $n$-items to a unit-demand (resp., additive) bidder. 
It is known that rounding the item values down to multiples of $\epsilon^2$ (resp., $\epsilon^2/n$) for a unit-demand (resp. additive) bidder is w.l.o.g.\ due to a reduction from approximate DSIC auctions to DSIC auctions, which states any $\epsilon^2$-DSIC mechanism can be transformed into a DSIC mechanism with at most $\epsilon$ loss in revenue (see, e.g., \citet{BalcanBHM/JCSS/2008}, attributed to Nisan).
Therefore, Theorem~\ref{thm:error_finite} gives the following bounds that improve the best known results by \citet{focs/GonczarowskiW18} in the log factor.%

\begin{theorem}
    In a multi-item auction with $n$ items and a unit-demand bidder whose values are bounded in $[0, 1]$, the sample complexity is at most $O \big( \frac{n}{\epsilon^4} \log \frac{1}{\delta} \big)$.
\end{theorem}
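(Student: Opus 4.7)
The plan is to reduce the continuous multi-parameter auction problem to the finite-domain bound of Theorem~\ref{thm:error_finite}. First, I would discretize by rounding each item value $t_i$ down to the nearest multiple of $\epsilon^2$; the resulting marginal type space has size $k = \lceil 1/\epsilon^2 \rceil$. Since rounding commutes with sampling, $N$ i.i.d.\ samples from $\vec{D}$ become, after rounding, $N$ i.i.d.\ samples from the discretized product distribution $\vec{D}'$. I would then invoke Theorem~\ref{thm:error_finite} on $\vec{D}'$ with this $k$, which gives that $O\big(\tfrac{nk}{\epsilon^2}\log\tfrac{1}{\delta}\big) = O\big(\tfrac{n}{\epsilon^4}\log\tfrac{1}{\delta}\big)$ samples suffice for the PERM on the product empirical distribution $\vec{E}$ (over the rounded samples) to be an $\epsilon$-additive approximation to the optimal DSIC auction on $\vec{D}'$.

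Next, I would lift this guarantee back to the original continuous problem using the approximate-to-exact DSIC reduction attributed to Nisan in \citet{BalcanBBHM/JCSS/2008}: every $\epsilon^2$-DSIC auction can be turned into a DSIC auction with at most $\epsilon$ loss in expected revenue. Concretely, an optimal DSIC auction for $\vec{D}$ is $\epsilon^2$-DSIC on $\vec{D}'$ (since rounding each coordinate by $\epsilon^2$ shifts the unit-demand value of any subset by at most $\epsilon^2$), so $\mathrm{OPT}(\vec{D}') \ge \mathrm{OPT}(\vec{D}) - \epsilon$; conversely a DSIC auction for $\vec{D}'$ is $\epsilon^2$-DSIC on $\vec{D}$, hence convertible to a DSIC auction on $\vec{D}$ losing only an additional $\epsilon$. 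Composing these two discretization losses with the $\epsilon$-additive guarantee from Theorem~\ref{thm:error_finite} yields an $O(\epsilon)$-additive approximation on $\vec{D}$, and absorbing constants into $\epsilon$ delivers the claimed $O\big(\tfrac{n}{\epsilon^4}\log\tfrac{1}{\delta}\big)$ sample complexity.

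The main obstacle is justifying the choice of discretization granularity: $\epsilon^2$ is precisely what the Nisan-style reduction needs to preserve revenue up to $\epsilon$, and it is also exactly the size the unit-demand value function tolerates coordinate-wise without exceeding an $\epsilon^2$ type-space shift. Beyond this bookkeeping, Theorem~\ref{thm:error_finite} does the heavy statistical work, so the proof reduces to verifying that two applications of the reduction (to pass between $\vec{D}$ and $\vec{D}'$) plus the empirical error sum to $O(\epsilon)$.
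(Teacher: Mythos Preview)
Your proposal is correct and follows essentially the same approach as the paper: discretize each item value to multiples of $\epsilon^2$ so that $k=\Theta(1/\epsilon^2)$, apply Theorem~\ref{thm:error_finite} to obtain the $O\big(\tfrac{n}{\epsilon^4}\log\tfrac{1}{\delta}\big)$ bound on the discretized problem, and use the Nisan-style approximate-DSIC-to-DSIC reduction (as cited via \citet{BalcanBHM/JCSS/2008}) to argue that the $\epsilon^2$ discretization costs only $O(\epsilon)$ in revenue. The paper states the discretization argument more tersely (simply asserting it is ``w.l.o.g.''), but your two-direction justification is exactly what underlies that assertion.
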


\begin{theorem}
    In a multi-item auction with $n$ items and an additive bidder whose values are bounded in $[0, 1]$, the sample complexity is at most $O \big( \frac{n^2}{\epsilon^4} \log \frac{1}{\delta} \big)$.%
    \footnote{In the additive case, the optimal revenue is bounded by $[0,n]$. The stated bound is an $\epsilon$-approximation w.r.t.\ the normalized revenue divided by a factor $\frac{1}{n}$. The bound would be $\tilde{O} \big( \frac{n^4}{\epsilon^4} \big)$ for an $\epsilon$-approximation without normalization.}
\end{theorem}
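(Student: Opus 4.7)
The plan is to reduce to Theorem~\ref{thm:error_finite} by combining normalization with the discretization already discussed just before the theorem statement. Because the bidder is additive with item values in $[0,1]$, the revenue of any DSIC auction on a type profile lies in $[0,n]$. To fit the framework of Theorem~\ref{thm:error_finite}, whose hypotheses map into $[0,1]$, I work with the normalized hypothesis $\hypo_A(\vec{\type}) = \frac{1}{n}\cdot(\text{revenue of } A \text{ on }\vec{\type})$. An $\epsilon$-additive guarantee for this normalized hypothesis is exactly what the theorem promises, as is also noted in the footnote to the theorem.

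Next I would invoke the discretization step stated in the excerpt: round every coordinate of every sample down to the nearest multiple of $\epsilon^2/n$. Because the bidder is additive, perturbing each of the $n$ coordinates by at most $\epsilon^2/n$ perturbs the value of every bundle by at most $\epsilon^2$; hence any DSIC mechanism designed for the discretized type space is $\epsilon^2$-DSIC on the original type space, and by the reduction attributed to Nisan (see \citet{BalcanBHM/JCSS/2008}) it can be converted to a DSIC mechanism on the original space with at most $\epsilon$ additive loss in revenue, i.e.\ $\epsilon$ additive loss in normalized revenue after scaling. After this discretization each coordinate takes at most $k=O(n/\epsilon^2)$ distinct values, and rounding the samples is equivalent to drawing samples from the discretized product distribution.

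Finally I would apply Theorem~\ref{thm:error_finite} to the discretized problem with data dimension $n$ and per-coordinate domain size $k=O(n/\epsilon^2)$. The theorem implies that with
\[
    O\!\left(\frac{n k}{\epsilon^2}\log\frac{1}{\delta}\right) \;=\; O\!\left(\frac{n^2}{\epsilon^4}\log\frac{1}{\delta}\right)
\]
samples, the PERM on the discretized product empirical distribution is an $\epsilon$-additive approximation to the normalized optimum on the discretized distribution with probability at least $1-\delta$. Chaining this with the $O(\epsilon)$ discretization loss and rescaling $\epsilon$ by a constant factor gives the claimed sample complexity for the original additive problem.

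The main obstacle is really just a sanity check rather than a technical difficulty: one has to verify that the finer discretization step $\epsilon^2/n$ (as opposed to $\epsilon^2$ in the unit-demand case) is the right choice and that it produces only a quadratic blow-up in $n$ instead of a larger one. This balance is forced by the additive bundle structure, which inflates coordinate-wise perturbations by a factor of $n$, and once it is pinned down, the remainder of the argument is a direct invocation of Theorem~\ref{thm:error_finite}.
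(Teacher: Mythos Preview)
Your proposal is correct and follows exactly the paper's approach: normalize the revenue by $\tfrac{1}{n}$ so hypotheses land in $[0,1]$, discretize each coordinate to multiples of $\epsilon^2/n$ (so $k=O(n/\epsilon^2)$), and invoke Theorem~\ref{thm:error_finite}. One small slip: the Nisan reduction costs at most $\epsilon$ in \emph{unnormalized} revenue, hence $\epsilon/n$ (not $\epsilon$) in normalized revenue---but this only strengthens your bound, so the argument goes through unchanged.
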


\paragraph{Multiple Bidders.} 
For $n$-bidder $m$-item auctions, Theorem~\ref{thm:error_finite} gives an $\tilde{O}\big( \frac{mnk}{\epsilon^2} \big)$ bound if the buyers' value domains are finite with size $k$, improving those by \citet{focs/GonczarowskiW18} in the log factor.
For continuous value domains, however, there is no existing transformation from approximate DSIC to DSIC auctions in this more general setting;
as a result, the aforementioned discretization no longer works.
Either we settle with approximate DSIC auctions, or need to know more about the relation between approximate and exact DSIC auctions, which is a fundamental question on its own in auction theory.
Readers are referred to \citet{focs/GonczarowskiW18} for an extensive discussion on this topic.

\paragraph{Prophet Inequality.}
We show that the rewards in prophet inequality can be discretized w.l.o.g.\ to multiples of $\epsilon$.
Hence, letting $k = \frac{1}{\epsilon}$, we get a sample complexity bound that improves the recent work of \citet{correa2018prophet} in the dependence in $n$, and the concurrent result by \citet{itcs/RubinsteinW2019} in the dependence in $\epsilon$, with bounded rewards in $[0, 1]$ and additive approximation. 
The discretization argument and an extension to the original setting of \citet{correa2018prophet} with unbounded rewards and multiplicative approximation is deferred to Appendix~\ref{app:app-prophet}.

\begin{theorem}\label{thm:finite-prophet}
    In the prophet inequality setting with $n$ items whose rewards are bounded in $[0,1]$, the sample complexity is at most $O \big(\frac{n}{\epsilon^3} \log \frac{1}{\delta} \big)$.
\end{theorem}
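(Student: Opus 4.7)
The plan is to reduce to Theorem~\ref{thm:error_finite} via discretization. I would round each coordinate of every sample down to the nearest multiple of $\epsilon$, so each per-coordinate support has size at most $k = \lceil 1/\epsilon\rceil + 1$. The rounded samples are i.i.d.\ draws from the rounded product distribution $\tilde{\vec D}$, and Theorem~\ref{thm:error_finite} applied with this $k$ guarantees that $N = O\bigl(\frac{nk}{\epsilon^2}\log\frac{1}{\delta}\bigr) = O\bigl(\frac{n}{\epsilon^3}\log\frac{1}{\delta}\bigr)$ samples suffice for the PERM $h^*$ on the rounded empirical distribution $\tilde{\vec E}$ to be an $O(\epsilon)$-additive approximation of $\opt(\tilde{\vec D})$ with probability at least $1-\delta$.

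To convert this into a guarantee against $\opt(\vec D)$, I would prove two short transfer statements. First, for deployment: on each real sequence $\vec t \sim \vec D$, I preprocess by rounding down to $\tilde{\vec t}$, feed $\tilde{\vec t}$ to $h^*$ to obtain a stopping time $\tau$, and collect the \emph{actual} reward $t_\tau$. Since $t_\tau \ge \tilde t_\tau$ pointwise, the expected real reward dominates $h^*(\tilde{\vec D})$. Second, one direction of the discretization comparison follows by the same coupling: any discrete-domain rule is a valid rule on $\vec D$ after input-rounding, and its expected reward can only increase, so $\opt(\tilde{\vec D}) \le \opt(\vec D)$.

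The main obstacle is the reverse inequality $\opt(\tilde{\vec D}) \ge \opt(\vec D) - \epsilon$, because the optimal rule $A^*$ for $\vec D$ depends on exact values while on $\tilde{\vec D}$ the rule sees only rounded values. My plan is to handle this via a Bayesian coupling: construct a (randomized) rule $\tilde A$ on $\tilde{\vec D}$ that, upon observing each $\tilde t_i$, internally samples a phantom value $t_i'$ from the conditional law of the original $t_i$ given that its round-down equals $\tilde t_i$, and then feeds $t_i'$ to $A^*$ to decide when to stop. By construction $(t_1',\ldots,t_n')$ has the joint law $\vec D$, so running $A^*$ on the phantom values would collect expected value $\opt(\vec D)$; and since $\tilde t_i \ge t_i' - \epsilon$ pointwise, the discrete reward $\tilde t_{\tau_{A^*}}$ actually collected by $\tilde A$ is within $\epsilon$ of that optimum, giving $\opt(\tilde{\vec D}) \ge \tilde A(\tilde{\vec D}) \ge \opt(\vec D) - \epsilon$. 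The only subtlety to check is that the phantom sampling respects online measurability, which is immediate because each $t_i'$ depends only on $\tilde t_i$.

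Stringing the three pieces together, the deployed algorithm's expected real reward is at least $h^*(\tilde{\vec D}) \ge \opt(\tilde{\vec D}) - O(\epsilon) \ge \opt(\vec D) - O(\epsilon)$; a standard rescaling of $\epsilon$ yields the stated $O\bigl(\frac{n}{\epsilon^3}\log\frac{1}{\delta}\bigr)$ sample complexity.
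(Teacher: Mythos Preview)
Your proposal is correct and follows essentially the same approach as the paper: discretize to multiples of $\epsilon$, apply Theorem~\ref{thm:error_finite}, and use the Bayesian phantom-sampling coupling to show $\opt(\tilde{\vec D}) \ge \opt(\vec D) - \epsilon$ (this is exactly the paper's Lemma~\ref{lem:prophet-coupling-optimal-apx}). Your deployment step---rounding the input online before feeding it to $h^*$ and collecting the true reward---is a minor variant of the paper's Lemma~\ref{lem:prophet-discretize-empirical-mono}, which instead observes that the empirical-optimal thresholds can be taken as multiples of $\epsilon$ so that the stopping time on $\vec t$ and $\tilde{\vec t}$ coincide; both yield the same transfer inequality.
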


\paragraph{Pandora's Problem.}
Recall that the main text uses a simple normalization by a factor $\frac{1}{n+1}$ to ensure that the hypotheses in Pandora's problem has range $[0, 1]$.
Hence, to get an $\epsilon$-approximation w.r.t.\ Pandora's problem, we need a $\frac{\epsilon}{n+1}$-approximately optimal hypothesis.
Further, we show that the rewards can be w.l.o.g.\ discretized to multiples of $\epsilon$.
Putting together, we get the first polynomial sample complexity for Pandora's problem.
The discretization argument and a more specialized method that gives the optimal sample complexity are deferred to Appendix~\ref{app:pandora}.

\begin{theorem}\label{thm:finite-pandora}
    In the Pandora's problem with $n$ boxes whose rewards and costs are bounded in $[0,1]$, the sample complexity is at most $O \big(\frac{n^3}{\epsilon^3} \log \frac{1}{\delta} \big)$.
\end{theorem}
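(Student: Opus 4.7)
The plan is to reduce Pandora's problem to Theorem~\ref{thm:error_finite} by combining the normalization recorded in the main text with a discretization of the reward domain. Under the normalization $\hypo_A \mapsto (\hypo_A + n)/(n+1)$, each hypothesis is $[0,1]$-valued, and an $\epsilon$-additive guarantee in the original Pandora problem is equivalent to an $\epsilon/(n+1)$-additive guarantee on the normalized class.

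The key step is showing that rounding each reward $t_i$ down to the nearest multiple of $\epsilon$, yielding a discretized product distribution $\vec{D}_\epsilon$, costs only $O(\epsilon)$ in unnormalized expected value. Concretely, two bounds are needed: $|\opt(\vec{D}) - \opt(\vec{D}_\epsilon)| = O(\epsilon)$ and, for the PERM $\hypo_{\vec{E}_\epsilon}$ trained on the rounded samples, $|\hypo_{\vec{E}_\epsilon}(\vec{D}) - \hypo_{\vec{E}_\epsilon}(\vec{D}_\epsilon)| = O(\epsilon)$. My approach would be a coupling $\tilde{t}_i = \lfloor t_i / \epsilon \rfloor \cdot \epsilon$ together with a surrogate-algorithm trick: given any Pandora algorithm $A$, define $A'$ to be the modification that internally rounds each observed reward before deciding. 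On a coupled pair $(\vec{t}, \tilde{\vec{t}})$, $A'$ on $\vec{t}$ and $A$ on $\tilde{\vec{t}}$ open exactly the same boxes with identical total cost, so their values differ only through the realized reward that is eventually taken, which changes by at most $\max_i |t_i - \tilde{t}_i| \le \epsilon$. Suprema over $A$ can then be transferred between the two settings to obtain both discretization bounds, with the bookkeeping deferred to Appendix~\ref{app:pandora}.

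Once the discretization is in hand, each coordinate takes $k = O(1/\epsilon)$ discrete values. Invoking Theorem~\ref{thm:error_finite} on the normalized hypothesis class with target additive error $\epsilon/(n+1)$ gives
\[
    N \;=\; O\!\left( \frac{n \cdot k}{(\epsilon/(n+1))^2} \log \frac{1}{\delta} \right) \;=\; O\!\left( \frac{n^3}{\epsilon^3} \log \frac{1}{\delta} \right),
\]
and chaining the two $O(\epsilon)$ discretization errors with the generalization error supplied by Theorem~\ref{thm:error_finite} (which, after multiplying out the $n+1$ rescaling, again contributes $O(\epsilon)$) recovers the claimed $\epsilon$-additive guarantee for Pandora's problem. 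The main obstacle is the discretization step: Pandora algorithms are adaptive policies whose decisions can flip discontinuously when a reward crosses a reservation value, so a pointwise Lipschitz claim in $\vec{t}$ is simply false and a naive ``shift by $\le \epsilon$'' argument cannot apply directly. The surrogate-algorithm coupling above sidesteps this by fixing the entire decision sequence across $\vec{t}$ and $\tilde{\vec{t}}$; transferring the resulting per-algorithm bound to the suprema (and to the specific PERM, which need not be a rounded-observation algorithm a priori) is where the remaining care in the appendix is concentrated.
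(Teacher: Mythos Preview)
Your overall plan---normalize so hypotheses are $[0,1]$-valued, discretize rewards to a grid of size $O(1/\epsilon)$, then apply Theorem~\ref{thm:error_finite} at accuracy $\epsilon/(n+1)$---is exactly the paper's route, and your arithmetic for the sample count is correct.

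The one point where your sketch and the paper diverge is the direction of the surrogate construction. Your round-then-decide surrogate $A' = A \circ \text{round}$ gives, for every $A$ and every coupled pair, $h_A(\tilde{\vec{\type}}) \le h_{A'}(\vec{\type}) \le h_A(\tilde{\vec{\type}}) + \epsilon$. Taking suprema, the left inequality yields $\opt(\vec{D}_\epsilon) \le \sup_A h_{A'}(\vec{D}) \le \opt(\vec{D})$, which is one direction of your bound~(a). But the reverse direction, $\opt(\vec{D}_\epsilon) \ge \opt(\vec{D}) - O(\epsilon)$, does not follow: after taking suprema you only get $\opt(\vec{D}_\epsilon) \ge \sup_A h_{A'}(\vec{D}) - \epsilon$, and $\sup_A h_{A'}(\vec{D})$ is the optimum over algorithms that \emph{ignore the fractional part}, a strict subclass, so this quantity need not be within $O(\epsilon)$ of $\opt(\vec{D})$. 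The paper closes this gap with the opposite surrogate: given the discretized observation $\tilde{\vec{\type}}$, \emph{re-sample} the fractional part from the conditional law under $\vec{D}$, run the continuous-domain optimal $S_{\vec{D}}$ on the resampled vector, and collect the discretized reward. This randomized algorithm lives on the discretized domain and its expected value is exactly $\opt(\vec{D})$ minus at most $\epsilon$ (Lemma~\ref{lem:pandora-coupling-optimal-apx}), giving the missing inequality directly.

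For your bound~(b) on the PERM, the paper's argument is the one you flag as needing care: the PERM for $\vec{E}_\epsilon$ is Weitzman's index policy with reservation values that may not lie on the grid, but rounding each index \emph{up} to the next multiple of $\epsilon$ leaves its behaviour on $\vec{E}_\epsilon$ unchanged; once rounded, $U_i \ge \sigma_{i+1}$ and $\tilde{U}_i \ge \sigma_{i+1}$ become equivalent, so the policy opens the same boxes on $\vec{\type}$ and $\tilde{\vec{\type}}$ and collects a weakly larger reward from $\vec{\type}$. This is precisely your surrogate trick specialized to the (now rounded) PERM, so your instinct there is right.
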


\subsection{Convergence of Product Empirical Distribution: Proof of Lemma~\ref{lem:hellinger_distance}}
\label{sec:proof-of-hellinger}

We will reduce the problem to a vector concentration inequality.
First:
%
\begin{align*}
    1-\Hellinger^2(\vec{D},\vec{E})
    & =
    \prod_{i=1}^n \big( 1-\Hellinger^2(D_i,E_i) \big) 
    && \text{(Lemma~\ref{lem:hellinger-decompose})} \\
    & \ge 
    1 - \sum_{i=1}^n \Hellinger^2(D_i, E_i)
    ~.
\end{align*}

Hence, it suffices to show that with probability at least $1 - \delta$:
%
\begin{equation}
    \label{eqn:hellinger-per-coorindate}
    \sum_{i=1}^n\Hellinger^2(D_i,E_i) \le O \bigg( \frac{nk}{N} \log \frac{1}{\delta} \bigg) 
    ~.
\end{equation}

By definition, for any $1 \le i \le n$:
\begin{align*}
    \Hellinger^2(D_i,E_i) &= \frac{1}{2} \sum_{t\in T_i} \Big( \sqrt{f_{D_i}(t)} - \sqrt{f_{E_i}(t)} \Big)^2 = \frac{1}{2} \sum_{t\in T_i} \Bigg(\frac{f_{D_i}(t) - f_{E_i}(t)}{\sqrt{f_{D_i}(t)} 
    +\sqrt{f_{E_i}(t)} } \Bigg)^2.
\end{align*}

Next, bound the right-hand-side with the following inequality, which can be viewed a smoothed variant of the connection between the Hellinger distance and $\chi$-square distance.

\begin{lemma}
    \label{lem:reduce-to-chi-square}
    For any $f_D, f_E \ge 0$:
    \[
        \bigg( \frac{f_D - f_E}{\sqrt{f_D} + \sqrt{f_E}} \bigg)^2 \le \frac{\big(f_D - f_E \big)^2}{\max \big\{ f_D, \frac{1}{N} \log \frac{1}{\delta} \big\}} + \frac{1}{N}\log \frac{1}{\delta}
        ~.
    \]
\end{lemma}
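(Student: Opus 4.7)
The plan is to prove the inequality by a short case analysis on the size of $f_D$ and $f_E$ relative to $\alpha \defeq \frac{1}{N}\log\frac{1}{\delta}$, exploiting the elementary identity
\[
    \frac{f_D - f_E}{\sqrt{f_D} + \sqrt{f_E}} = \sqrt{f_D} - \sqrt{f_E}
    ~,
\]
which reveals that the LHS of the lemma is exactly $(\sqrt{f_D} - \sqrt{f_E})^2$. With this reformulation, the lemma amounts to showing that $(\sqrt{f_D} - \sqrt{f_E})^2$ is bounded by the RHS, which we will argue in two regimes.

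First I would handle the regime where $\max(f_D, f_E) \ge \alpha$. Here I would show the stronger statement that $(\sqrt{f_D} + \sqrt{f_E})^2 \ge \max\{f_D, \alpha\}$, so the first term of the RHS already dominates the LHS and the additive $\alpha$ is slack. This splits into two easy subcases: if $f_D \ge \alpha$, then $(\sqrt{f_D} + \sqrt{f_E})^2 \ge f_D = \max\{f_D, \alpha\}$; if instead $f_D < \alpha$ but $f_E \ge \alpha$, then $(\sqrt{f_D} + \sqrt{f_E})^2 \ge f_E \ge \alpha = \max\{f_D, \alpha\}$.

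The remaining regime is $f_D, f_E < \alpha$. Here I would use the reformulation together with the elementary bound $(\sqrt{f_D} - \sqrt{f_E})^2 \le \max(f_D, f_E)$ (which follows since, assuming WLOG $f_D \ge f_E$, we have $f_D - 2\sqrt{f_D f_E} + f_E \le f_D$ iff $\sqrt{f_E} \le 2\sqrt{f_D}$). This gives LHS $\le \max(f_D, f_E) < \alpha$, which is bounded by the additive $\alpha$ term of the RHS, and the first term is nonnegative, so the inequality holds.

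I do not expect any real obstacle; the only mildly non-obvious ingredient is noticing the algebraic identity that converts $(f_D - f_E)^2 / (\sqrt{f_D} + \sqrt{f_E})^2$ into $(\sqrt{f_D} - \sqrt{f_E})^2$, after which both regimes reduce to one-line estimates. The purpose of the "smoothing" by $\alpha$ in the RHS is precisely to absorb the contribution from coordinates where $f_D$ is so small that dividing by it would blow up; in that case the LHS itself is already small enough to be swallowed by the additive $\alpha$ term.
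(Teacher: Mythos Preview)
Your proposal is correct and takes essentially the same approach as the paper: the paper's proof is precisely your two-case split (either $\max\{f_D,f_E\}$ exceeds $\alpha$, in which case the first RHS term already dominates, or both are at most $\alpha$, in which case the LHS is bounded by $\max\{f_D,f_E\}\le\alpha$). You have simply spelled out the supporting inequalities—the identity $(f_D-f_E)/(\sqrt{f_D}+\sqrt{f_E})=\sqrt{f_D}-\sqrt{f_E}$ and the bound $(\sqrt{f_D}-\sqrt{f_E})^2\le\max\{f_D,f_E\}$—that the paper leaves implicit.
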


\begin{proof}
    If $f_E > \frac{1}{N} \log \frac{1}{\delta}$ or $f_D >\frac{1}{N} \log \frac{1}{\delta}$, the left-hand-side is at most the first term on the right-hand-side.
    Otherwise, the left-hand-side is at most $\max \{ f_D, f_E \} \le \frac{1}{N}\log (\frac{1}{\delta})$.
\end{proof}

Sum Eqn.~\eqref{eqn:hellinger-per-coorindate} over $1 \le i \le n$, and apply Lemma~\ref{lem:reduce-to-chi-square} to the right-hand-side:
\[
    \sum_{i=1}^n \Hellinger^2 (D_i, E_i) \le \frac{1}{2} \sum_{i=1}^n \sum_{t\in T_i} \frac{\big(f_{D_i}(t) - f_{E_i}(t)\big)^2}{\max \big\{ f_{D_i}(t), \frac{1}{N} \log \frac{1}{\delta} \big\}} + \frac{nk}{N} \log \frac{1}{\delta}
    ~.
\]
 
Therefore, it suffices to show that with probability at least $1 - \delta$:
\begin{equation}
    \label{eqn:finite-domain-convergence-coordinatewise}
    \sum_{i=1}^n \sum_{t \in T_i} \frac{\big(f_{D_i}(t) - f_{E_i}(t)\big)^2}{\max \big\{ f_{D_i}(t), \frac{1}{N} \log \frac{1}{\delta} \big\}} \le O\left( \frac{nk}{N} \log \frac{1}{\delta} \right)
    ~.
\end{equation}

To interpret the left-hand-side as the squared $\ell_2$-norm of the sum of i.i.d.\ vectors, we associate each sample $\vec{s_j} \sim \mathbf{D}$, $1 \le j \le N$, with a $\sum_{i=1}^n|T_i|$-dimensional random vector $\vec{X_j}$.
Concretely, for any $j\in[N], i\in [n]$ and $t\in T_i$:
\[
    X_{jit} = \frac{\mathbf{1}[s_{ji} = t]-f_{D_i}(t)}{\sqrt{\max\big\{f_{D_i}(t),\frac{1}{N}\log \frac{1}{\delta}\big\}}} 
    ~.
\]

Then, Eqn.~\eqref{eqn:finite-domain-convergence-coordinatewise} can be restated as:
\[
    \Big\Vert~\frac{1}{N} \sum_{j=1}^N \vec{X_j}~\Big\Vert_2^2 \le O \left( \frac{nk}{N} \log \frac{1}{\delta} \right)
    \quad\text{, or equivalently}\qquad
    \Big\Vert~\sum_{j=1}^N \vec{X_j}~\Big\Vert_2^2 \le O \left( Nnk \log \frac{1}{\delta} \right)
    ~.
\]

We establish the following properties of the random vectors.

\begin{lemma}
    \label{lem:vector-property}
    The i.i.d.\ random vectors $\vec{X_j}$, $1 \le j \le N$, satisfy
    (1) $\E \big[ \vec{X_j} \big] = \vec{0}$,
    (2) $\E \big[ \Vert \vec{X_j} \Vert_2^2 \big] \le nk$, and
    (3) $\Vert \vec{X_j} \Vert_2 \le \sqrt{\frac{N n k}{\log \frac{1}{\delta}}}$.
\end{lemma}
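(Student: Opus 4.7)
\textbf{Proof plan for Lemma~\ref{lem:vector-property}.} All three claims follow by direct coordinate-wise calculation, using only that $s_{ji} \sim D_i$ and that $f_{E_i}(t)$ does not appear in the vectors (they depend on $\vec{D}$ alone). I will treat the three parts in turn.

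For part (1), I would fix $j, i, t$ and observe that the numerator $\mathbf{1}[s_{ji}=t] - f_{D_i}(t)$ has mean zero since $\Pr[s_{ji}=t] = f_{D_i}(t)$, and the denominator is a deterministic quantity depending only on $D_i$ and the parameters $N,\delta$. Hence $\E[X_{jit}] = 0$ for every coordinate, which gives $\E[\vec{X_j}]=\vec{0}$.

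For part (2), I would compute the second moment coordinate-wise. The variance of the Bernoulli indicator $\mathbf{1}[s_{ji}=t]$ is $f_{D_i}(t)(1-f_{D_i}(t)) \le f_{D_i}(t)$, so
\[
  \E\big[X_{jit}^2\big] \;\le\; \frac{f_{D_i}(t)}{\max\{f_{D_i}(t),\tfrac{1}{N}\log\tfrac{1}{\delta}\}} \;\le\; 1.
\]
Summing over $t \in \typespace_i$ gives a bound of $|\typespace_i| \le k$ for each $i$, since each term is at most $1$. Summing over $i \in [n]$ then yields $\E[\|\vec{X_j}\|_2^2] \le nk$.

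For part (3), the key observation is that the denominator inside each $X_{jit}$ is at least $\sqrt{\tfrac{1}{N}\log\tfrac{1}{\delta}}$ regardless of $f_{D_i}(t)$. Since the numerator is bounded by $1$ in absolute value, each individual coordinate satisfies
\[
  X_{jit}^2 \;\le\; \frac{1}{\max\{f_{D_i}(t),\tfrac{1}{N}\log\tfrac{1}{\delta}\}} \;\le\; \frac{N}{\log\tfrac{1}{\delta}}.
\]
Summing over all $\sum_i |\typespace_i| \le nk$ coordinates gives $\|\vec{X_j}\|_2^2 \le \frac{Nnk}{\log\tfrac{1}{\delta}}$, as claimed. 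None of the three parts presents a real obstacle; the only subtlety is remembering that $\vec{X_j}$ is defined in terms of the true marginals $f_{D_i}$ and not the empirical $f_{E_i}$, so the denominators are constants that commute freely with expectations.
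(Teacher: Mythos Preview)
Your proposal is correct and follows essentially the same approach as the paper: part~(1) by linearity and the deterministic denominator, part~(2) by bounding the Bernoulli variance over the denominator coordinate-wise and summing, and part~(3) by using the $\tfrac{1}{N}\log\tfrac{1}{\delta}$ floor in the denominator to bound each $X_{jit}^2$ and then summing over at most $nk$ coordinates. The only cosmetic difference is that in part~(2) the paper computes the exact ratio $\frac{f_{D_i}(t)(1-f_{D_i}(t))}{f_{D_i}(t)} = 1 - f_{D_i}(t)$ before summing, whereas you bound each term directly by~$1$; both routes land at the same $nk$.
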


\begin{proof}
    The first property follows by definition.
    The second one is true because:
    \[
        \E \big[ \Vert \vec{X_j} \Vert_2^2 \big] \le \sum_{i=1}^n \sum_{t\in T_i}\frac{\E\big[(\mathbf{1}[\vec{s_j}_i=t]-f_{D_i}(t))^2\big]}{f_{D_i}(t)} = \sum_{i=1}^n \sum_{t\in T_i}(1-f_{D_i}(t))\le nk
        ~.
    \]
    
    The last property follows by $X_{ijk} \le N/\log \frac{1}{\delta}$.
    This is why we need the smoothed variant in Lemma~\ref{lem:reduce-to-chi-square} instead of the original inequality between the Hellinger and $\chi$-square distances.
\end{proof}

By Lemma~\ref{lem:vector-property}, and Lemma \ref{lem:vector-bernstein} with $\sigma^2 = 128nk$,%
\footnote{We need the additional constant $128$ in $\sigma^2$ because the right-hand-side of Lemma~\ref{lem:vector-bernstein} is \emph{not} monotone in $\sigma^2$.}
$M = \sqrt{Nnk/\log \frac{1}{\delta}}$, and $\Delta = 16 \sqrt{N n k \log \frac{1}{\delta}}$, we get that with probability at least $1 - \delta$:
\[
    \bigg|~\Big\Vert \sum_{j=1}^N \vec{X_j} \Big\Vert_2 - \E \Big[ \Big\Vert \sum_{j=1}^N \vec{X_j} \Big\Vert_2 \Big]~\bigg| \le O \bigg( \sqrt{Nnk\log \frac{1}{\delta}} \bigg)
    ~.
\]

Finally, it remains to bound the expected $\ell_2$-norm of $\sum_{j=1}^N \vec{X_j}$:
\begin{align*}
    \E \Big[ \Big\Vert \sum_{j=1}^N \vec{X_j} \Big\Vert_2 \Big]^2
    &
    \le
    \E \Big[ \Big\Vert \sum_{j \in [N]} \vec{X_j} \Big\Vert_2^2 \Big]
    && \text{(Cauchy-Schwarz)} \\
    &
    =
    \sum_{j=1}^N \E \Big[ \big\Vert \vec{X_j} \big\Vert_2^2 \Big]
    && \text{(independence of $\vec{X_j}$'s, and $\E \big[ \vec{X_j} \big] = \vec{0}$ by Lemma~\ref{lem:vector-property})} \\[2ex]
    &
    \le Nnk
    ~.
    && \text{($\E \big[ \Vert \vec{X_j} \Vert_2^2 \big] \le nk$ by Lemma~\ref{lem:vector-property})}
\end{align*}

\section{Strongly Monotone Problems}
\label{sec:strong-monotone}

This section considers the sample complexity of a subset of problems which satisfy a structural property called strong monotonicity, without any restrictions on the supports of the data domain.

\begin{definition}[Strong Monotonicity]
\label{def:strong-monotone}
A problem $\hypoclass$ is \emph{strongly monotone} if 
for any $\vec{D}$, any $\vec{\tilde{D}}$ that is stochastically dominated by $\vec{D}$, and the optimal hypothesis $\hypo_{\vec{\tilde{D}}}$ of $\vec{\tilde{D}}$:
\[
	\hypo_{\vec{\tilde{D}}} \big( \vec{D} \big) \ge \hypo_{\vec{\tilde{D}}} \big( \vec{\tilde{D}} \big)
	= \opt \big( \vec{\tilde{D}} \big)
	~.
\]
\end{definition}

The name is inherited from the context of single-parameter revenue maximization, where each hypothesis $\hypo$ is a DSIC auction, $\vec{v} \sim \vec{D}$ is the value profile, and $\hypo(\vec{D})$ is the expected revenue of the auction over the random realization of a value profile drawn from $\vec{D}$.
Then, the above inequality states that running the optimal auction w.r.t.\ $\vec{\tilde{D}}$, a.k.a., Myerson's auction, on a distribution $\vec{D}$ that stochastically dominates $\vec{\tilde{D}}$, gets at least the optimal revenue w.r.t.\ the dominated distribution $\vec{\tilde{D}}$.
This is precisely the notion of \emph{strong revenue monotonicity} introduced by \citet{DevanurHP/STOC/2016}.
The naming is to make a distinction with the existing weaker notion of revenue monotonicity, which only requires that optimal revenue w.r.t.\ $\vec{D}$ to be weakly larger than that w.r.t.\ $\vec{\tilde{D}}$.
We restate below the weaker notion in the more general context in this paper.

\begin{definition}
    \label{def:weak-monotone}
    A problem $\hypoclass$ is \emph{weakly monotone} if for any $\vec{D}$, and any $\vec{\tilde{D}}$ that is stochastically dominated by $\vec{D}$:
    \[
        \opt \big( \vec{D} \big) \ge \opt \big( \vec{\tilde{D}} \big)
    	~.
    \]
\end{definition}

Finally, we remark that there is an even stronger notion of monotonicity which we call hypothesis-wise monotonicity.

\begin{definition}
    \label{def:hypo-wise-monotone}
    A problem $\hypoclass$ is \emph{hypothesis-wise monotone} if for any $\vec{D}\in \R^n$, any $\vec{\tilde{D}} \in \R^n$ that is stochastically dominated by $\vec{D}$, and any hypothesis $\hypo \in \hypoclass$:
\[
    \hypo \big( \vec{D} \big) \ge \hypo \big( \vec{\tilde{D}} \big)
	~.
\]
\end{definition}

Clearly, hypothesis-wise monotonicity implies strong monotonicity, which in turns implies weak monotonicity.
Weak monotonicity is insufficient for deriving the improved sample complexity bound with the techniques in this section.
Hypothesis-wise monotonicity is too restrictive on the other hand;
in fact, it fails to hold on any example considered in this paper. 

The rest of the section argues that (1) strong monotonicity leads to an improved sample complexity bound, and (2) strong monotonicity holds in all but one examples considered in this paper, improving or matching the state-of-the-art sample complexity bounds. 

\subsection{Sample Complexity Bounds for Strongly Monotone Problems}
\label{sec:strong-monotone-bounds}

The main result for strongly monotone problems is the following improved sample complexity upper bound.
In particular, the bound is independent of the support size of the distributions and, in fact, applies to continuous distributions.

\begin{theorem}
    \label{thm:strong-monotone-upper-bound}
    For any strongly monotone problem $\hypoclass$, suppose the number of samples is at least:
    \[
        C \cdot \frac{n}{\epsilon^2} \log \left( \frac{n}{\epsilon} \right) \log \left( \frac{n}{\epsilon \delta} \right)
        ~,
    \]
    where $C > 0$ is a sufficiently large constant independent of the problem $\hypoclass$. 
    Then, the PERM is an $\epsilon$-additive approximation with probability at least $1 - \delta$.
    
    \medskip
    
    \noindent
    \textbf{Remark:~}
    If the distributions are i.i.d., i.e., $D_i = D^*$ for any $i \in [n]$, it suffices to have the above number of sample values from $D^*$ (rather than vectors from $\vec{D}$) and construct an i.i.d.\ empirical distribution $\vec{E}$ such that each coordinate is a uniform distribution over the samples.
\end{theorem}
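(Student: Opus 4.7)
The plan is to generalize the shading-plus-strong-monotonicity argument of \citet{guo2019settling} for single-parameter revenue maximization to an arbitrary strongly monotone problem, and to show that the plain PERM (without the regularization used in that paper) suffices. The structure has three pieces: (a) construct coordinate-wise ``shaded'' surrogates of $\vec{E}$ and $\vec{D}$ that are stochastically ordered with respect to the opposite distribution, (b) apply strong monotonicity to these shaded pairs together with PERM optimality to lower bound $\hypo_{\vec{E}}(\vec{D})$ in terms of $\opt(\vec{D})$, and (c) control the accumulated shading loss using the Hellinger/vector-concentration machinery from Section~\ref{sec:finite}.

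Concretely, I would first apply the Dvoretzky--Kiefer--Wolfowitz inequality coordinate-wise and take a union bound across the $n$ marginals, yielding, with probability at least $1-\delta$, a uniform CDF deviation $\|F_{E_i} - F_{D_i}\|_\infty \le \gamma$ with $\gamma = \tilde{O}(1/\sqrt{N})$. Using $\gamma$, define a coordinate-wise downward quantile-shading $\vec{\tilde E}$ of $\vec{E}$ so that $\vec{\tilde E} \preceq \vec{D}$ stochastically, and symmetrically $\vec{\tilde D}$ of $\vec{D}$ so that $\vec{\tilde D} \preceq \vec{E}$. Strong monotonicity then delivers
\[
    \hypo_{\vec{\tilde E}}\bigl(\vec{D}\bigr) \;\ge\; \opt\bigl(\vec{\tilde E}\bigr)
    \qquad\text{and}\qquad
    \hypo_{\vec{\tilde D}}\bigl(\vec{E}\bigr) \;\ge\; \opt\bigl(\vec{\tilde D}\bigr).
\]
Combining the second inequality with PERM optimality $\hypo_{\vec{E}}(\vec{E}) \ge \hypo_{\vec{\tilde D}}(\vec{E})$ gives $\opt(\vec{E}) \ge \opt(\vec{\tilde D})$, and a bound on the shading loss between $\vec{D}$ and $\vec{\tilde D}$ yields $\opt(\vec{\tilde D}) \ge \opt(\vec{D}) - \epsilon/2$. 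The remaining step is to transfer from ``value-on-$\vec{E}$'' to ``value-on-$\vec{D}$'' for the PERM itself, using the first strong-monotonicity inequality together with the coupling between $\vec{E}$ and $\vec{\tilde E}$.

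The main obstacle is to force $N = \tilde{O}(n/\epsilon^2)$ rather than the $\tilde{O}(n^2/\epsilon^2)$ that a straightforward hybrid argument would yield: with $\gamma = \Theta(\sqrt{\log(n/\delta)/N})$, summing the per-coordinate shading cost gives a total loss $\Theta(n\gamma)$, which is a factor of $\sqrt{n}$ too large. To close this gap, I would bound the cumulative shading loss in a squared, $\ell_2$ sense rather than in an $\ell_1$ sense, by leveraging the additivity of the Hellinger distance (Lemma~\ref{lem:hellinger-decompose}) together with the vector Bernstein inequality (Lemma~\ref{lem:vector-bernstein})---the same machinery used in the proof of Lemma~\ref{lem:hellinger_distance}. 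This replaces $\sum_i \gamma_i$ with $\sqrt{\sum_i \gamma_i^2}$, saving the factor of $\sqrt{n}$ and yielding $N = \tilde{O}(n/\epsilon^2)$. This sum-of-squares accounting also appears to be what enables avoiding the explicit regularization of \citet{guo2019settling}.

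Finally, the i.i.d.\ remark is immediate from the construction: the entire argument uses the samples only to estimate each marginal CDF $F_{D_i}$, so when $D_i = D^*$ for every $i$, the $N$ scalar samples of $D^*$ can be pooled across coordinates and the same argument goes through with $\vec{E} = E^* \times \cdots \times E^*$ unchanged.
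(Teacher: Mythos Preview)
Your high-level plan has the right ingredients, but there are two concrete gaps that prevent it from reaching the $\tilde{O}(n/\epsilon^2)$ bound.

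\textbf{The chain does not close for the plain PERM.} You correctly derive $\opt(\vec{E}) \ge \opt(\vec{\tilde D}) \ge \opt(\vec{D}) - \epsilon/2$, i.e., $\hypo_{\vec{E}}(\vec{E})$ is large. But to finish you must show $\hypo_{\vec{E}}(\vec{D})$ is large, and your ``first strong-monotonicity inequality'' $\hypo_{\vec{\tilde E}}(\vec{D}) \ge \opt(\vec{\tilde E})$ concerns the \emph{wrong hypothesis}: it is about $\hypo_{\vec{\tilde E}}$, the optimal for the shaded distribution, not about the PERM $\hypo_{\vec{E}}$. Strong monotonicity lets you evaluate the optimal hypothesis of the \emph{dominated} distribution on the dominating one; since $\vec{D}$ need not dominate $\vec{E}$, you cannot apply it directly to $\hypo_{\vec{E}}$ on $\vec{D}$. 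The paper's fix is to shade $\vec{D}$ in \emph{both} directions, producing deterministic $\vec{\hat D} \succeq \vec{E} \succeq \vec{\check D}$. Then strong monotonicity applied to the pair $(\vec{\hat D}, \vec{E})$ gives $\hypo_{\vec{E}}(\vec{\hat D}) \ge \opt(\vec{E})$, and Lemma~\ref{lem:total-variation} transfers this to $\hypo_{\vec{E}}(\vec{D})$ at a cost of $\delta(\vec{\hat D}, \vec{D})$. Symmetrically, $\vec{\check D}$ handles the other direction. The point is that all total-variation comparisons are between $\vec{D}$ and \emph{deterministic} surrogates of $\vec{D}$, never between $\vec{D}$ and $\vec{E}$.

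\textbf{Uniform DKW shading is too coarse, and the vector-Bernstein fix does not apply.} With a uniform shift $\gamma = \tilde{O}(1/\sqrt{N})$, the per-coordinate squared Hellinger distance between $D_i$ and its shaded version is $\Theta(\gamma)$, not $\Theta(\gamma^2)$ (e.g., take $D_i$ uniform on $[0,1]$: a constant CDF shift of $\gamma$ creates a point mass of size $\gamma$, giving $H^2 = \Theta(\gamma)$). Summing and converting to total variation yields $N = \tilde{\Omega}(n^2/\epsilon^4)$, not $\tilde{O}(n/\epsilon^2)$. Your proposed remedy---invoke Lemma~\ref{lem:vector-bernstein} as in Lemma~\ref{lem:hellinger_distance}---does not help: that machinery controls the \emph{random} quantity $H^2(\vec{D},\vec{E})$, whereas here the surrogates $\vec{\hat D}, \vec{\check D}$ are deterministic functions of $\vec{D}$ and there is nothing to concentrate. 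What actually saves the factor is using a \emph{variance-aware} (Bernstein-type) shading, where the CDF shift at quantile $q$ scales like $\sqrt{q(1-q)/N}$ rather than $1/\sqrt{N}$. With this shading one can show directly (Lemmas~\ref{lem:strong-monotone-Dhat-and-D} and \ref{lem:strong-monotone-Dcheck-and-D}) that $H^2(\hat D_i, D_i) = \tilde{O}(1/N)$ per coordinate, hence $H^2(\vec{\hat D}, \vec{D}) = \tilde{O}(n/N)$ and $\delta(\vec{\hat D}, \vec{D}) \le \epsilon/2$ once $N = \tilde{\Omega}(n/\epsilon^2)$.
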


The above upper bound is identical to that in the special case of single-parameter revenue maximization by \citet{guo2019settling};
the proof is also similar.
The contributions of this paper are two-folds.
First, we generalize it to arbitrary strongly monotone problems so that it can be further applied to a broader scope of problems, including the prophet inequalities and the Pandora's problem considered in this paper. 
Second, we show that the empirical maximizer itself achieves the optimal sample complexity bound when the reward function is bounded in $[0, 1]$;
in contrast, \citet{guo2019settling} need a regularized version of the empirical distributions called the dominated empirical distributions, and uses the corresponding regularized maximizer.

Before presenting the proof of Theorem~\ref{thm:strong-monotone-upper-bound}, we remark that the above upper bound is tight up to a poly-logarithmic factor, due to an existing lower bound in the special case of single-parameter revenue maximization.

\begin{theorem}
    \label{thm:strong-monotone-lower-bound}
    There is a strongly monotone problem $\hypoclass$ so that if the number of samples is less than:
    \[
        c \cdot \frac{n}{\epsilon^2} 
        ~,
    \]
    where $c > 0$ is a sufficiently small constant, no algorithm gets an expected $\epsilon$-additive approximation.
\end{theorem}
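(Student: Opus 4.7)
The plan is to recycle a matching lower bound already available in the literature for single-parameter revenue maximization, which fits the framework of this section cleanly. Take the hypothesis class $\hypoclass$ to be the set of DSIC single-item auctions for $n$ bidders with independent values in $[0,1]$, with $\hypo_A(\vec{\type})$ defined as the revenue of $A$ on type profile $\vec{\type}$. Revenue is bounded above by $\max_i \type_i \le 1$, so each hypothesis indeed maps into $[0,1]$, matching the normalization of this paper. Strong monotonicity is exactly the strong revenue monotonicity of Myerson's optimal auction established by \citet{DevanurHP/STOC/2016}: when the true distribution $\vec{D}$ stochastically dominates $\vec{\tilde{D}}$, running Myerson's auction tailored to $\vec{\tilde{D}}$ on $\vec{D}$ yields weakly more revenue than on $\vec{\tilde{D}}$ itself. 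Hence $\hypoclass$ satisfies Definition~\ref{def:strong-monotone}.

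Given strong monotonicity, it remains to invoke the $\Omega\big(n/\epsilon^2\big)$ sample complexity lower bound of \citet{guo2019settling} for additively learning the optimal single-item auction revenue. The standard hard instance takes, for each bidder $i$, a pair of marginal distributions $D_i, D_i'$ supported on a small set that are $O(\epsilon/\sqrt{n})$-close in Hellinger distance but whose monopoly reserves yield optimal revenues differing by $\Theta(\epsilon/n)$ per coordinate. Then by Lemma~\ref{lem:hellinger-decompose} the product distributions $\vec{D}$ and $\vec{D}'$ have squared Hellinger distance $O(\epsilon^2)$ after $N$ samples whenever $N = o(n/\epsilon^2)$, so by Lemma~\ref{lem:hellinger-and-total-variation} the two $N$-sample laws remain $o(1)$ apart in total variation. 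A standard Le~Cam two-point argument then prevents any learner from simultaneously $\epsilon$-approximating the optimum under both $\vec{D}$ and $\vec{D}'$, whose optimal-revenue gap sums to $\Theta(\epsilon)$ across the $n$ coordinates.

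The only real obstacle is bookkeeping rather than a new technical idea: one must check that the existing single-parameter lower bound, as originally stated, lives at the per-coordinate scale of $\epsilon$ used here (no hidden rescaling by $n$), and that strong revenue monotonicity holds uniformly on every distribution appearing in the hard family, not just on a generic one. The first check is immediate from the $[0,1]$-bounded value normalization; the second is automatic since strong revenue monotonicity is a structural property of Myerson's auction that does not single out any particular distribution. With both checks in place, Theorem~\ref{thm:strong-monotone-lower-bound} follows by direct appeal to the cited single-parameter construction, with no parameters to retune.
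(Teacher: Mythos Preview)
Your proposal is correct and follows essentially the same route as the paper: take $\hypoclass$ to be DSIC single-item auctions with $n$ bidders and $[0,1]$-bounded values, invoke the strong revenue monotonicity of \citet{DevanurHP/STOC/2016} to verify Definition~\ref{def:strong-monotone}, and then cite the $\Omega(n/\epsilon^2)$ lower bound of \citet{guo2019settling}. The paper's own proof is even terser --- it simply cites those two results without rehearsing the Le~Cam/Hellinger mechanics --- so your extra paragraph sketching the hard instance is additional detail rather than a different argument.
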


\begin{proof}
    Let $\hypoclass$ be the set of DSIC single-item auctions with $n$ bidders. 
    Restrict the bidders' valuations to be bounded in $[0, 1]$ so that the value/revenue of any hypothesis/auction $\hypo \in \hypoclass$ on any value profile is bounded in $[0, 1]$.
    By \citet{DevanurHP/STOC/2016}, the single-item revenue maximization problem is strongly monotone.
    Further by \citet{guo2019settling}, the sample complexity of $[0, 1]$-bounded valuations and $\epsilon$-additive approximation is at least $\Omega(\tfrac{n}{\epsilon^2})$.
\end{proof}

\subsection{Proof of Theorem~\ref{thm:strong-monotone-upper-bound}}
\label{sec:strong-monotone-upper-bound-proof}


By the Bernstein inequality and union bound, we can relate the CDFs of underlying distribution $\vec{D}$ and the empirical distribution $\vec{E}$ as follows.

\begin{lemma}[e.g., Lemma 5 of \citet{guo2019settling}]
    \label{lem:strong-monotone-D-and-E}
    With probability at least $1 - \delta$, we have that for any $i \in [n]$, and any $\type_i \in [0, 1]$:
    \[
        \big| F_{D_i} (\type_i) - F_{E_i} (\type_i) \big| \le
        \sqrt{\frac{2 F_{D_i}(\type_i) \big( 1-F_{D_i}(\type_i) \big) \ln(2Nn\delta^{-1})}{N} } + \frac{\ln(2Nn\delta^{-1})}{N}
        ~.
    \]
\end{lemma}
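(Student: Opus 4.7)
The plan is to establish the pointwise Bernstein bound first, and then lift it to a uniform statement over $i \in [n]$ and $\type_i \in [0,1]$ via a union bound that exploits monotonicity. Fix a coordinate $i \in [n]$ and a threshold $\type_i \in [0,1]$. Writing $p := F_{D_i}(\type_i)$, the quantity $N \cdot F_{E_i}(\type_i) = \sum_{j=1}^N \mathbf{1}[s_{ji} \le \type_i]$ is a sum of $N$ i.i.d.\ Bernoulli random variables with mean $p$ and variance $p(1-p)$, each bounded by $1$. Bernstein's inequality gives, for every $u > 0$,
\[
    \Pr\Bigl[\, \bigl| F_{E_i}(\type_i) - F_{D_i}(\type_i) \bigr| > u \,\Bigr] \le 2 \exp\!\left( -\frac{N u^2 / 2}{p(1-p) + u/3} \right).
\]
Setting the right-hand side equal to some target $\delta'$ and solving the resulting quadratic in $u$ (or, equivalently, using the standard ``either the variance term or the range term dominates'' case analysis) yields $u \le \sqrt{2 p(1-p) \ln(2/\delta')/N} + \ln(2/\delta')/N$, which exactly matches the form stated in the lemma.

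Next I would make the bound uniform in $\type_i$. The key observation is that $F_{E_i}$ is a right-continuous step function that jumps only at the (at most) $N$ sample values $s_{1i}, \ldots, s_{Ni}$, while $F_{D_i}$ is monotone non-decreasing. Hence on each open interval between consecutive sample points, $F_{E_i}(\type_i)$ is constant and $F_{D_i}(\type_i)$ is monotone, so $F_{D_i}(\type_i) - F_{E_i}(\type_i)$ is monotone in $\type_i$ and its extremes are attained in the one-sided limits at sample points (and at the endpoints $0,1$). It therefore suffices to apply the pointwise Bernstein bound at the at most $2Nn$ relevant threshold-coordinate pairs and union-bound with $\delta' = \delta/(2Nn)$, which is exactly where the $\ln(2Nn\delta^{-1})$ factor in the stated inequality comes from.

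The main subtlety will be that the right-hand side of the lemma involves $F_{D_i}(\type_i)$, which varies continuously between sample points, so a naive union bound over sample thresholds does not immediately certify the inequality for the intermediate $\type_i$'s. The resolution is again the monotonicity argument above: on an interval $\bigl(s_{(k)i}, s_{(k+1)i}\bigr]$ where $F_{E_i}$ equals some constant $q$, the function $\type_i \mapsto F_{D_i}(\type_i) - q$ is monotone, so the worst-case $\type_i$ is a sample point (or a one-sided limit thereof), and at such a point the Bernstein bound applies with the correct variance proxy $F_{D_i}(\type_i)(1-F_{D_i}(\type_i))$. Putting the pointwise bound, the reduction to sample thresholds, and the union bound over $i \in [n]$ together gives the lemma; no step beyond the basic Bernstein inequality is required, which is why the argument is essentially the one already recorded in \citet{guo2019settling}.
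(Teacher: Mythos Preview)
Your high-level plan --- Bernstein pointwise plus a union bound over finitely many thresholds per coordinate --- is exactly what the paper invokes (it simply says ``by the Bernstein inequality and union bound'' and cites \citet{guo2019settling}). There is, however, a genuine gap in the execution: you propose to take the finitely many thresholds to be the \emph{sample points} $s_{1i},\dots,s_{Ni}$ and then union-bound Bernstein over those. But the sample points are themselves random, so you cannot apply the pointwise Bernstein inequality at $t = s_{ji}$ and then sum the $\delta'$ failure probabilities; Bernstein requires the threshold to be fixed before the data are drawn. The argument in \citet{guo2019settling} instead takes the thresholds to be the \emph{deterministic} quantile grid of $D_i$, i.e.\ points $v_{i,0},\dots,v_{i,N}$ with $F_{D_i}(v_{i,k}) \approx k/N$. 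Bernstein plus a union bound over these $n(N{+}1)$ non-random thresholds is what produces the $\ln(2Nn\delta^{-1})$ factor, and the extension to arbitrary $\type_i$ then uses that between consecutive grid points $F_{D_i}$ changes by at most $1/N$, which is absorbed into the additive $\ln(2Nn\delta^{-1})/N$ term.

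A secondary point: your interpolation step --- ``the worst-case $\type_i$ is a sample point'' --- asserts that if the inequality $|q - p| \le B(p)$ holds at the two ends $p_-,p_+$ of an interval (with $q = F_{E_i}$ fixed and $p = F_{D_i}$ varying), then it holds throughout. This is true, but not merely because $p \mapsto p - q$ is monotone; you also need that $p \mapsto p + B(p)$ is concave and $p \mapsto p - B(p)$ is convex, so that each attains its extremum on $[p_-,p_+]$ at an endpoint. That does hold here because $\sqrt{p(1-p)}$ is concave, but it is worth making explicit.
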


The rest of the subsection shows the stated additive approximation factor under the assumption that the inequality in Lemma~\ref{lem:strong-monotone-D-and-E} holds.

We introduce two auxiliary distribution $\vec{\hat{D}}$ and $\vec{\check{D}}$, where the former serves as an upper bound of $\vec{E}$ and the latter as a lower bound.
Concretely, for any $i \in [n]$, define the CDF of $\hat{D}_i$ as follows:
\begin{equation}
    \label{eqn:strong-monotone-Dhat}
    F_{\hat{D}_i} \big( \type_i \big) 
    = \begin{cases}
        1 & x_i = 1 ~; \\
        \max \bigg\{ 0, F_{D_i} (\type_i) - \sqrt{\frac{2 F_{D_i}(\type_i) ( 1-F_{D_i}(\type_i) ) \ln(2Nn\delta^{-1})}{N} } - \frac{\ln(2Nn\delta^{-1})}{N} \bigg\}
        & 0 \le x_i < 1 ~.
    \end{cases} 
\end{equation}

The case of $x_i = 1$ is defined separately because its CDF must be $1$ for any distribution with support bounded in $[0, 1]$.
The other cases are defined to be the smallest possible value of $F_{E_i}\big(x_i\big)$ according to Lemma~\ref{lem:strong-monotone-D-and-E} and the trivial lower bound of $F_{E_i}\big(x_i\big) \ge 0$.

Similarly, for any $i \in [n]$, define the CDF of $\check{D}_i$ as follows:
\begin{equation}
    \label{eqn:strong-monotone-Dcheck}
    F_{\check{D}_i} \big( \type_i \big)
    = \begin{cases}
        0 & x_i = 0 ~; \\
        \min \bigg\{ 1, F_{D_i} (\type_i) + \sqrt{\frac{2 F_{D_i}(\type_i) ( 1-F_{D_i}(\type_i) ) \ln(2Nn\delta^{-1})}{N} } + \frac{\ln(2Nn\delta^{-1})}{N} \bigg\}
        & 0 < x_i \le 1 ~.
    \end{cases}
\end{equation}

Then, the empirical distribution is sandwiched between the auxiliary distributions by definition.

\begin{lemma}
    \label{lem:strong-monotone-auxiliary-and-E}
    Assuming the inequality in Lemma~\ref{lem:strong-monotone-D-and-E}, we have:
    \[
        \vec{\hat{D}} \succeq \vec{E} \succeq \vec{\check{D}}
        ~.
    \]
\end{lemma}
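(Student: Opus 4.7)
The plan is to unfold the definitions and check the two dominance relations coordinate by coordinate. Since $\vec{\hat{D}}, \vec{E}, \vec{\check{D}}$ are all product distributions, multi-dimensional stochastic dominance reduces to $F_{\hat{D}_i}(\type_i) \le F_{E_i}(\type_i) \le F_{\check{D}_i}(\type_i)$ for every coordinate $i$ and every $\type_i \in [0,1]$. So the whole task is to verify these two pointwise inequalities, using the concentration bound of Lemma~\ref{lem:strong-monotone-D-and-E} together with the trivial bounds $0 \le F_{E_i}(\type_i) \le 1$ and the endpoint values $F_{E_i}(0)\ge 0$, $F_{E_i}(1) = 1$.

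First, I would handle $\vec{\hat{D}} \succeq \vec{E}$. Fix $i$ and $\type_i$. For $0 \le \type_i < 1$, the defining formula \eqref{eqn:strong-monotone-Dhat} makes $F_{\hat{D}_i}(\type_i)$ the maximum of two quantities, and I show each is at most $F_{E_i}(\type_i)$: the quantity $0$ is trivially at most $F_{E_i}(\type_i)$, and the quantity $F_{D_i}(\type_i) - \sqrt{2 F_{D_i}(\type_i)(1-F_{D_i}(\type_i))\ln(2Nn\delta^{-1})/N} - \ln(2Nn\delta^{-1})/N$ is at most $F_{E_i}(\type_i)$ by the one-sided form of the inequality in Lemma~\ref{lem:strong-monotone-D-and-E}. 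At the boundary $\type_i = 1$, $F_{\hat{D}_i}(1)=1=F_{E_i}(1)$ because $E_i$ is supported in $[0,1]$.

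Next, I would mirror the argument for $\vec{E} \succeq \vec{\check{D}}$. For $0 < \type_i \le 1$, the defining formula \eqref{eqn:strong-monotone-Dcheck} makes $F_{\check{D}_i}(\type_i)$ the minimum of two quantities, each at least $F_{E_i}(\type_i)$: the quantity $1$ is trivially $\ge F_{E_i}(\type_i)$, and the quantity $F_{D_i}(\type_i) + \sqrt{2 F_{D_i}(\type_i)(1-F_{D_i}(\type_i))\ln(2Nn\delta^{-1})/N} + \ln(2Nn\delta^{-1})/N$ is $\ge F_{E_i}(\type_i)$ by the other side of Lemma~\ref{lem:strong-monotone-D-and-E}. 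At $\type_i = 0$, $F_{\check{D}_i}(0)=0\le F_{E_i}(0)$.

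Honestly, there is no real obstacle: the auxiliary CDFs were engineered so that $F_{\hat{D}_i}$ is the pointwise worst case of the lower estimate of $F_{E_i}$ (clipped at $0$ and forced to $1$ at the top) and $F_{\check{D}_i}$ is the pointwise worst case of the upper estimate (clipped at $1$ and forced to $0$ at the bottom). The only subtlety worth flagging is that the clipping at the boundaries $\type_i \in \{0,1\}$ is genuinely needed—without forcing $F_{\hat{D}_i}(1)=1$, the raw formula could be strictly less than $1$, which would not even define a valid distribution on $[0,1]$; the endpoint cases in \eqref{eqn:strong-monotone-Dhat} and \eqref{eqn:strong-monotone-Dcheck} resolve this, and the dominance at those points is immediate from $F_{E_i}(0)\ge 0$ and $F_{E_i}(1)=1$.
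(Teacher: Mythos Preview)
Your approach is exactly what the paper intends: it states the lemma holds ``by definition'' and gives no further argument, and your coordinate-wise CDF comparison is the natural unpacking of that claim.

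One small slip to fix: in the $\vec{E} \succeq \vec{\check{D}}$ direction at $\type_i = 0$, the inequality you need is $F_{E_i}(0) \le F_{\check{D}_i}(0)$ (first-order stochastic dominance of $E_i$ over $\check{D}_i$ means $F_{E_i} \le F_{\check{D}_i}$ pointwise), not $F_{\check{D}_i}(0) \le F_{E_i}(0)$ as you wrote. Since the paper sets $F_{\check{D}_i}(0) = 0$, this would require $F_{E_i}(0) = 0$, which can fail if a sample lands exactly at $0$. This is a minor boundary technicality in the paper's definitions rather than a flaw in your strategy, and it has no effect on the downstream Hellinger/total-variation bounds; but you should not claim the wrong-direction inequality as justification.
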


Therefore, we can lower bound the performance of the empirical maximizer, i.e., $\hypo_{\vec{E}}$, on the underlying distribution $\vec{D}$ through a sequence of inequalities below:
\begin{align*}
    \hypo_{\vec{E}} \big( \vec{D} \big)
    &
    \ge \hypo_{\vec{E}} \big( \vec{\hat{D}} \big) - \delta \big( \vec{\hat{D}}, \vec{D} \big)
    && \text{($h_{\vec{E}}$ bounded in $[0, 1]$, Lemma~\ref{lem:total-variation})} \\
    & 
    \ge \hypo_{\vec{E}} \big( \vec{E} \big) - \delta \big( \vec{\hat{D}}, \vec{D} \big) 
    && \text{(strong monotonicity, $\vec{\hat{D}} \succeq \vec{E}$)} \\
    & 
    = \opt \big( \vec{E} \big) - \delta \big( \vec{\hat{D}}, \vec{D} \big) 
    && \text{(definition of $\opt\big(\vec{E}\big)$)} \\
    &
    \ge \opt \big( \vec{\check{D}} \big) - \delta \big( \vec{\hat{D}}, \vec{D} \big) 
    && \text{(weak monotonicity, $\vec{E} \succeq \vec{\check{D}}$)} \\
    & 
    \ge h_{\vec{D}} \big( \vec{\check{D}} \big) - \delta \big( \vec{\hat{D}}, \vec{D} \big) 
    && \textrm{(definition of $\opt\big(\vec{\check{D}}\big)$)} \\
    & 
    \ge h_{\vec{D}} \big( \vec{D} \big) - \delta \big( \vec{\check{D}}, \vec{D} \big) - \delta \big( \vec{\hat{D}}, \vec{D} \big)
    && \text{($h_{\vec{D}}$ bounded in $[0, 1]$, Lemma~\ref{lem:total-variation})} \\
    &
    = \opt \big( \vec{D} \big) - \delta \big( \vec{\check{D}}, \vec{D} \big) - \delta \big( \vec{\hat{D}}, \vec{D} \big) ~.
    &&
    \textrm{(definition of $\opt\big(\vec{D}\big)$)}
\end{align*}

Therefore, it remains to show that with the number of samples stated in the theorem:
\begin{equation}
    \label{eqn:strong-monotone-tv-bound}
    \delta \big( \vec{\check{D}}, \vec{D} \big) \le \frac{\epsilon}{2}
    \quad,\quad 
    \delta \big( \vec{\hat{D}}, \vec{D} \big) \le \frac{\epsilon}{2}
    ~.
\end{equation}

By Lemma~\ref{lem:hellinger-and-total-variation}, it suffices to upper bound the Hellinger distances, as in the following lemmas.

\begin{lemma}
    \label{lem:strong-monotone-Dhat-and-D}
    For any distribution $\vec{D}$ and the corresponding $\vec{\hat{D}}$ defined in Eqn.~\eqref{eqn:strong-monotone-Dhat}, we have:
    \[
        \Hellinger^2 \big( \vec{\hat{D}}, \vec{D} \big) \le O \left( \frac{n}{N} \log \left( \frac{N n}{\delta} \right) \log \left( \frac{N}{\log (Nn\delta^{-1})} \right) \right)
        ~.
    \]
\end{lemma}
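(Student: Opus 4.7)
The plan is to first apply Lemma~\ref{lem:hellinger-decompose} to reduce the $n$-dimensional Hellinger-squared distance to a sum of per-coordinate distances, and then for each coordinate bound $\Hellinger^2(\hat{D}_i, D_i)$ by an integral that depends only on the CDF perturbation, not on $D_i$ itself.

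By Lemma~\ref{lem:hellinger-decompose} together with $1 - \prod_i (1 - x_i) \le \sum_i x_i$, it suffices to show $\Hellinger^2(\hat{D}_i, D_i) = O\bigl( \tfrac{1}{N} \log(Nn\delta^{-1}) \log(N/\log(Nn\delta^{-1})) \bigr)$ for every $i$. Set $\alpha = \ln(2Nn\delta^{-1})/N$, $\beta = \sqrt{2\alpha}$, and $g(u) = u - \beta\sqrt{u(1-u)} - \alpha$, so that $F_{\hat{D}_i}(t) = \max\{0, g(F_{D_i}(t))\}$ for $t < 1$ while $F_{\hat{D}_i}(1) = 1$. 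Let $u^*$ denote the largest root of $g(u) = 0$ in $[0, 1]$; a short computation using $\beta^2 = 2\alpha$ shows $u^* = O(\alpha)$.

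The key step is a quantile-space inequality that is uniform over $D_i$. Consider any ``bin'' $[u_0, u_0 + p] \subseteq [u^*, 1]$ corresponding either to an atom of $D_i$ with mass $p$ or to an infinitesimal slice of its continuous part. The $\hat{D}_i$-mass assigned to the same set of $t$-values equals $\hat{p} = \int_{u_0}^{u_0 + p} g'(u)\,du$. Cauchy-Schwarz then gives $\int_{u_0}^{u_0 + p} \sqrt{g'(u)}\,du \le \sqrt{p \hat{p}}$, which after expanding $(1 - \sqrt{g'(u)})^2$ rearranges to
\[
    \bigl( \sqrt{p} - \sqrt{\hat{p}} \bigr)^2 \le \int_{u_0}^{u_0 + p} \bigl( 1 - \sqrt{g'(u)} \bigr)^2 du.
\]
Summing this over all bins in $[u^*, 1]$, plus separately handling the clipped region $\{F_{D_i}(t) \le u^*\}$ (contributing at most $u^*/2 = O(\alpha)$, since the $D_i$-mass is $u^*$ and the $\hat{D}_i$-mass is zero) and the extra $\alpha$-sized jump of $F_{\hat{D}_i}$ at $t = 1$ (contributing $O(\alpha)$ via $(\sqrt{a+\alpha} - \sqrt{a})^2 \le \alpha$ and $(x-y)^2 \le 2(x-z)^2 + 2(z-y)^2$), yields
\[
    \Hellinger^2(\hat{D}_i, D_i) \le O(\alpha) + \int_{u^*}^1 \bigl( 1 - \sqrt{g'(u)} \bigr)^2 du,
\]
a bound independent of $D_i$. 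The final step is to estimate this integral. One has $g'(u) = 1 - \beta(1-2u)/(2\sqrt{u(1-u)})$; in the bulk range $u \in [C\alpha, 1 - C\alpha]$ for a sufficiently large constant $C$, the expansion $(1 - \sqrt{1-x})^2 = x^2/4 + O(x^3)$ yields an integrand of order $\beta^2 (1-2u)^2/(u(1-u))$, whose integral is $O(\beta^2 \log(1/\alpha))$. The two boundary layers of width $O(\alpha)$ --- near $u^*$, where $g'$ is close to zero, and near $u = 1$, where $g'$ blows up --- each contribute only $O(\alpha)$ more via direct estimation. Since $\beta^2 = 2\alpha$, this gives $\Hellinger^2(\hat{D}_i, D_i) = O(\alpha \log(1/\alpha))$, and summing over $n$ completes the proof.

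The main obstacle will be the bookkeeping in the quantile-space reduction: the bound has to apply uniformly over arbitrary $D_i$ that mixes discrete atoms and continuous pieces, and one must carefully reconcile the clipped region at $u = u^*$ and the extra atom of $\hat{D}_i$ at $t = 1$ with the otherwise $D_i$-independent integral. The integral estimation that follows is standard but delicate, because the integrand is well-behaved only after the two boundary layers have been peeled off.
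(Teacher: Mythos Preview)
Your proposal is correct and follows essentially the same route as the paper: reduce to a per-coordinate bound via Lemma~\ref{lem:hellinger-decompose}, pass to quantile space through the monotone map $g$, and bound $\int_{u^*}^1 (1-\sqrt{g'(u)})^2\,du$ by splitting off an $O(\alpha)$ boundary layer at each end. Your Cauchy--Schwarz bin inequality $(\sqrt{p}-\sqrt{\hat p})^2 \le \int (1-\sqrt{g'})^2$ is exactly the paper's Jensen step (convexity of $(\sqrt{x}-1)^2$), and your treatment of the extra $\alpha$ of mass at $t=1$ via $(x-y)^2\le 2(x-z)^2+2(z-y)^2$ is a clean variant of the paper's direct computation for that atom. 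One minor inaccuracy: near $u^*\in(\alpha,4\alpha)$ the derivative $g'(u)$ is \emph{not} close to zero (it is bounded away from both $0$ and $1$, since $g'(u)\approx 1-\sqrt{\alpha/(2u)}$), so that boundary layer is harmless for a simpler reason than you state; only the layer near $u=1$, where $g'$ genuinely diverges, needs the careful direct estimate.
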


\begin{lemma}
    \label{lem:strong-monotone-Dcheck-and-D}
    For any distribution $\vec{D}$ and the corresponding $\vec{\check{D}}$ defined in Eqn.~\eqref{eqn:strong-monotone-Dcheck}, we have:
    \[
        \Hellinger^2 \big( \vec{\check{D}}, \vec{D} \big) \le O \left( \frac{n}{N} \log \left( \frac{N n}{\delta} \right) \log \left( \frac{N}{\log (Nn\delta^{-1})} \right) \right)
        ~.
    \]
\end{lemma}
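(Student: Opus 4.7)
The plan is to mirror the proof of Lemma~\ref{lem:strong-monotone-Dhat-and-D}, exploiting the symmetry between $\vec{\check{D}}$ and $\vec{\hat{D}}$: both are obtained from $\vec{D}$ by shifting each coordinate's CDF by the Bernstein deviation $g(u) = \sqrt{2u(1-u)\lambda}+\lambda$ with $\lambda = \ln(2Nn\delta^{-1})/N$, just in opposite directions. Applying Lemma~\ref{lem:hellinger-decompose} first reduces the multi-dimensional Hellinger distance to a sum of per-coordinate contributions:
\[
    \Hellinger^2\bigl(\vec{\check{D}}, \vec{D}\bigr) = 1 - \prod_{i=1}^n \bigl(1-\Hellinger^2(\check{D}_i, D_i)\bigr) \le \sum_{i=1}^n \Hellinger^2(\check{D}_i, D_i).
\]
It therefore suffices to establish a one-dimensional bound of $O\bigl(\tfrac{1}{N}\log(\tfrac{Nn}{\delta})\log(\tfrac{N}{\log(Nn\delta^{-1})})\bigr)$ per coordinate and pay an extra factor of $n$.

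For the one-dimensional bound, I would push both $D_i$ and $\check{D}_i$ forward along $F_{D_i}$ to the $u = F_{D_i}(x)$ axis; this transports $D_i$ to the uniform distribution on $[0,1]$ and $\check{D}_i$ to the distribution with CDF $u \mapsto \min\{1, u+g(u)\}$, preserving the Hellinger distance (up to the standard adjustment when $D_i$ has atoms). Thereafter, I would partition $[0,1]$ dyadically based on $\min\{u, 1-u\}$ into shells $S_j = \{u : \min(u, 1-u) \in (2^{-(j+1)}, 2^{-j}]\}$ for $j = 1, \ldots, J$ with $J = O(\log(1/\lambda))$, plus an ``extreme tail'' region $\min(u, 1-u) \lesssim \lambda$ where the truncation $\min\{1,\cdot\}$ may engage. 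In shell $S_j$ the shift $g(u)$ is uniformly $O(\sqrt{2^{-j}\lambda}+\lambda)$; combined with the smoothed chi-square bound in the spirit of Lemma~\ref{lem:reduce-to-chi-square}, this gives a Hellinger contribution of $O(\lambda)$ from $S_j$. Summing over the $J = O(\log(N/\log(Nn\delta^{-1})))$ shells plus the extreme tail yields the required per-coordinate bound.

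The main obstacle will be the careful accounting at the extreme tails $u \to 0$ and $u \to 1$, where the CDF truncation engages and the relative perturbation $g(u)/u$ is no longer small. There the vanilla bound $(\sqrt{a} - \sqrt{b})^2 \le (a-b)^2/a$ blows up and one must fall back on $(\sqrt{a} - \sqrt{b})^2 \le a + b$, which is precisely the source of the extra $\log(N/\log(Nn\delta^{-1}))$ factor in the final bound: the dyadic sum extends only as far as $\min(u, 1-u) \gtrsim \lambda$, after which the trivial bound already dominates. A minor technical point is that $u \mapsto u + g(u)$ ceases to be monotone in a small window near $u = 1$, but one can verify this window lies entirely inside the region where the $\min\{1,\cdot\}$ truncation is already active, so $F_{\check{D}_i}$ remains a valid non-decreasing CDF without further modification.
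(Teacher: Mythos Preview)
Your proposal is correct, and your opening observation already matches the paper's actual proof of Lemma~\ref{lem:strong-monotone-Dcheck-and-D}: the paper simply writes ``by symmetry'' and proves only Lemma~\ref{lem:strong-monotone-Dhat-and-D}. So strictly speaking you have reproduced the paper's proof in your first sentence.

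Where you diverge is in the sketch of the underlying one-dimensional bound, which the paper carries out (for $\vec{\hat{D}}$) differently. The paper does \emph{not} use a dyadic decomposition. Instead, after the same reduction to a single coordinate and the same pushforward to the quantile variable $y = F_{D_i}(x)$, it works directly with the density ratio: writing $F_{\hat D_i}(x) = g(F_{D_i}(x))$, the integrand becomes $(\sqrt{g'(y)}-1)^2$, and the paper evaluates $\int_{F_\ell}^{1} (\sqrt{g'(y)}-1)^2\,dy$ by splitting into just two pieces, $[F_\ell, 1-\Gamma)$ and $[1-\Gamma,1]$, and applying the elementary inequalities $|\sqrt{1+x}-1|\le |x|$ and $\sqrt{1+x}-1 \le \sqrt{x}$ respectively. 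The $\log(1/\Gamma)$ factor emerges from the closed-form integral $\int \frac{\Gamma}{2y(1-y)}\,dy$, which is the continuous analogue of your dyadic sum. Boundary regions (where the truncation engages, and the point mass at $1$) are handled by direct computation, and point masses in the interior are reduced to the continuous case via Jensen. Your dyadic scheme is a valid alternative and arguably more robust to changes in the perturbation function, whereas the paper's argument is shorter and more elementary because the specific $g$ admits a closed-form integral.

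One small imprecision in your sketch: you phrase the per-shell estimate in terms of the \emph{shift} $g(u) = O(\sqrt{2^{-j}\lambda}+\lambda)$, but the Hellinger integrand depends on the density perturbation, i.e., on $g'(u)$. In shell $S_j$ one has $|g'(u)| = O(\sqrt{2^{j}\lambda})$, and the per-shell contribution is $\int_{S_j} (g'(u))^2\,du = O(2^{-j}\cdot 2^{j}\lambda) = O(\lambda)$, matching your claimed bound. This is a wording issue rather than a gap, but you should make the dependence on $g'$ explicit when you write the proof out; invoking Lemma~\ref{lem:reduce-to-chi-square} directly is not quite the right mechanism here since that lemma compares probability masses, not continuous densities.
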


The proofs of the above lemmas, which we include at the end of the subsection for completeness, are analogous to the proof of a similar lemma w.r.t.\ Kullback-Leibler divergence by \citet{guo2019settling}.
The main difference is that the lemma by \citet{guo2019settling} requires additional conditions that lower bound the probability masses of the two endpoints of the support, while ours do not.

As corollaries of the lemma, with a number of samples 
stated in the lemma, we have:
\[
    \Hellinger^2 \big( \vec{\hat{D}}, \vec{D} \big) \le \frac{\epsilon^2}{8} 
    \quad,\quad
    \Hellinger^2 \big( \vec{\check{D}}, \vec{D} \big) \le \frac{\epsilon^2}{8} 
    ~.
\]

Putting together with Lemma~\ref{lem:hellinger-and-total-variation} proves Eqn.~\eqref{eqn:strong-monotone-tv-bound}, which finishes the proof of Theorem~\ref{thm:strong-monotone-upper-bound}.

\begin{proof}[Proof of Lemma~\ref{lem:strong-monotone-Dhat-and-D} and Lemma~\ref{lem:strong-monotone-Dcheck-and-D}]
By symmetry, it suffices to prove one of them.
Below we prove Lemma~\ref{lem:strong-monotone-Dhat-and-D}.

For simplicity of notations in this proof, let $\Gamma = \frac{\ln(2Nn\delta^{-1})}{N}$ be the coefficient that appears in the definition of $\vec{\hat{D}}$, i.e., Eqn.~\eqref{eqn:strong-monotone-Dhat}.
Further define:
\[
    g(y) = y - \sqrt{2\Gamma \cdot y(1-y)} - \Gamma 
    ~.
\]

Then, Eqn.~\eqref{eqn:strong-monotone-Dhat} can be written as:
\begin{equation}
    \label{eqn:strong-monotone-Dhat-rewrite}
    F_{\hat{D}_i}(\type) = \begin{cases}
        1 & x = 1 ~; \\
        \max \big\{ 0, g\big( F_{D_i}(\type) \big) \big\} & 0 \le x < 1 ~.
    \end{cases}
\end{equation}

Further, the inequality in the lemma can be written as:
\[
    \Hellinger^2 \big( \vec{\hat{D}}, \vec{D} \big) \le O \left( n \Gamma \log \frac{1}{\Gamma} \right)
    ~,
\]
or equivalently:
\[
    1 - \Hellinger^2 \big( \vec{\hat{D}}, \vec{D} \big) \ge 1 - O \left( n \Gamma \log \frac{1}{\Gamma} \right)
    ~.
\]

By Lemma~\ref{lem:hellinger-decompose}, we have:
\[
    1 - \Hellinger^2 \big( \vec{\hat{D}}, \vec{D} \big) = \prod_{i=1}^n \big( 1 - \Hellinger^2 \big( \hat{D}_i, D_i \big) \big)
    ~.
\]

Hence, it suffices to show that for any $i \in [n]$:
\[
    1 - \Hellinger^2 \big( \hat{D}_i, D_i \big) \ge 1 - O \left( \Gamma \log \frac{1}{\Gamma} \right)
    ~,
\]
or equivalently:
\[
    \Hellinger^2 \big( \hat{D}_i, D_i \big) \le O \left( \Gamma \log \frac{1}{\Gamma} \right)
    ~.
\]

By definition, the squared Hellinger distance is:
\begin{equation}
    \label{eqn:strong-monotone-squared-hellinger}
    \Hellinger^2 \big( \hat{D}_i, D_i \big) = \frac{1}{2} \int_{x \in [0, 1]} \left( \sqrt{dF_{D_i}(x)} - \sqrt{dF_{\hat{D}_i}(x)} \right)^2
    ~.
\end{equation}

We shall partition $[0, 1]$ into three subsets based on how the CDF of $\hat{D}_i$ is defined in Eqn.~\eqref{eqn:strong-monotone-Dhat-rewrite}: (a) the values whose $F_{\hat{D}_i}(\type) = 0$, (b) $x = 1$ whose $F_{\hat{D}_i}(\type) = 1$, and (c) the rest of the values whose $0 < F_{\hat{D}_i}(\type) < 1$.
Then, we account for their contributions to Eqn.~\eqref{eqn:strong-monotone-squared-hellinger} separately. 

\paragraph{Part (a).}
Consider the values whose CDF is $0$ w.r.t.\ $\hat{D}_i$.
To formally define this subset of values, recall that $g(y) = y - \sqrt{2\Gamma \cdot y(1-y)} - \Gamma$.
Let $F_\ell \in [0, 1]$ be the unique solution for: 
\[
    g(F_\ell) = 0
    ~.
\]

The value of $g(F_\ell)$ is strictly less than $0$ when $F_\ell = \Gamma$, and is strictly greater than $0$ when $F_\ell = 4 \Gamma$.
Hence, we have:
\begin{equation}
    \label{eqn:strong-monotone-cdf-lower-bound}
    \Gamma < F_\ell < 4 \Gamma
    ~.
\end{equation}

Let $\ell$ be the minimum value whose CDF is at least $F_\ell$, i.e.:
\[
    \ell = \inf \big\{ x : F_{D_i}(\type) \ge F_\ell \big\}
    ~.
\]

Then, for values in $[0, \ell)$, we have $F_{\hat{D}_i}(\type) = 0$ and therefore:
\begin{align*}
    \frac{1}{2} \int_{\type \in [0, \ell)} \left( \sqrt{dF_{D_i}(\type)} - \sqrt{dF_{\hat{D}_i}(\type)} \right)^2 
    &
    = \lim_{\type \to \ell-} \frac{1}{2} F_{D_i}(\type) \\
    &
    \le \frac{1}{2} F_\ell 
    && \text{(definition of $\ell$)} \\[2ex]
    &
    < 2 \Gamma
    ~. 
    && \text{(Eqn.~\eqref{eqn:strong-monotone-cdf-lower-bound})}
\end{align*}

\paragraph{Part (b).}
For simplicity of notations, let $f(1) = f_{D_i}(1)$ and $\hat{f}(1) = f_{\hat{D}_i}(1)$ be the probability that $x = 1$ in $D_i$ and $\hat{D}_i$ respectively.
We have: 
\begin{align*}
    \hat{f}(1)
    & 
    = 1 - \lim_{\type \to 1-} F_{\hat{D}_i}(\type) \\
    &
    = 1 - \lim_{\type \to 1-} \bigg( F_{D_i} (\type) - \sqrt{2 \Gamma \cdot F_{D_i}(\type) \big( 1-F_{D_i}(\type) \big)} - \Gamma \bigg) 
    && \text{(Eqn.~\eqref{eqn:strong-monotone-Dhat-rewrite})} \\
    & 
    = f(1) + \sqrt{2 \Gamma \cdot f(1) \big(1-f(1)\big)} + \Gamma
    ~.
\end{align*}

As corollaries, we have:
\[
    \hat{f}(1) \ge \max \big\{ f(1), \Gamma \big\} 
    ~,
\]
and:
\begin{align*}
    \big( \hat{f}(1) - f(1) \big)^2
    &
    = \Gamma \cdot \big( \sqrt{2 f(1) \big(1-f(1)\big)} + \sqrt{\Gamma} \big)^2 \\
    &
    \le \Gamma \cdot \big( \sqrt{2 f(1)} + \sqrt{\Gamma} \big)^2 \\[1ex]
    &
    \le \Gamma \cdot \max \big\{ f(1), \Gamma \big\}
    ~.
\end{align*}

Using the above two inequalities, the contribution from $x = 1$ is at most:
\begin{align*}
    \frac{1}{2} \big( \sqrt{f(1)} - \sqrt{\hat{f}(1)} \big)^2
    & 
    = \frac{1}{2} \frac{\big( \hat{f}(1) - f(1) \big)^2}{\big( \sqrt{f(1)} + \sqrt{\hat{f}(1)} \big)^2} \\
    &
    \le \frac{\big( \hat{f}(1) - f(1) \big)^2}{2 \hat{f}(1)} \\[1ex]
    &
    \le \frac{\Gamma}{2}
    ~.
\end{align*}

\paragraph{Part (c).}
It remains to bound the contribution from values $\type \in [\ell, 1)$.
By Eqn.~\eqref{eqn:strong-monotone-Dhat-rewrite} and the definition of $\ell$, the CDF w.r.t.\ $\hat{D}_i$ of any value in this range is defined by a continuous mapping:
\[
    F_{\hat{D}_i}(\type) = g \big( F_{D_i}(\type) \big)
    ~.
\]

Therefore, the CDFs w.r.t.\ $D_i$ and $\hat{D}_i$ have the same set of discontinuities in this range, i.e., the same set of point masses.
We will first bound the contribution of values in $[\ell, 1)$ under the assumption that both $D_i$ and $\hat{D}_i$ are continuous in this range.
Then, we will demonstrate how to generalize the result to arbitrary distributions by handling the common point masses appropriately.

Under the assumption of continuity, the contribution to the Hellinger distance by this part is:
\begin{align*}
    \frac{1}{2} \int_{\type \in [\ell, 1)} \left( \sqrt{dF_{D_i}(\type)} - \sqrt{dF_{\hat{D}_i}(\type)} \right)^2 
    & 
    = \frac{1}{2} \int_{\type \in [\ell, 1)} \bigg( \sqrt{\frac{dF_{\hat{D}_i(\type)}}{dF_{D_i}(\type)}} - 1 \bigg)^2 d F_{D_i}(\type) \\
    &
    = \frac{1}{2} \int_{\type \in [\ell, 1)} \bigg( \sqrt{g' \big( F_{D_i}(\type) \big)} - 1 \bigg)^2 d F_{D_i}(\type)
    ~.
\end{align*}

By the definition of $g$, we have:
\[
    g'(y) = 1 + \frac{(2 y - 1) \sqrt{\Gamma}}{\sqrt{2 y \big( 1 - y \big)}}
    ~.
\]

Therefore, it suffices to upper bound the following integral:
\[
    \int_{F_\ell}^{1} \bigg( \sqrt{1 + \tfrac{(2 y - 1) \sqrt{\Gamma}}{\sqrt{2 y (1 - y)}}} - 1 \bigg)^2 dy 
\]

We will bound it in $[F_\ell, 1 - \Gamma)$ and $[1 - \Gamma, 1]$ separately.
The former is at most:
\begin{align*}
    \int_{F_\ell}^{1 - \Gamma} \bigg( \sqrt{1 + \tfrac{(2 y - 1) \sqrt{\Gamma}}{\sqrt{2 y \big( 1 - y \big)}}} - 1 \bigg)^2 dy 
    &
    \le \int_{F_\ell}^{1 - \Gamma} \frac{(2 y - 1)^2 \Gamma}{2 y \big( 1 - y \big)} dy 
    && \text{($\big|\sqrt{1+x} - 1\big| \le \big|x\big|$)} \\
    &
    \le \int_{F_\ell}^{1 - \Gamma} \frac{\Gamma}{2 y \big( 1 - y \big)} dy 
    && \textrm{($0 \le y \le 1$)} \\[1ex]
    &
    = \frac{\Gamma}{2} \bigg( \ln \frac{1 - \Gamma}{F_\ell} + \ln \frac{1 - F_\ell}{\Gamma} \bigg) \\[1ex]
    &
    \le \frac{\Gamma}{2} \bigg( \ln \frac{1}{F_\ell} + \ln \frac{1}{\Gamma} \bigg) \\[1ex]
    &
    < \Gamma \ln \frac{1}{\Gamma}
    ~.
    && 
    \text{(Eqn.~\eqref{eqn:strong-monotone-cdf-lower-bound})}
    ~.
\end{align*}

For the latter, it is upper bounded by:
\begin{align*}
    \int_{1 - \Gamma}^1 \bigg( \sqrt{1 + \tfrac{(2 y - 1) \sqrt{\Gamma}}{\sqrt{2 y \big( 1 - y \big)}}} - 1 \bigg)^2 dy 
    &
    \le \int_{1 - \Gamma}^1 \tfrac{(2 y - 1) \sqrt{\Gamma}}{\sqrt{2 y \big( 1 - y \big)}} dy 
    && \text{($\sqrt{1 + x} - 1 \le \sqrt{x}$ for $x > 0$)} \\
    &
    = \sqrt{2(1-\Gamma)} \Gamma \\[2ex]
    &
    \le \sqrt{2} \Gamma
    ~.
\end{align*}

Combining the upper bounds of the integrals over the two intervals, the contribution from part (c) under the assumption of continuity is at most $O\big( \Gamma \log \frac{1}{\Gamma}\big)$.

Finally, consider any point mass $\type^*$ in the two distributions $D_i$ and $\hat{D}_i$.
Let $\bar{y} = F_{D_i}(\type^*)$ and $\underline{y} = \lim_{\type \to \type^*-} F_{D_i}(\type)$.
Then, the probability mass of $\type^*$ w.r.t.\ $D_i$ is $\bar{y} - \underline{y}$, and that w.r.t.\ $\hat{D}_i$ is $g(\bar{y}) - g(\underline{y})$.
Hence, the contribution of $\type^*$ to the Hellinger distance is:
\begin{align*}
    \frac{1}{2} \bigg( \sqrt{\bar{y} - \underline{y}} - \sqrt{g(\bar{y}) - g(\underline{y})} \bigg)^2
    &
    = 
    \frac{1}{2} \bigg( \sqrt{\frac{g(\bar{y}) - g(\underline{y})}{\bar{y} - \underline{y}}} - 1 \bigg)^2 \big( \bar{y} - \underline{y} \big) \\
    &
    = \frac{1}{2} \bigg( \sqrt{\frac{1}{\bar{y} - \underline{y}} \int_{\underline{y}}^{\bar{y}} g'(y) dy } - 1 \bigg)^2 \big( \bar{y} - \underline{y} \big) \\
    &
    \le \frac{1}{2} \int_{\underline{y}}^{\bar{y}} \bigg( \sqrt{g'(y)} - 1 \bigg)^2 dy
    ~.
\end{align*}

The last inequality follows by the convexity of $(\sqrt{x} - 1)^2$ and Jensen's inequality.
The RHS is precisely the contribution by values with CDF in $(\underline{y}, \bar{y}]$ in the continuous case.
Therefore, by applying this argument to all point masses, we reduce the problem to the continuous case.
\end{proof}

\subsection{Applications}

\paragraph{Single-parameter Revenue Maximization.}
As we have discussed at the beginning of the section, strong monotonicity corresponds to strong revenue monotonicity in the context of single-parameter revenue maximization, which is shown by \citet{DevanurHP/STOC/2016}.
In particular, for single-item auction, it follows from Theorem~\ref{thm:strong-monotone-upper-bound} that $\tilde{O}(n \epsilon^{-2})$ samples are sufficient for getting an $\epsilon$-additive approximation when the bidders' valuations are bounded in $[0, 1]$, matching the optimal bound by \citet{guo2019settling}.
The main difference compared with \citet{guo2019settling} lies in that we achieve the optimal upper bound using the empirical maximizer, which corresponds to Myerson's optimal auction w.r.t.\ the empirical distributions, while \citet{guo2019settling} needs to apply appropriate regularization to the empirical distribution and uses the corresponding regularized empirical Myerson's auction.

\begin{theorem}
    \label{thm:single-item-auction}
    In a single-item auction with $n$ bidders whose values are bounded in $[0, 1]$, suppose the number of samples is at least:
    \[
        C \cdot \frac{n}{\epsilon^2} \log \bigg( \frac{n}{\epsilon} \bigg) \log \bigg( \frac{n}{\epsilon \delta} \bigg)
    \]
    for some sufficiently large constant $C > 0$.
    Then, the empirical Myerson's auction is an $\epsilon$-additive approximation with probability at least $1 - \delta$.
\end{theorem}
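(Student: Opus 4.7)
The plan is to derive Theorem~\ref{thm:single-item-auction} as a direct application of Theorem~\ref{thm:strong-monotone-upper-bound}, using single-item revenue maximization as the strongly monotone problem. So the proof is really a verification exercise: instantiate the general setup to the specific auction setting, check the hypotheses, and translate the conclusion back into auction-theoretic language.

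First I would recall from Section~\ref{sec:examples} that single-item revenue maximization fits into the general learning framework by defining, for each DSIC auction $A$, a hypothesis $h_A$ whose value on a type profile $\vec{t}\in[0,1]^n$ equals the revenue of $A$ on $\vec{t}$. Because the valuations are bounded in $[0,1]$ and only a single item is sold, revenue on any profile lies in $[0,1]$, so the range condition of the general model is met. Under the product value distribution $\vec{D}=D_1\times\cdots\times D_n$, the product empirical distribution $\vec{E}$ is the product of the per-coordinate empirical marginals, and the PERM $h_{\vec{E}}$ is by definition the revenue-maximizing DSIC auction with respect to $\vec{E}$, which is exactly Myerson's optimal auction computed on the empirical marginals, i.e.\ the \emph{empirical Myerson's auction}.

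Next I would verify the structural hypothesis of Theorem~\ref{thm:strong-monotone-upper-bound}, namely that single-item revenue maximization is strongly monotone in the sense of Definition~\ref{def:strong-monotone}. This is precisely the notion of strong revenue monotonicity proved by \citet{DevanurHP/STOC/2016}: if $\vec{D}$ stochastically dominates $\vec{\tilde D}$ coordinatewise, then running Myerson's optimal auction for $\vec{\tilde D}$ on valuations drawn from $\vec{D}$ yields revenue at least $\opt(\vec{\tilde D})$. This is exactly the inequality $h_{\vec{\tilde D}}(\vec{D})\ge h_{\vec{\tilde D}}(\vec{\tilde D})=\opt(\vec{\tilde D})$ demanded by Definition~\ref{def:strong-monotone}.

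Finally I would invoke Theorem~\ref{thm:strong-monotone-upper-bound} with the stated sample size $C\cdot\frac{n}{\epsilon^{2}}\log(n/\epsilon)\log(n/(\epsilon\delta))$, concluding that the PERM is an $\epsilon$-additive approximation with probability at least $1-\delta$; by the identification above, the PERM is the empirical Myerson's auction, which yields the theorem. There is no real obstacle here: the only substantive ingredient beyond bookkeeping is the strong revenue monotonicity property, which we import as a black box from \citet{DevanurHP/STOC/2016}. If anything is worth highlighting in the write-up, it is that \emph{no} regularization of the empirical distribution is needed, in contrast with \citet{guo2019settling}—this is inherited from the fact that Theorem~\ref{thm:strong-monotone-upper-bound} analyzes the unmodified PERM directly using the Hellinger-based sandwiching argument of Section~\ref{sec:strong-monotone-upper-bound-proof}.
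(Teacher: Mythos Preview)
Your proposal is correct and follows essentially the same route as the paper: the paper also derives Theorem~\ref{thm:single-item-auction} directly from Theorem~\ref{thm:strong-monotone-upper-bound} by invoking strong revenue monotonicity from \citet{DevanurHP/STOC/2016}, identifying the PERM with the empirical Myerson auction, and emphasizing that no regularization of the empirical distribution is required in contrast with \citet{guo2019settling}.
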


\paragraph{Prophet Inequality.}
In the context of prophet inequality, each hypothesis corresponds to a sequence of thresholds, one for each round, such that the algorithm accepts the first reward that is greater than or equal to the corresponding threshold.
We show that this problem satisfies strong monotonicity;
the proof is deferred to Appendix~\ref{app:strong-monotone-prophet}.

\begin{lemma}
    \label{lem:strong-monotone-prophet}
    The problem of prophet inequality is strongly monotone.
\end{lemma}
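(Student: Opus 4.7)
The plan is a hybrid-coordinate argument. Let $\tau_1, \ldots, \tau_n$ be the thresholds of the optimal hypothesis $\hypo_{\vec{\tilde{D}}}$, which is a threshold policy by the standard dynamic-programming solution to the optimal stopping problem. By transitivity of stochastic dominance, it suffices to prove $\hypo_{\vec{\tilde{D}}}(\vec{D}) \ge \hypo_{\vec{\tilde{D}}}(\vec{\tilde{D}})$ in the special case where $\vec{D}$ and $\vec{\tilde{D}}$ differ in only a single coordinate $k$, i.e., $D_i = \tilde{D}_i$ for all $i \neq k$, while $D_k$ stochastically dominates $\tilde{D}_k$; chaining $n$ such swaps then gives the full result.

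Fix such a $k$. Since the first $k-1$ coordinates are identical, both the reward collected when the policy stops before round $k$ and the probability of reaching round $k$ at all are the same under $\vec{D}$ and $\vec{\tilde{D}}$. Conditional on reaching round $k$, let $U$ denote the expected reward of running the suffix thresholds $\tau_{k+1}, \ldots, \tau_n$ on the remaining coordinates; this too is unchanged between the two distributions, since coordinates $k+1, \ldots, n$ are identical. The crucial step invokes the optimality of $\hypo_{\vec{\tilde{D}}}$ for $\vec{\tilde{D}}$: by the DP characterization of optimal stopping, the optimal threshold at round $k$ equals the expected value-to-go, i.e., $\tau_k = U$. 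With this identity in hand, the expected conditional reward under $D_k$ equals
\[
    \E_{t_k \sim D_k}\bigl[ t_k \, \mathbf{1}[t_k \ge \tau_k] \bigr] + \tau_k \cdot \Pr_{t_k \sim D_k}\bigl[ t_k < \tau_k \bigr] = \tau_k + \E_{t_k \sim D_k}\bigl[ (t_k - \tau_k)^+ \bigr]~,
\]
where $(x)^+ \defeq \max\{x, 0\}$, and analogously for $\tilde{D}_k$. Subtracting the two expressions, the $\tau_k$ terms cancel and the increase in conditional reward is exactly $\E_{D_k}[(t_k - \tau_k)^+] - \E_{\tilde{D}_k}[(\tilde{t}_k - \tau_k)^+]$. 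Since $(x - \tau_k)^+$ is non-decreasing in $x$ and $D_k$ stochastically dominates $\tilde{D}_k$, this quantity is non-negative, completing the single-coordinate step.

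The main obstacle, to my mind, is pinpointing the identity $\tau_k = U$ that makes everything collapse. A naive coupling approach---matching quantiles of $D_k$ and $\tilde{D}_k$ to enforce $t_k \ge \tilde{t}_k$ pointwise, then running the same policy on both---fails because the coupled stopping times need not coincide: the dominant sequence may stop in some round $i$ while the dominated one stops in a later round $j > i$, and from $t_i \ge \tau_i$ and $\tilde{t}_j \ge \tau_j$ one cannot conclude $t_i \ge \tilde{t}_j$. The DP-optimality of the thresholds of $\hypo_{\vec{\tilde{D}}}$ is precisely what converts the otherwise-problematic term $U \cdot (\Pr[\tilde{t}_k < \tau_k] - \Pr[t_k < \tau_k])$, which points against the desired inequality, into the clean monotone functional $\E[(x - \tau_k)^+]$.
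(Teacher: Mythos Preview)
Your core argument is correct and in fact cleaner than the paper's at the key step, but the chaining needs one more observation. The single-coordinate lemma you state---$h_{\vec{\tilde{D}}}(\vec{D}) \ge h_{\vec{\tilde{D}}}(\vec{\tilde{D}})$ when $\vec{D}$ and $\vec{\tilde{D}}$ differ only at coordinate $k$---does not chain as written: at an intermediate hybrid step you must compare $h_{\vec{\tilde{D}}}$ on two neighbors $\vec{D}^{(k-1)}$ and $\vec{D}^{(k)}$, yet $h_{\vec{\tilde{D}}}$ is optimal for the original $\vec{\tilde{D}} = \vec{D}^{(0)}$, not for $\vec{D}^{(k-1)}$, so your single-coordinate lemma (which takes the hypothesis to be optimal for the dominated member of the pair) is not literally applicable. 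The fix is implicit in your own computation: swap coordinates left to right, so that at step $k$ the suffix of both hybrids is still $\tilde{D}_{k+1},\dots,\tilde{D}_n$. Since the backward-DP threshold $\tau_k$ depends only on that suffix, the identity $\tau_k = U$ continues to hold at every hybrid step, and your $(t_k-\tau_k)^+$ calculation goes through verbatim.

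The paper takes a different route: a backward induction on $i$ proving $h_{\ge i}(\vec{D}) \ge h_{\ge i}(\vec{\tilde{D}})$ for the suffix rewards, handling all $n$ coordinates simultaneously rather than via hybrids. Its inductive step uses only the weaker consequence of optimality $\E_{\tilde{D}_i}[t_i \mid t_i \ge \theta_i] \ge h_{\ge i+1}(\vec{\tilde{D}})$ rather than your exact identity $\theta_i = h_{\ge i+1}(\vec{\tilde{D}})$, and its chain of inequalities passes through the intermediate claim $\E_{D_i}[t_i \mid t_i \ge \theta_i] \ge \E_{\tilde{D}_i}[t_i \mid t_i \ge \theta_i]$, which is \emph{not} in general a consequence of first-order stochastic dominance (conditional means above a fixed threshold need not be ordered). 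Your $(t_k-\tau_k)^+$ decomposition sidesteps this entirely by collapsing the comparison to a single non-decreasing functional of $t_k$, so your route is the more robust one at that step.
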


As a corollary of Lemma~\ref{lem:strong-monotone-prophet}, Theorem~\ref{thm:strong-monotone-upper-bound}, and the fact that the optimal thresholds achieve at least one half of the expected max reward (e.g., \citet{Samuel-Cahn/AnnProbab/1984}),%
\footnote{In fact, is it known that an appropriate fixed threshold can achieve at least one half of the expected max.}
we get an $\tilde{O} \big( n \epsilon^{-2} \big)$ sample complexity upper bound.

\begin{theorem}
    \label{thm:prophet}
    For any instance of prophet inequality in which the rewards are bounded in $[0, 1]$, suppose the number of samples is at least:
    \[
        C \cdot \frac{n}{\epsilon^2} \log \bigg( \frac{n}{\epsilon} \bigg) \log \bigg( \frac{n}{\epsilon \delta} \bigg)
    \]
    for some sufficiently large constant $C > 0$.
    Then, the expected reward by the empirically optimal thresholds is an $\epsilon$-additive approximation compared to the optimal thresholds and thus, is at least half of the expected max reward minus $\epsilon$.
\end{theorem}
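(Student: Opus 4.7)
The plan is to derive Theorem~\ref{thm:prophet} as a direct corollary of the general machinery already developed: Theorem~\ref{thm:strong-monotone-upper-bound} gives the sample complexity bound for any strongly monotone problem with rewards in $[0,1]$, and Lemma~\ref{lem:strong-monotone-prophet} asserts exactly that prophet inequality falls into this framework. So most of the work is conceptual bookkeeping rather than new technical content.

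First I would fix the hypothesis class precisely. Following the discussion preceding the theorem, I would let $\hypoclass$ be the class of hypotheses indexed by threshold sequences $(\tau_1,\dots,\tau_n)$, where $\hypo_{\bm{\tau}}(\vec{\type})$ is the reward obtained by the strategy that accepts the first round $i$ with $\type_i \ge \tau_i$. Since $\type_i \in [0,1]$, every such $\hypo_{\bm{\tau}}$ maps $\vec{\typespace}$ to $[0,1]$, placing the problem in the general learning framework of Section~\ref{sec:preliminary}. The PERM with respect to this class is, by definition, the threshold sequence that maximizes the expected reward under the product empirical distribution $\vec{E}$, i.e., the empirically optimal thresholds.

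Next I would invoke Lemma~\ref{lem:strong-monotone-prophet} (proved in Appendix~\ref{app:strong-monotone-prophet}) to conclude that $\hypoclass$ is strongly monotone in the sense of Definition~\ref{def:strong-monotone}. At that point the hypotheses of Theorem~\ref{thm:strong-monotone-upper-bound} are satisfied: $\hypoclass$ is strongly monotone, $\vec{D} = D_1\times\dots\times D_n$ is a product distribution over $[0,1]^n$, and rewards are bounded in $[0,1]$. Applying the theorem with the stated sample size $C\cdot\frac{n}{\epsilon^2}\log(\tfrac{n}{\epsilon})\log(\tfrac{n}{\epsilon\delta})$ yields that with probability at least $1-\delta$, the empirically optimal thresholds $\bm{\tau}^{\vec{E}}$ satisfy $\hypo_{\bm{\tau}^{\vec{E}}}(\vec{D}) \ge \opt_{\hypoclass}(\vec{D}) - \epsilon$.

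Finally I would convert the additive guarantee against $\opt_{\hypoclass}(\vec{D})$ into a guarantee against $\tfrac{1}{2}\E_{\vec{\type}\sim\vec{D}}[\max_i \type_i]$. By the classical prophet inequality of \citet{Samuel-Cahn/AnnProbab/1984}, there exists a threshold policy achieving at least half of the expected max reward, so $\opt_{\hypoclass}(\vec{D}) \ge \tfrac{1}{2}\E[\max_i \type_i]$. Chaining the two inequalities completes the proof. I do not expect a real obstacle here: the only point worth double-checking is that the hypothesis class is indeed restricted to threshold policies (which makes the PERM well-defined as ``empirically optimal thresholds'' and ensures Lemma~\ref{lem:strong-monotone-prophet} applies), and that its range is $[0,1]$ so that Theorem~\ref{thm:strong-monotone-upper-bound} is applicable off the shelf.
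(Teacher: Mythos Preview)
Your proposal is correct and mirrors the paper's own argument: the theorem is obtained precisely as a corollary of Lemma~\ref{lem:strong-monotone-prophet} (strong monotonicity of prophet inequality), Theorem~\ref{thm:strong-monotone-upper-bound} (the general sample complexity bound for strongly monotone problems), and the classical Samuel-Cahn prophet inequality giving the $\tfrac{1}{2}$ factor. Your bookkeeping about the hypothesis class being threshold sequences with range $[0,1]$ matches the paper's setup in Section~\ref{sec:strong-monotone}.
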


\paragraph{Prophet Inequality for i.i.d.\ Rewards.}
If the rewards are i.i.d., \citet{CorreaFHOV/EC/2017} prove an improved prophet inequality that achieves at least a $0.745$ factor of the expected max reward.
The strong monotonicity of prophet inequality for i.i.d.\ rewards follows as a special case of Lemma~\ref{lem:strong-monotone-prophet}.
Therefore, we get the same $\tilde{O} \big( n \epsilon^{-2} \big)$ sample complexity upper bound, matching the lower bound by \citet{correa2018prophet}.

\begin{theorem}
    \label{thm:iid-prophet}
    For any instance of prophet inequality with i.i.d.\ rewards bounded in $[0, 1]$, suppose the number of sample rewards (rather than reward vectors) is at least:
    \[
        C \cdot \frac{n}{\epsilon^2} \log \bigg( \frac{n}{\epsilon} \bigg) \log \bigg( \frac{n}{\epsilon \delta} \bigg)
    \]
    for some sufficiently large constant $C > 0$.
    Then, the expected reward by the empirically optimal thresholds is an $\epsilon$-additive approximation compared to the optimal thresholds and thus, is at least a $0.745$ factor of the expected max reward minus $\epsilon$.
\end{theorem}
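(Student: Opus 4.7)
The plan is to reduce Theorem~\ref{thm:iid-prophet} to the i.i.d.\ version of Theorem~\ref{thm:strong-monotone-upper-bound} (the Remark following its statement). First, strong monotonicity is free: the i.i.d.\ prophet inequality with common distribution $D^*$ is the special case of Lemma~\ref{lem:strong-monotone-prophet} in which $D_i = D^*$ for every $i$, so the optimal-threshold hypothesis $\hypo_{\vec{\tilde{D}}}$ computed from any $\vec{\tilde{D}}$ satisfies $\hypo_{\vec{\tilde{D}}}\big(\vec{D}\big) \ge \opt\big(\vec{\tilde{D}}\big)$ whenever $\vec{D} \succeq \vec{\tilde{D}}$.

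Next, I would run the sandwiching argument of Section~\ref{sec:strong-monotone-upper-bound-proof} with the i.i.d.\ empirical distribution. Given $N$ sample rewards from $D^*$, let $\hat{D}^*$ be the uniform distribution over them and set $\vec{E} = \hat{D}^* \times \cdots \times \hat{D}^*$ (an honest product distribution for the purpose of evaluating any hypothesis $h$). The key concentration step replaces Lemma~\ref{lem:strong-monotone-D-and-E} by a \emph{single} Bernstein bound applied to $F_{\hat{D}^*}(t)$, with a union bound only over values $t$ rather than over $(i,t)$ pairs. This yields, with probability at least $1-\delta$ and uniformly in $t$,
\[
    \big| F_{D^*}(t) - F_{\hat{D}^*}(t) \big| \le \sqrt{\tfrac{2 F_{D^*}(t)(1-F_{D^*}(t))\ln(2N/\delta)}{N}} + \tfrac{\ln(2N/\delta)}{N}.
\]
Define one-dimensional auxiliary distributions $\hat{D}^*_{\text{up}}$ and $\hat{D}^*_{\text{lo}}$ as in Eqns.~\eqref{eqn:strong-monotone-Dhat} and \eqref{eqn:strong-monotone-Dcheck}, then set $\vec{\hat{D}} = (\hat{D}^*_{\text{up}})^{\times n}$ and $\vec{\check{D}} = (\hat{D}^*_{\text{lo}})^{\times n}$. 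By construction $\vec{\hat{D}} \succeq \vec{E} \succeq \vec{\check{D}}$ coordinate-wise, so the inequality chain from Section~\ref{sec:strong-monotone-upper-bound-proof} goes through verbatim and gives $\hypo_{\vec{E}}(\vec{D}) \ge \opt(\vec{D}) - \TV(\vec{\hat{D}},\vec{D}) - \TV(\vec{\check{D}},\vec{D})$, using strong monotonicity on the upper sandwich and weak monotonicity on the lower one.

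Finally, Lemmas~\ref{lem:strong-monotone-Dhat-and-D}--\ref{lem:strong-monotone-Dcheck-and-D} apply directly to the product distributions $\vec{\hat{D}}$ and $\vec{\check{D}}$; combined with Lemma~\ref{lem:hellinger-and-total-variation}, the stated choice $N \ge C\cdot \tfrac{n}{\epsilon^2}\log(\tfrac{n}{\epsilon})\log(\tfrac{n}{\epsilon\delta})$ forces both total variations below $\epsilon/2$, yielding $\hypo_{\vec{E}}(\vec{D}) \ge \opt(\vec{D}) - \epsilon$. The $0.745$-factor conclusion then follows by chaining with the guarantee of \citet{CorreaFHOV/EC/2017}: $\opt(\vec{D}) \ge 0.745\cdot \E[\max_i \type_i]$.

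The main (very mild) obstacle is purely bookkeeping: I have to make sure that using a single empirical CDF across all $n$ coordinates does not break the product structure of $\vec{E}$ used in the Hellinger decomposition (Lemma~\ref{lem:hellinger-decompose}). This is fine because, once $\hat{D}^*$ is fixed conditional on the sample, $\vec{E}$ is by definition an $n$-fold product; the high-probability event from the i.i.d.\ Bernstein bound conditions $\hat{D}^*$, and the rest of the argument is a statement about the deterministic product distribution $\vec{E}$. No new ideas beyond Section~\ref{sec:strong-monotone-upper-bound-proof} are required.
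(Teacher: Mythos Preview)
Your proposal is correct and follows essentially the same route as the paper: invoke strong monotonicity of prophet inequality as a special case of Lemma~\ref{lem:strong-monotone-prophet}, apply the i.i.d.\ Remark to Theorem~\ref{thm:strong-monotone-upper-bound}, and finish with the $0.745$ guarantee of \citet{CorreaFHOV/EC/2017}. The paper does not spell out the i.i.d.\ variant of the sandwiching argument explicitly, so your expansion (single empirical CDF, $n$-fold product for $\vec{E}$, $\vec{\hat{D}}$, $\vec{\check{D}}$, then Lemmas~\ref{lem:strong-monotone-Dhat-and-D}--\ref{lem:strong-monotone-Dcheck-and-D}) is exactly the intended fill-in; the only cosmetic slip is that after dropping the $n$ from the union bound you should either redefine $\Gamma$ accordingly or simply keep the paper's $\Gamma = \ln(2Nn\delta^{-1})/N$ throughout, since the stated sample bound already absorbs the extra $\log n$.
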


As mentioned in Section~\ref{sec:introduction}, the setting of \citet{correa2018prophet} is different from ours in that they consider unbounded distributions and multiplicative approximation.
Indeed, we focus on bounded-support distributions and additive approximation in the main text of the paper in order to develop a generalization theory that requires minimum knowledge of the structure of the problems.
Nonetheless, Appendix~\ref{app:prophet-unbounded} demonstrates how to combine the techniques in this paper and the special structures of the prophet inequality with i.i.d.\ rewards to get the same $\tilde{O}(n \epsilon^{-2})$ sample complexity upper bound in the setting of \citet{correa2018prophet}, addressing an open problem therein.%
\footnote{It is explicitly stated as an open question in the talk at EC 2019.}

\paragraph{Pandora's Problem.}
An algorithm for the Pandora's problem is a mapping from the history of observed rewards to either one of the unopened boxes, or the decision to stop and take the best observed reward.
Since the former has exponentially many possibilities even after discretization, the na\"{i}ve upper bound on the size of the hypothesis class is doubly exponential.
Nonetheless, we show that the problem is strongly monotone, highlighting that strong monotonicity is a structural property without any obvious connection to the complexity/simplicity of the hypothesis class.
The proof is deferred to Appendix~\ref{app:strong-monotone-pandora}.

\begin{lemma}
    \label{lem:strong-monotone-pandora}
    Pandora's problem is strongly monotone.
\end{lemma}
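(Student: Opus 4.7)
The plan is to use Weitzman's classical characterization of the optimal policy in Pandora's problem. The hypothesis $\hypo_{\vec{\tilde{D}}}$ opens boxes in decreasing order of the reservation values $\sigma_i$ defined by $\E_{\tilde{t}_i \sim \tilde{D}_i}[(\tilde{t}_i - \sigma_i)^+] = c_i$, and stops as soon as the maximum revealed reward is at least $\sigma_j$ for every still-unopened box $j$. My proof will combine a pointwise identity for this policy with the monotonicity of $(\cdot - \sigma_i)^+$ under stochastic dominance.

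The key pointwise identity I intend to establish is: writing $O(\vec{t})$ for the set of boxes opened on realization $\vec{t}$,
\[
    \max_{j \in O(\vec{t})} t_j - \sum_{j \in O(\vec{t})} (t_j - \sigma_j)^+ = \max_{i} \min(t_i, \sigma_i),
\]
for every $\vec{t}$. Indexing the boxes in the opening order $1, 2, \ldots$ and letting $k$ be the stopping time, the fact that the algorithm did not stop after any of the first $k-1$ opens gives $t_j < \sigma_k \le \sigma_j$ for every $j < k$, so $(t_j - \sigma_j)^+ = 0$ for $j < k$ and the sum on the left collapses to $(t_k - \sigma_k)^+$; a short case split on whether $t_k \le \sigma_k$ then equates the left side with $\max_{j \le k} \min(t_j, \sigma_j)$. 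Finally, the stopping condition $\max_{j \le k} t_j \ge \sigma_{k+1}$ implies $\min(t_i, \sigma_i) \le \sigma_i \le \sigma_{k+1} \le \max_{j \in O(\vec{t})} \min(t_j, \sigma_j)$ for every unopened $i$, upgrading the identity to the claimed form over all $i$.

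Equipped with this identity, I take expectations of the reward $\max_{j \in O(\vec{t})} t_j - \sum_{j \in O(\vec{t})} c_j$ under $\vec{t} \sim \vec{D}$ and obtain
\[
    \hypo_{\vec{\tilde{D}}}(\vec{D}) = \E_{\vec{t} \sim \vec{D}}\big[\max_{i} \min(t_i, \sigma_i)\big] + \sum_j \Pr\big[j \in O(\vec{t})\big] \cdot \big(\E_{D_j}[(t_j - \sigma_j)^+] - c_j\big).
\]
The decomposition of the second expectation uses that the event $\{j \in O(\vec{t})\}$ is measurable with respect to $\{t_{j'} : \sigma_{j'} > \sigma_j\}$ alone and is therefore independent of $t_j$ under the product structure of $\vec{D}$. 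Because $D_j \succeq \tilde{D}_j$ and $(\cdot - \sigma_j)^+$ is non-decreasing, $\E_{D_j}[(t_j - \sigma_j)^+] \ge \E_{\tilde{D}_j}[(t_j - \sigma_j)^+] = c_j$, so every summand is non-negative. Coupling $\vec{t} \succeq \vec{\tilde{t}}$ by the quantile transform and exploiting that $\min(\cdot, \sigma_i)$ is coordinatewise non-decreasing yields $\E_{\vec{t} \sim \vec{D}}[\max_i \min(t_i, \sigma_i)] \ge \E_{\vec{\tilde{t}} \sim \vec{\tilde{D}}}[\max_i \min(\tilde{t}_i, \sigma_i)] = \opt(\vec{\tilde{D}})$, where the last equality is Weitzman's theorem applied on $\vec{\tilde{D}}$ itself. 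Chaining these inequalities gives $\hypo_{\vec{\tilde{D}}}(\vec{D}) \ge \opt(\vec{\tilde{D}})$.

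The main obstacle I anticipate is verifying the pointwise identity: it looks superficially like the standard Weitzman upper bound but holds with equality and crucially relies on the specific stopping behavior to rule out intermediate opens that would contribute to the $(t_j - \sigma_j)^+$ sum. A direct attempt at a coordinatewise coupling argument does not work since the reward of $\hypo_{\vec{\tilde{D}}}$ is not a monotone function of $\vec{t}$ (a larger $t_j$ can trigger earlier stopping and forgo a much larger reward later); the identity is precisely what lets us sidestep this non-monotonicity by reducing to the aggregate quantity $\max_i \min(t_i, \sigma_i)$, which is monotone. Once the identity is in hand, the remainder of the argument reduces to the independence of opening indicators from $t_j$ and to stochastic dominance tested against the non-decreasing function $(\cdot - \sigma_j)^+$.
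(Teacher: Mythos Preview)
Your argument is correct and proceeds along a genuinely different line from the paper's. The paper establishes strong monotonicity by backward induction on the stage $i$ of Weitzman's policy: it shows (via a three-case split at each step, using the optimality of the thresholds on $\vec{\tilde D}$) that the conditional value $h_{S_{\vec{\tilde D}}}(\vec D \mid U_i = u)$ weakly exceeds $h_{S_{\vec{\tilde D}}}(\vec{\tilde D} \mid U_i = u')$ whenever $u \ge u'$, and then instantiates $i=0$. You instead invoke the ``capped value'' identity in the Kleinberg--Weinberg style: for the index policy one has, pointwise,
\[
    \text{reward}(\vec t) \;=\; \max_i \min(t_i,\sigma_i) \;+\; \sum_{j\in O(\vec t)}\big[(t_j-\sigma_j)^+ - c_j\big],
\]
which reduces strong monotonicity to two immediate facts under stochastic dominance---the coordinatewise monotonicity of $\max_i\min(t_i,\sigma_i)$, and $\E_{D_j}[(t_j-\sigma_j)^+]\ge c_j$---together with the independence of the opening indicator $\mathbf{1}[j\in O(\vec t)]$ from $t_j$ under the product law. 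Your route is shorter and more conceptual, and makes transparent exactly why the non-monotonicity of the realized reward in $\vec t$ does not obstruct the result; the paper's induction is more self-contained (it does not need the identity as a lemma) but has to track the conditional value function through cases. Both proofs rely on the same structural input, namely Weitzman's characterization of the optimal policy and its reservation values $\sigma_i$.
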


Recall that in Section~\ref{sec:preliminary} we use the simple treatment of defining the value of a hypothesis to be the value of the corresponding algorithm plus $n$ and then scaled by $\frac{1}{n+1}$ to normalize its range to be $[0, 1]$.
Therefore, to get an $\epsilon$-additive approximation in Pandora's problem, we need a $\frac{\epsilon}{n+1}$-additive approximation w.r.t.\ $\hypoclass$.
As a corollary of Lemma~\ref{lem:strong-monotone-pandora} and Theorem~\ref{thm:strong-monotone-upper-bound}, we get an $\tilde{O}(n^3 \epsilon^{-2})$ sample complexity bound.
See Appendix~\ref{app:pandora-improved} for an analysis tailored for Pandora's problem to get the following optimal bound.

\begin{theorem}
    \label{thm:pandora}
    For any instance of Pandora's problem in which the rewards are bounded in $[0, 1]$, suppose the number of samples is at least:
    \[
        C \cdot \frac{n}{\epsilon^2} \log^2 \bigg( \frac{1}{\epsilon} \bigg) \log \bigg( \frac{n}{\epsilon} \bigg) \log \bigg( \frac{n}{\epsilon \delta} \bigg)
    \]
    for some sufficiently large constant $C > 0$.
    Then, we can learn an $\epsilon$-additive approximate algorithm from the samples.
    Further, to learn such an algorithm, the number of samples must be at least:
    \[
        c \cdot \frac{n}{\epsilon^2}
    \]
    for some sufficiently small constant $c > 0$.
\end{theorem}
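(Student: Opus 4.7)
The overall plan is to leverage Lemma~\ref{lem:strong-monotone-pandora}, which establishes strong monotonicity for Pandora's problem, together with the machinery developed in Section~\ref{sec:strong-monotone-upper-bound-proof}. A naive application of Theorem~\ref{thm:strong-monotone-upper-bound} via the normalization by $n+1$ used in Section~\ref{sec:preliminary} only yields $\tilde{O}(n^3/\epsilon^2)$. To reach the optimal $\tilde{O}(n/\epsilon^2)$ upper bound one needs a Pandora-specific refinement that avoids the $(n{+}1)$ blow-up coming from the range $[-n,1]$ of the unnormalized hypotheses.

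For the upper bound, I would recycle the sandwich construction of Section~\ref{sec:strong-monotone-upper-bound-proof} verbatim: build $\vec{\hat{D}} \succeq \vec{E} \succeq \vec{\check{D}}$ via Lemma~\ref{lem:strong-monotone-D-and-E}, apply Lemmas~\ref{lem:strong-monotone-Dhat-and-D} and~\ref{lem:strong-monotone-Dcheck-and-D} to obtain $\Hellinger^2(\vec{\hat{D}},\vec{D}), \Hellinger^2(\vec{\check{D}},\vec{D}) = O\big(\tfrac{n}{N}\log(\cdot)\log(\cdot)\big)$, and reproduce the chain of inequalities that lower-bounds $\hypo_{\vec{E}}(\vec{D})$ using strong monotonicity, weak monotonicity, and the definitions of the two optima. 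The only place where the $[0,1]$ range was used in that chain is the two applications of Lemma~\ref{lem:total-variation} that pick up $\delta(\vec{\hat{D}},\vec{D})$ and $\delta(\vec{\check{D}},\vec{D})$; replacing these by the unnormalized range $[-n,1]$ would cost the $(n{+}1)$ factor. My plan is to replace those two steps by a Pandora-specific bound using Weitzman's reformulation: for any policy with reservation values $\sigma_i$, the expected value on any product distribution $\vec{P}$ is dominated by $\E_{\vec{t}\sim \vec{P}}\big[\max_i \min(t_i,\sigma_i)\big]$, with equality when $\vec{\sigma}$ is Weitzman-optimal for $\vec{P}$. Since $\max_i \min(t_i,\sigma_i)\in[0,1]$, this auxiliary functional admits the $\delta$-bound without any $n$ blow-up. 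Using it both as an upper bound on $\hypo_{\vec{E}}(\vec{D})$ and (with the Weitzman-optimal $\vec{\sigma}$ of $\vec{D}$) as an equality for $\opt(\vec{D})$ allows the chain of inequalities to go through with range $[0,1]$. The extra $\log^2(1/\epsilon)$ factor relative to Theorem~\ref{thm:strong-monotone-upper-bound} plausibly arises from a preliminary discretization/truncation of the reservation values $\sigma_i$ to a grid of size $\tilde O(1/\epsilon)$, which is needed to argue that the Weitzman-optimal reservation values on $\vec{E}$ are $\epsilon$-close in effect to those on $\vec{D}$.

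For the lower bound, I would inherit the $\Omega(n/\epsilon^2)$ bound directly from a simpler strongly monotone problem. The cleanest route is to specialize Pandora's problem to all costs $c_i=0$ and a box distribution supported on a single nontrivial coordinate, which makes the best-reward decision reduce to the single-item revenue maximization lower bound of \citet{guo2019settling} (the one underlying Theorem~\ref{thm:strong-monotone-lower-bound}). Alternatively, one can embed an instance of prophet inequality with two-point reward distributions and zero costs, which inherits the same lower bound from Theorem~\ref{thm:strong-monotone-lower-bound}.

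The main obstacle is the refinement of the range-dependent step: turning a bound of the form $|\hypo(\vec{D})-\hypo(\vec{\hat{D}})|\le (n{+}1)\delta(\vec{D},\vec{\hat{D}})$ into one that scales like $\delta$ alone. The Weitzman route above is the natural candidate, but it only gives an upper bound on the value of an algorithm, so some care is needed to insert it on the correct side of each inequality in the strong-monotonicity chain; handling the mismatch between the inequality Weitzman gives and the equality one needs is where the $\log^2(1/\epsilon)$ factor, and the bulk of the technical effort, is expected to live.
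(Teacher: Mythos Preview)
Your proposal has a genuine gap in both halves.

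\textbf{Lower bound.} Setting all costs $c_i=0$ makes Pandora's problem trivial: the optimal algorithm opens every box for free and takes the maximum reward, which requires \emph{zero} samples. No nontrivial lower bound can be inherited this way, and you cannot embed prophet inequality either, since Pandora allows recall and arbitrary opening order once costs vanish. The paper instead builds a dedicated hard family: each of the $n$ boxes has cost $1/n$ and reward distribution either $D^+$ (reward $1$ with probability $(1+\epsilon)/n$) or $D^-$ (reward $1$ with probability $(1-\epsilon)/n$). The optimal policy opens exactly the $D^+$ boxes; each misclassified box costs $\Omega(\epsilon/n)$ in expected value. Since the per-coordinate squared Hellinger distance between $D^+$ and $D^-$ is $O(\epsilon^2/n)$, fewer than $c\,n/\epsilon^2$ samples force $\Omega(n)$ mistakes on a random instance, hence $\Omega(\epsilon)$ error.

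\textbf{Upper bound.} Your Weitzman surrogate $g_{\vec{\sigma}}(\vec{P})=\E_{\vec{P}}[\max_i\min(t_i,\sigma_i)]$ does \emph{not} dominate the value of the index policy with thresholds $\vec{\sigma}$ on an arbitrary $\vec{P}$: the amortization identity requires $\vec{\sigma}$ to be the Weitzman indices \emph{of $\vec{P}$}. (Already with $n=1$, cost $1/2$, $t_1\equiv 1$, and $\sigma_1=0.1$, the policy's value is $0.5$ while $g_{\sigma}(P)=0.1$.) So the step where you would replace the range-$[-n,1]$ total-variation penalty by a range-$[0,1]$ penalty does not go through, and the ``mismatch'' you flag is not a detail but the whole difficulty. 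The paper avoids this entirely: it shows that the Weitzman-optimal algorithm for any $\vec{\tilde D}\preceq\vec{D}$ is \emph{rational} on $\vec{D}$ (every box it opens has nonnegative expected marginal gain there), and then uses a Bernstein-type submartingale bound to prove that a rational algorithm pays total cost exceeding $\Theta(\log(1/\epsilon))$ with probability at most $\epsilon$. Truncating the PERM at that cost loses only $\epsilon$ and shrinks the hypothesis range to $[-O(\log(1/\epsilon)),1]$; the sandwich chain of Section~\ref{sec:strong-monotone-upper-bound-proof} then runs with total-variation penalties scaled by $O(\log(1/\epsilon))$ instead of $n{+}1$, which is precisely where the extra $\log^2(1/\epsilon)$ in the sample bound comes from.
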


\section{Classification Problems: a Preliminary Discussion}
\label{sec:classification}

In classification problems, there is a special data dimension which corresponds to the labels;
the rest of the data dimensions correspond to the features.
In particular, it is crucial that the labels are correlated with the features.  
Therefore, the assumption of independent data dimensions fail to hold.
Nevertheless, below we present a straightforward extension of Theorem~\ref{thm:error_finite} under the assumption that the distribution of features \emph{conditioned on any given label} is a product distribution.
Although this preliminary result still relies on too strong an assumption to be useful in natural classification problems, we hope that it will serve as a stepping stone for follow-up works. See Section~\ref{sec:future} for some related research directions.

The rest of the section follows the notations in classification problems and denotes each data point as a feature-label pair $(\vec{x}, y)$, where $\vec{x}$ is the feature vector and $y$ is the label.
We assume that there are $\ell$ labels $[\ell] = \{1, 2, \dots, \ell\}$.
Let $\vec{T} = \prod_{i=1}^n T_i$ denote an $n$-dimensional feature domain.
Hence, the data domain under the model in Section~\ref{sec:preliminary} is $\vec{T} \times [\ell]$.
Let $D_Y$ denote the distribution of labels.
Further, for any label $y \in [\ell]$, let $\vec{D}_{\vec{X}|y}$ denote a \emph{product} distribution of features conditioned on having label $y$.
For simplicity of notation, let $\vec{D}_{\vec{X}}$ denote the collection of conditional product feature distributions, and write $\vec{D} = \vec{D}_{\vec{X}} \circ D_Y$ be the joint distribution of feature-label pairs.
%
%
%
By definition, the probability mass function of the joint distribution is:
\begin{equation}
    \label{eqn:classification-joint-mass}
    f_{\vec{D}}(\mathbf{x}, y) = f_{D_Y}(y) \cdot f_{\vec{D}_{\vec{X}|y}} (\vec{x})
    ~.
\end{equation}

We say that such a distribution has product conditional feature distributions.

\paragraph{Generalized Product Empirical Distribution.}
We now generalize the definition of product empirical distribution to classification problems that have product conditional feature distributions.
Let the empirical distribution of labels $E_Y$ be the uniform distribution over sample labels.
Further, for any label $y \in [\ell]$, let $\vec{E}_{\vec{X}|y}$ be the product empirical distribution w.r.t.\ the samples with label $y$.
Concretely, for any $i \in [n]$, let the $i$-th coordinate of $\vec{E}_{\vec{X}|y}$ be a uniform distribution over the $i$-th coordinate of the samples with label $y$.
As before, let $\vec{E}_{\vec{X}}$ denote the collection of product empirical feature distributions conditioned on the labels.
Finally, let $\vec{E} = \vec{E}_{\vec{X}} \circ E_Y$.

By definition, the probability mass function of the joint distribution is:
\begin{equation}
    \label{eqn:classification-joint-mass-empirical}
    f_{\vec{E}}(\mathbf{x}, y) = f_{E_Y}(y) \cdot f_{\vec{E}_{\vec{X}|y}} (\vec{x})
    ~.
\end{equation}

Finally, we define the \emph{product empirical risk minimizer} (PERM) to be the best hypothesis w.r.t.\ $\vec{E}$.
Here, note that we seek to minimize the objective.

\begin{theorem}
    \label{thm:classification}
    Let $\vec{D}= \vec{D}_{\vec{X}} \circ D_Y$ be any distribution with product conditional feature distributions, over $\vec{\typespace} \times [\ell]$ such that $|T_i|\le k$ for any $1\le i\le n$. 
    For a sufficiently large constant $C>0$, suppose the number of samples is at least:
    \[
        C \cdot \frac{nk\ell}{\epsilon^2}\log \left(\frac{\ell}{\delta}\right)
    \]
    Then, with probability at least $1-\delta$, for any $\hypo : \vec{T} \times [\ell] \mapsto [0,1]$, we have:
    \begin{equation*}
        \big| \hypo (\vec{D}) - \hypo(\vec{E}) \big| \le \epsilon
        ~.
    \end{equation*}
    In particular, the PERM is an $\epsilon$-additive approximation.
\end{theorem}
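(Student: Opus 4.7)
The plan is to mirror the proof of Theorem~\ref{thm:error_finite}: bound $|\hypo(\vec{D}) - \hypo(\vec{E})|$ by the total variation distance $\TV(\vec{D}, \vec{E})$ via Lemma~\ref{lem:total-variation}, bound this by $\sqrt{2}\Hellinger(\vec{D}, \vec{E})$ via Lemma~\ref{lem:hellinger-and-total-variation}, and then upper bound $\Hellinger^2(\vec{D}, \vec{E})$. The new ingredient is that $\vec{D}$ is no longer a product distribution over $\vec{\typespace}\times [\ell]$; instead it decomposes as in Eqn.~\eqref{eqn:classification-joint-mass}, so we must handle the label coordinate separately from the conditional feature coordinates.

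The first step is a decomposition lemma for the squared Hellinger distance. Using the elementary inequality $(\sqrt{ab}-\sqrt{cd})^2 \le 2b(\sqrt{a}-\sqrt{c})^2 + 2c(\sqrt{b}-\sqrt{d})^2$ applied pointwise with $a=f_{D_Y}(y), c=f_{E_Y}(y), b=f_{\vec{D}_{\vec{X}|y}}(\vec{x}), d=f_{\vec{E}_{\vec{X}|y}}(\vec{x})$, and then summing over $\vec{x}\in\vec{\typespace}$ and $y\in [\ell]$, I would derive
\[
    \Hellinger^2(\vec{D}, \vec{E}) \le 2\,\Hellinger^2(D_Y, E_Y) + 2 \sum_{y\in[\ell]} f_{E_Y}(y)\,\Hellinger^2\bigl(\vec{D}_{\vec{X}|y}, \vec{E}_{\vec{X}|y}\bigr).
\]
The first term is handled by Lemma~\ref{lem:hellinger_distance} applied with dimension $1$ and support size $\ell$, giving $O\bigl(\tfrac{\ell}{N}\log\tfrac{1}{\delta}\bigr)$.

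For the second term, condition on the number of samples $N_y$ that land in label $y$. Conditionally, the features of those samples are i.i.d.\ from the product distribution $\vec{D}_{\vec{X}|y}$, so Lemma~\ref{lem:hellinger_distance} (with failure probability $\delta/\ell$ and union bound over the $\ell$ labels) yields $\Hellinger^2\bigl(\vec{D}_{\vec{X}|y},\vec{E}_{\vec{X}|y}\bigr) \le O\bigl(\tfrac{nk}{N_y}\log\tfrac{\ell}{\delta}\bigr)$ whenever $N_y\ge 1$. The crucial cancellation is that $f_{E_Y}(y) = N_y/N$, so each summand is bounded by $O\bigl(\tfrac{nk}{N}\log\tfrac{\ell}{\delta}\bigr)$, \emph{independent of $N_y$}. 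When $N_y=0$ the weight $f_{E_Y}(y)$ vanishes and the summand is zero. Summing over the $\ell$ labels gives $O\bigl(\tfrac{nk\ell}{N}\log\tfrac{\ell}{\delta}\bigr)$, which combined with the first term bounds $\Hellinger^2(\vec{D},\vec{E})$ by $O\bigl(\tfrac{nk\ell}{N}\log\tfrac{\ell}{\delta}\bigr)$. Substituting the stated sample size makes this at most $\epsilon^2/2$, and the theorem follows exactly as in Theorem~\ref{thm:error_finite}.

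The main obstacle I expect is the labels with very small marginal probability, since Lemma~\ref{lem:hellinger_distance} becomes vacuous when $N_y$ is small; but as noted above, the $f_{E_Y}(y) = N_y/N$ prefactor exactly absorbs the $1/N_y$ blow-up, which is why the decomposition uses the \emph{empirical} label mass $f_{E_Y}(y)$ rather than the true mass $f_{D_Y}(y)$. A secondary point to verify carefully is that the elementary inequality above is tight enough not to lose any factor of $\ell$ or $nk$; the constants are harmless because both surviving terms already have the right shape, and the final $\log(\ell/\delta)$ (rather than $\log(1/\delta)$) in the sample complexity is the price of the union bound over labels, matching the theorem's $\log(\ell/\delta)$ dependence.
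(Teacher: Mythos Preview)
Your proposal is correct and follows the same overall strategy as the paper: reduce to bounding $\TV(\vec{D},\vec{E})$, split into a label contribution and a weighted sum of conditional-feature contributions with weights $f_{E_Y}(y)$, and invoke Lemma~\ref{lem:hellinger_distance} on each piece (with the key observation that the empirical weight $f_{E_Y}(y)=N_y/N$ absorbs the $1/N_y$ from the conditional bound). The one genuine difference is \emph{where} you split. The paper decomposes at the total-variation level via the pointwise triangle inequality, obtaining
\[
    \TV(\vec{D},\vec{E}) \le \TV(D_Y,E_Y) + \sum_{y} f_{E_Y}(y)\,\TV(\vec{D}_{\vec{X}|y},\vec{E}_{\vec{X}|y}),
\]
then passes each $\TV$ to $\Hellinger$; this leaves a sum of the form $\sum_y \sqrt{f_{E_Y}(y)/\ell}$, which is closed by Cauchy--Schwarz. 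You instead decompose at the squared-Hellinger level via the inequality $(\sqrt{ab}-\sqrt{cd})^2 \le 2b(\sqrt{a}-\sqrt{c})^2+2c(\sqrt{b}-\sqrt{d})^2$, obtaining a sum where the cancellation $f_{E_Y}(y)\cdot\tfrac{1}{N_y}=\tfrac{1}{N}$ is exact and no Cauchy--Schwarz is needed. Both routes yield the same bound with the same $\log(\ell/\delta)$ arising from the union bound over labels; yours is marginally cleaner in the bookkeeping, while the paper's is slightly more transparent because the TV decomposition is an equality-up-to-triangle-inequality rather than an algebraic trick.
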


\begin{proof}
    Similar to the proof of Theorem~\ref{thm:error_finite}, we rely on Lemma~\ref{lem:total-variation}.
    It suffices to show that:
    \[
        \TV (\mathbf{D}, \mathbf{E}) \le \epsilon
        ~.
    \]

    To do so, we first decompose it into two parts, the error due to the estimation of the label distribution, and that due to the conditional feature distributions. 
    By Eqn.~\eqref{eqn:classification-joint-mass} and Eqn.~\eqref{eqn:classification-joint-mass-empirical}:
    \begin{align*}
        \TV (\mathbf{D}, \mathbf{E})  
        & 
        = \frac{1}{2} \sum_{y \in [\ell]} \sum_{\vec{x} \in \vec{\typespace}} \big| f_{D_Y}(y) \cdot f_{\vec{D}_{\vec{X}|y}} (\vec{x}) - f_{E_y}(y) \cdot f_{\vec{E}_{\vec{X}|y}}(\vec{x}) \big| \\
        &
        \le \frac{1}{2} \sum_{y \in [\ell]} \sum_{\vec{x}\in \vec{\typespace}} \left( \big| f_{D_y}(y) - f_{E_y}(y) \big| \cdot f_{\vec{D}_{\vec{X}|y}}(\vec{x}) + f_{E_y}(y) \cdot \big| f_{\vec{D}_{\vec{X}|y}}(\vec{x}) - f_{\vec{E}_{\vec{X}|y}}(\vec{x}) \big| \right) \\
        & 
        = \TV(D_y, E_y) + \sum_{y \in [\ell]} f_{E_y}(y) \cdot \TV(\vec{D}_{\vec{X}|y}, \vec{E}_{\vec{X}|y})
        ~.
    \end{align*}

    By Lemma~\ref{lem:hellinger_distance} and the stated number of samples, the squared Hellinger distance between the label distributions $D_y$ and $E_y$ is less than $\frac{\epsilon^2}{8}$.
    Further by the relation between the total variation and Hellinger distances, i.e., Lemma~\ref{lem:hellinger-and-total-variation}, we get that the first term on the RHS above is at most $\frac{\epsilon}{2}$.

    It remains to bound the second term, i.e., the error due to the estimation of the conditional feature distributions.
    Fix any label $y \in [\ell]$.
    By definition, the number of samples with label $y$ is $f_{E_Y}(y) N$.
    Therefore, by Lemma~\ref{lem:hellinger_distance} and the stated number of samples, the squared Hellinger distance between the feature distributions conditioned on $y$ is at most:
    \[
        \Hellinger^2 ( \vec{D}_{\vec{X}|y}, \vec{E}_{\vec{X}|y} ) \le \frac{\epsilon^2}{8 \ell f_{E_Y}(y)}        ~.
    \]

    Further by Lemma~\ref{lem:hellinger-and-total-variation}, their total variation distance is at most:
    \[
        \TV ( \vec{D}_{\vec{X}|y}, \vec{E}_{\vec{X}|y} ) \le \frac{\epsilon}{2} \cdot \frac{1}{\sqrt{\ell f_{E_Y}(y)}}
        ~.
    \]

    Hence, the second term is at most:
    \begin{align*}
        \sum_{y \in [\ell]} f_{E_Y}(y) \cdot \TV(\vec{D}_{\vec{X}|y}, \vec{E}_{\vec{X}|y})
        &
        \le \sum_{y \in [\ell]} \frac{\epsilon}{2} \cdot \sqrt{\frac{f_{E_Y}(y)}{\ell}} \\
        &
        \le \frac{\epsilon}{2}
        ~.
    \end{align*}
        
    The second inequality follows by $\sum_{y \in [\ell]} f_{E_Y}(y) = 1$ and the Cauchy-Schwartz inequality.
    %
\end{proof}

\section{Future Directions and Other Related Works}
\label{sec:future}

\paragraph{Sample Complexity of Simple Auctions/Hypotheses.}
A branch of the literature of sample complexity of auctions considers simpler auction formats such as the second-price auction with reserve prices. 
Readers are referred to \citet{BalcanSV/EC/2018}, \citet{CaiD/FOCS/2017}, and \citet{MorgensternR/COLT/2016} for some examples.
We restate that the theories developed in this paper are complementary to the existing ones;
they are more suitable for problems with complex hypothesis classes (on product distributions).
Hence, this paper does not try to apply the theories to these simpler families of auctions. 
That said, there are relatively few natural hypothesis classes whose ``degrees-of-freedom'' are smaller than the data dimensions.
Hence, for strongly monotone problems, the sample complexity bound in Theorem~\ref{thm:strong-monotone-upper-bound} is competitive.
Finally, we leave as a future question whether there are natural learning problems whose tight sample complexity bounds need both complexity measures of the hypotheses and the independence of data dimensions.

\paragraph{Beyond Product Distributions.}
Although arbitrarily correlated distributions seem intractable, it may be possible to generalize the theories in this paper to structured corrected distributions, which we leave as another future direction.
Concretely, if we can learn from samples an appropriate representation of the data under which different dimensions are independent, we shall be able to combine it with the theories in this paper to get generalization bounds.
To this end, the vast literature on principle component analysis (PCA) is related.
See, e.g., \citet{Pearson/1901} and \citet{Jolliffe/book/2011}.
Independently, \citet{ec20/BrustleCD2019} made progress on this direction showing how to learn multi-item auctions when the value distribution is correlated yet admit special structures.

\paragraph{Classification Problems.}
%
Section~\ref{sec:classification} presents a preliminary result under a strong assumption that the feature distributions are independent \emph{conditioned on any given label}.
To further extend the theories in this paper to obtain useful generalization bounds for natural classification problems, we need to relax the assumption of having product conditional feature distributions, which is related to the last research direction.
Moreover, although the algorithmic question of finding the optimal hypothesis w.r.t.\ a product distribution is well-studied for optimization problems in the Bayesian model, little is known about its counterpart for classification problems.
In particular, it is unclear whether finding the best hypothesis w.r.t.\ the product empirical distribution is harder or easier than doing so w.r.t.\ the original notion of empirical distribution.
On the one hand, the product empirical distribution is more structured;
on the other hand, its support size is exponential in general, while the support size of the original empirical distribution is upper bounded by the number of samples.

\paragraph{Multi-parameter Auctions and Other Structural Properties.}
Multi-parameter revenue maximization is the only example in this paper that does not benefit from the improved sample complexity bound in Theorem~\ref{thm:strong-monotone-upper-bound} because it is not strong monotonicity.
In fact, \citet{HartR/TE/2015} showed that it is not even weakly monotone.
Nonetheless, the hypotheses corresponding to multi-parameter auctions are very different from those used in the proof of the lower bound (Theorem~\ref{thm:finite_domain_lower_bound}).
We consider it an interesting open question if there is another structural property (unrelated to the complexity measures of the hypotheses) which applies to multi-parameter revenue maximization and, ideally, to a large family of problems, which lead to improved sample complexity bounds.

\bibliographystyle{plainnat}

\bibliography{ref}

\appendix

\section{Lower Bound for Finite-domain Problems: Proof of Theorem~\ref{thm:finite_domain_lower_bound}}
\label{app:finite-lb}

For notation simplicity, we consider $T=\{0, \pm 1, \ldots, \pm k\}$ with support size $2k+1$. We first define the hypothesis class $\hypoclass$. Each hypothesis is specified by a binary $nk$-dimensional vector $\vec{v} \in \{\pm 1\}^{n\times k}$. Specifically, $\hypoclass = \{ \hypo^{\vec{v}}\}$ where $\hypo^{\vec{v}} : \typespace^n \to [0,1]$ is defined as
\[
\hypo^{\vec{v}}(\vec{\type}) := \mathbbm{1} \left[ \exists i \in [n],j \in [k], \vec{\type} = \left( 0,\ldots,0,\underset{i\text{'th}}{v_{i,j}\cdot j}, 0, \ldots,0 \right) \right].
\]

Next, we consider a family of distributions $\mathcal{D} = \{\vec{D}^{\vec{v}} \}$ that are also indexed by $\vec{v}$. For each dimension $i$ of $\vec{D}^{\vec{v}}$, the probability density function is defined as the following:
\[
f_{D^{\vec{v}}_i}(\type_i)= 
\begin{cases} 1-\frac{1}{n}, & \text{if } \type_i=0 \\
\frac{1}{2nk}(1-\epsilon), & \text{if } \type_i=-v_{i,j} \cdot j \text{ for some } j\in[k] \\
\frac{1}{2nk}(1+\epsilon), & \text{if } \type_i=v_{i,j} \cdot j  \text{ for some } j\in[k]
\end{cases}
\]
Our plan is to show that any algorithm that gets a $\epsilon$-approximation on all distributions in $\mathcal{D}$ must take $\Omega( \frac{nk}{\epsilon^2})$ number of samples.

When the underlying distribution is $\vec{D}^{\vec{v}}$, the corresponding optimal hypothesis is $h^{\vec{v}}$. Intuitively, in order to achieve a good approximation to $h^{\vec{v}}$, an algorithm has to specify a vector $\vec{v}'$ close enough to $\vec{v}$ based on the samples. We formalize the intuition by calculating the loss of choosing $h^{\vec{v}'}$.

\begin{lemma}\label{lem:finite_lb_error}
For all $\vec{v}, \vec{v'}$, we have
\[
    h^{\vec{v}}(\vec{D}^{\vec{v}}) - h^{\vec{v}'}(\vec{D}^{\vec{v}}) = \Omega(\frac{\epsilon}{nk}\cdot d(\vec{v},\vec{v'})),
\]
where $d(\vec{v},\vec{v}')$ is the hamming distance between $\vec{v}$ and $\vec{v}'$.
\end{lemma}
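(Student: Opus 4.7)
The plan is to compute $h^{\vec{w}}(\vec{D}^{\vec{v}})$ explicitly for an arbitrary index vector $\vec{w}$, and then specialize to $\vec{w}=\vec{v}$ and $\vec{w}=\vec{v}'$ before taking the difference. Observe that $h^{\vec{w}}(\vec{t})=1$ iff $\vec{t}$ has exactly one nonzero coordinate $i$, and the value at that coordinate equals $w_{i,j}\cdot j$ for some $j\in[k]$. Using independence of the coordinates of $\vec{D}^{\vec{v}}$, the event ``only coordinate $i$ is nonzero'' has probability $(1-\tfrac{1}{n})^{n-1}\cdot\tfrac{1}{n}$. Conditional on coordinate $i$ being nonzero (total mass $1/n$), the probability that it equals $w_{i,j}\cdot j$ is $\tfrac{1+\epsilon}{2k}$ if $w_{i,j}=v_{i,j}$ and $\tfrac{1-\epsilon}{2k}$ if $w_{i,j}\neq v_{i,j}$, since these correspond to the ``+'' and ``$-$'' halves of the biased distribution in coordinate $i$.

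Summing over $i\in[n]$ and $j\in[k]$, I would obtain
\[
h^{\vec{w}}(\vec{D}^{\vec{v}}) = \Bigl(1-\tfrac{1}{n}\Bigr)^{n-1}\cdot\frac{1}{2nk}\sum_{i=1}^n\sum_{j=1}^k \bigl(1+\epsilon\cdot \mathbbm{1}[w_{i,j}=v_{i,j}]-\epsilon\cdot \mathbbm{1}[w_{i,j}\neq v_{i,j}]\bigr).
\]
Subtracting the $\vec{w}=\vec{v}'$ version from the $\vec{w}=\vec{v}$ version, the constant $1$'s cancel, and each index $(i,j)$ with $v_{i,j}\neq v'_{i,j}$ contributes exactly $\tfrac{2\epsilon}{2nk}=\tfrac{\epsilon}{nk}$ (while indices where they agree contribute $0$). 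Thus
\[
h^{\vec{v}}(\vec{D}^{\vec{v}})-h^{\vec{v}'}(\vec{D}^{\vec{v}}) = \Bigl(1-\tfrac{1}{n}\Bigr)^{n-1}\cdot \frac{\epsilon}{nk}\cdot d(\vec{v},\vec{v}').
\]
Finally, I would use $(1-1/n)^{n-1}\geq 1/e$ for all $n\geq 1$ to conclude $h^{\vec{v}}(\vec{D}^{\vec{v}})-h^{\vec{v}'}(\vec{D}^{\vec{v}})=\Omega\bigl(\tfrac{\epsilon}{nk}\cdot d(\vec{v},\vec{v}')\bigr)$.

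The calculation is largely bookkeeping; the one subtle point to verify carefully is that the ``only coordinate $i$ nonzero'' event is the only way $h^{\vec{w}}$ evaluates to $1$ (the hypothesis value is $0$ on any type with two or more nonzero coordinates, and also on the all-zeros vector), so there are no additional cross-terms. Once this is in place, independence of the marginals delivers the product form, and the sign convention in the definition of $\vec{D}^{\vec{v}}$ ensures the per-coordinate $\pm\epsilon$ contribution splits cleanly according to whether $v_{i,j}$ and $w_{i,j}$ agree.
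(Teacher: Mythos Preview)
Your proposal is correct and follows essentially the same approach as the paper: both compute $h^{\vec{w}}(\vec{D}^{\vec{v}})$ by summing over the events ``only coordinate $i$ is nonzero and equals $w_{i,j}\cdot j$,'' factor using independence to get the $(1-\tfrac{1}{n})^{n-1}$ prefactor, and then subtract to isolate the $\tfrac{\epsilon}{nk}$ contribution per disagreeing index. Your explicit invocation of $(1-1/n)^{n-1}\ge 1/e$ is the constant hidden in the paper's $\Omega(\cdot)$.
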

\begin{proof}
For all $\vec{v}, \vec{v}'$, we have that
    \begin{align*}
        h^{\vec{v}}(\vec{D}^{\vec{v}'}) &= \sum_{i=1}^n \sum_{j=1}^{k}\Pr_{\type_i\sim D^{\mathbf{v}}_i}[\type_i=v'_{i,j} \cdot j] \cdot \prod_{\ell \neq i}\Pr_{\type_\ell\sim D_\ell^{\mathbf{v}}}[\type_\ell=0]\\
        &=(1-\frac{1}{n})^{n-1}\sum_{i=1}^n \sum_{j=1}^{k}\Pr_{\type_i\sim D^{\vec{v}}_i}[\type_i=v'_{i,j} \cdot j]\\
        &=(1-\frac{1}{n})^{n-1}\sum_{i=1}^n \sum_{j=1}^{k} \left[\mathbbm{1}[v_{i,j}=v'_{i,j}]\cdot \frac{1+\epsilon}{2nk} + \mathbbm{1}[v_{i,j} \neq v'_{i,j}]\cdot \frac{1-\epsilon}{2nk} \right] ~.
    \end{align*}
Next, we bound the lose of choosing $h^{\vec{v}'}$ by the hamming distance between $\mathbf{v}$ and  $\mathbf{v'}$.
\[
h^{\vec{v}}(\vec{D}^{\vec{v}}) - h^{\vec{v}'}(\vec{D}^{\vec{v}}) = (1-\frac{1}{n})^{n-1}\sum_{i=1}^n \sum_{j=1}^{k} \mathbbm{1}[v_{i,j}\neq v'_{i,j}] \cdot \frac{\epsilon}{nk}
= \Omega(\frac{\epsilon}{nk}\cdot d(\vec{v},\vec{v'}))
\]
\end{proof}

Let $\vec{s}$ be the samples and $A$ be any (randomized) algorithm that takes samples $\vec{s}$ as inputs and outputs a vector $A(\vec{s}) \in \{\pm 1\}^{n\times k}$. The next lemma states that if two distributions differ in only one dimension, then the total probability of $A$ guessing wrongly for the two distributions is at least $\Omega(1)$ if the number of samples is $O(\frac{nk}{\epsilon^2})$.

\begin{lemma}\label{lem:finite_lb_error_prob}
    For any $\vec{D^{\vec{\bar{v}}}}$ and  $\vec{D^{\vec{\underline{v}}}}$ where $\vec{D^{\vec{\bar{v}}}}$ and  $\vec{D^{\vec{\underline{v}}}}$ only differ in one dimension $(i,j)$, i.e., $\bar{v}_{i,j}=1$, $\underline{v}_{i,j}=-1$, and for any algorithm $A$, when $N = O(\frac{nk}{\epsilon^2})$,
    \begin{equation*}
        \Pr_{\vec{s}\sim (\vec{D}^\vec{\bar{v}})^N} [A(\vec{s})_{i,j} \ne \bar{v}_{i,j}] + \Pr_{\vec{s}\sim (\vec{D}^\vec{\underline{v}})^N} [A(\vec{s})_{i,j} \ne \underline{v}_{i,j}] \ge \Omega(1).
    \end{equation*}
\end{lemma}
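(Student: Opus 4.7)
The plan is to reduce to a binary hypothesis testing problem and apply the two-point method (Le Cam's lemma), using Hellinger distance together with its tensorization to control total variation. Fix the coordinate $(i,j)$ and consider the scalar $\{\pm 1\}$-valued test statistic $\mathbbm{1}[A(\vec{s})_{i,j}=1]$. A standard computation shows
\[
    \Pr_{\vec{s}\sim(\vec{D}^{\vec{\bar{v}}})^N}[A(\vec{s})_{i,j}\neq \bar{v}_{i,j}] + \Pr_{\vec{s}\sim(\vec{D}^{\vec{\underline{v}}})^N}[A(\vec{s})_{i,j}\neq \underline{v}_{i,j}]
    \ge 1 - \TV\big((\vec{D}^{\vec{\bar{v}}})^N,\,(\vec{D}^{\vec{\underline{v}}})^N\big).
\]
(Randomized $A$ reduces to this by conditioning on its internal coins.) Hence it suffices to show the right-hand TV is bounded away from~$1$ when $N=O(nk/\epsilon^2)$ with a sufficiently small hidden constant.

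Next, I would bound the TV via Hellinger. Because $\vec{D}^{\vec{\bar{v}}}$ and $\vec{D}^{\vec{\underline{v}}}$ are product distributions that agree in every coordinate except $i$, Lemma~\ref{lem:hellinger-decompose} gives
\[
    H^2\big(\vec{D}^{\vec{\bar{v}}},\vec{D}^{\vec{\underline{v}}}\big) = H^2\big(D^{\vec{\bar{v}}}_i,D^{\vec{\underline{v}}}_i\big).
\]
The two marginals along coordinate $i$ agree everywhere except that the masses $\tfrac{1\pm\epsilon}{2nk}$ at $+j$ and $-j$ are swapped, so a direct calculation yields
\[
    H^2\big(D^{\vec{\bar{v}}}_i,D^{\vec{\underline{v}}}_i\big) = \frac{1}{2nk}\big(\sqrt{1+\epsilon}-\sqrt{1-\epsilon}\big)^2 \le \frac{\epsilon^2}{nk},
\]
using the elementary bound $2-2\sqrt{1-\epsilon^2}\le 2\epsilon^2$. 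Applying Lemma~\ref{lem:hellinger-decompose} once more to the $N$-fold products gives $1-H^2\big((\vec{D}^{\vec{\bar{v}}})^N,(\vec{D}^{\vec{\underline{v}}})^N\big)=\big(1-H^2(\vec{D}^{\vec{\bar{v}}},\vec{D}^{\vec{\underline{v}}})\big)^N \ge (1-\epsilon^2/(nk))^N$.

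Finally, for $N=c\cdot nk/\epsilon^2$ with $c$ sufficiently small, the right-hand side is at least $e^{-2c}$, so $H^2\big((\vec{D}^{\vec{\bar{v}}})^N,(\vec{D}^{\vec{\underline{v}}})^N\big)\le 1-e^{-2c}$, which is a constant strictly less than $1$. Lemma~\ref{lem:hellinger-and-total-variation} then yields $\TV\big((\vec{D}^{\vec{\bar{v}}})^N,(\vec{D}^{\vec{\underline{v}}})^N\big)\le\sqrt{2(1-e^{-2c})}$, which can be made any constant below $1$ by shrinking $c$. Plugging back into the two-point inequality gives the desired $\Omega(1)$ lower bound on the sum of error probabilities.

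The steps are all routine; the only genuine content is (i) recognizing that single-coordinate perturbation makes the Hellinger tensorization collapse to one factor, (ii) performing the single-coordinate Hellinger calculation, and (iii) choosing the constant in $N=O(nk/\epsilon^2)$ small enough that the $N$-fold Hellinger squared stays bounded away from $1$. No step looks like it should cause difficulty, so I expect the main thing to be careful about is just tracking the absolute constants cleanly so that the bound on $\TV$ is strictly less than $1$ rather than merely at most $1$.
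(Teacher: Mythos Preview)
Your proof is correct and takes a genuinely different route from the paper. The paper argues informally by sufficiency: only samples with $s_i=\pm j$ carry information about $v_{i,j}$, there are with high probability only $O(1/\epsilon^2)$ such samples out of $N=O(nk/\epsilon^2)$, and distinguishing the two Bernoulli-like laws $\frac{1\pm\epsilon}{2}$ from $O(1/\epsilon^2)$ observations fails with constant probability. Your approach instead applies Le Cam's two-point inequality directly to the full product measures $(\vec{D}^{\vec{\bar v}})^N$ and $(\vec{D}^{\vec{\underline v}})^N$, then controls their total variation via the Hellinger tensorization (Lemma~\ref{lem:hellinger-decompose}) and the Hellinger--TV comparison (Lemma~\ref{lem:hellinger-and-total-variation}), both already available in the paper. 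Your route is arguably tighter and more self-contained: it avoids the informal ``useful samples'' reduction and the appeal to an unstated testing fact, and it yields explicit constants. The paper's route, on the other hand, makes the intuition (only a $1/(nk)$ fraction of samples matter) more visible.
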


\begin{proof}
Since the $n$ dimensions of the distribution are independent, to guess the $(i,j)$-th dimension of the underlying $\vec{v}$, the only useful samples are those $\vec{s}$ with $s_i=\pm j$, which happens with probability $\frac{1}{nk}$. Thus, when the number of samples is $O(\frac{nk}{\epsilon^2})$, with high probability, we have at most $O(\frac{1}{\epsilon^2})$ number of useful samples.
Note that these samples are either from 
\[
\bar{f}(\pm j) = \frac{1 \pm \epsilon}{2} \text{ or }  \underline{f}(\pm j) = \frac{1 \mp \epsilon}{2}.
\]
whose total variation distance is $\Theta(\epsilon)$, then if we only have $O(\frac{1}{\epsilon^2})$ samples, with constant probability we cannot distinguish distinguish whether the samples are from $\vec{D}^{\vec{\bar{v}}}$ or $\vec{D}^{\vec{\underline{v}}}$. In other words, for any algorithm $A$, $A(\vec{s})_{i,j}$ must be inconsistent with the underlying distribution with constant probability, which concludes the proof.
\end{proof}



To finish the proof, we consider the performance of $A$ on a uniform distribution over all distributions in $\mathcal{D}$. Formally, let $U$ be a uniform distribution on $\{\pm 1\}^{n\times k}$, we have
\begin{align*}
    \E_{\vec{v}\sim U}\E_{\vec{s}\sim (\mathbf{D^v})^N }[d(A(\vec{s}),\vec{v})] & = \sum_{i=1}^n\sum_{j=1}^{k}\E_{\vec{v}\sim U}\E_{\vec{s}\sim (\mathbf{D^v})^N } [\mathbbm{1}(A(\vec{s})_{i,j} \ne v_{i,j})] \\
    & = \sum_{i=1}^n\sum_{j=1}^{k}\E_{\vec{v}\sim U}\Pr_{\vec{s}\sim (\mathbf{D^v})^N } [A(\vec{s})_{i,j}) \ne v_{i,j}] \\
    & = \frac{1}{2} \sum_{i=1}^n\sum_{j=1}^{k} \Big( \E_{\vec{\bar{v}}\sim U}\Pr_{\vec{s}\sim (\mathbf{D}^\vec{\bar{v}})^N} [A(\vec{s})_{i,j}) \ne \bar{v}_{i,j}] \\
    & \phantom{=\frac{1}{2} \sum_{i=1}^n\sum_{j=1}^{k} \Big(} + \E_{\vec{\underline{v}}\sim U}\Pr_{\vec{s}\sim (\mathbf{D}^\vec{\underline{v}})^N} [A(\vec{s})_{i,j}) \ne \underline{v}_{i,j}] \Big) \\
    & \ge  \Omega(nk), && \text{(by Lemma~\ref{lem:finite_lb_error_prob})}
\end{align*}
where $d(\cdot,\cdot)$ denotes the hamming distance. By Lemma~\ref{lem:finite_lb_error}, this implies a $\Omega(\epsilon)$ error of the output. Therefore, there exists a distribution $\mathbf{D^v}$ that cannot be learned by $A$ with $O(\epsilon)$ additive error.

\section{Missing Proofs about Prophet Inequality}
\label{app:app-prophet}

\subsection{Optimal Hypothesis}
\label{app:app-prophet-optimal}

An optimal strategy of prophet inequality when $\mathbf{D}$ has bounded support, denote as $S_{\mathbf{D}}$, is called \emph{backward induction}, where we recursively compute the optimal reward for items appear behind $i$ and set the thresholds $ \theta_i$ for item $i$. The algorithm for setting the strategy is as follows:

\begin{algorithm}[t]
	\SetAlgoNoLine
    $\theta_n \leftarrow 0$\;
    $\opt(D_n)=\E_{t_n\sim D_n}[t_n]$\;
	\For{$i$ from $n-1$ to 1}{
	    $ \theta_i \leftarrow \opt(\mathbf{D}_{\ge i+1})$\;
	    $\opt(\mathbf{D}_{\ge i}) = \E_{t_i\ge \theta_i}[t_i] + \Pr[t_i< \theta_i]\opt(\mathbf{D}_{\ge i+1})$\;
	}
	\emph{// online strategy}\\
	$i\leftarrow 1$\;
	\While{$i\le n$}{
	    \eIf{$t_i \ge \theta_i$}{
	        Accept $t_i$ and stop\;
	    }{
	        $i\leftarrow i+1$ \emph{// observe the next reward}
	    }
    	\If{i=n}{
    	    Accept $t_n$ and stop \emph{// if no item has been accepted, accept the last one}
    	}
	}
	\caption{Optimal Strategy for Prophet Inequality in Bounded-support Case}
	\label{alg:backward_induction}	
\end{algorithm}

One particular note for this strategy is that the thresholds for 
the last $n-i+1$ dimension of $\mathbf{D}$ is independent of the arrivals there are before $t_i$. Therefore, in further discussion we can abuse $h_{\ge i}(\mathbf{D})$ to denote the expected reward of running the last $n-i+1$ dimension of an backward induction strategy $S$ on $\mathbf{D}_{\ge i}=\prod_{j=i}^n D_j$.

\subsection{Discretization and Sample Complexity: Proof of Theorem~\ref{thm:finite-prophet}}

Let $\mathbf{D}_{\epsilon/2}$ be the discretized version of $\mathbf{D}$ obtained from rounding the values of each marginal distribution $D_i$ down to the nearest multiples of $\epsilon$.
For all type $\mathbf{t}\sim \mathbf{D}$, define its downward discretization 
\[\mathbf{t}_{\epsilon/2}=\lfloor \frac{2\mathbf{t}}{\epsilon} \rfloor \cdot \frac{\epsilon}{2} ~,\]
also, for the optimal strategy $S_{\mathbf{D}}$ define a coupling optimal strategy  $S_{\mathbf{D}}'$ for $\mathbf{t}_{\epsilon/2}\sim \mathbf{D}_{\epsilon/2}$: 
First re-sample 
\[
    \mathbf{r}\sim \prod_{i=1}^n D_i(t ~|~ t\in [(t_i)_{\epsilon/2}, (t_i)_{\epsilon/2}+\frac{\epsilon}{2})) ~,
\]
then perform the original $S_{\mathbf{D}}$ on $\mathbf{t}'=\mathbf{t}_{\epsilon/2} + \mathbf{r}$ and return the accepted item. We introduce the re-sample step because $\mathbf{t}'$ and $\mathbf{t}\sim \mathbf{D}$ have the same distribution.
Hence at any step $i\in[n]$, the probability that $S_{\mathbf{D}}'$ accepts $\mathbf{t}_{\epsilon/2}\sim \mathbf{D}_{\epsilon/2}$ equals to that of $S_{\mathbf{D}}$ accepts $\mathbf{t}\sim \mathbf{D}$. We further show that the expected reward of the coupling optimal strategy
is an $\epsilon$-additive approximation of the original one:

\begin{lemma}\label{lem:prophet-coupling-optimal-apx}
Under the above definition, we have
\[
    \E_{S_{\mathbf{D}}'}(h_{S_{\mathbf{D}}'}(\mathbf{D}_{\epsilon/2})) \ge h_{S_{\mathbf{D}}}(\mathbf{D}) - \frac{\epsilon}{2} ~.    
\]
\end{lemma}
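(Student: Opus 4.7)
The plan is to use the coupling that is already baked into the definition of $S_{\mathbf{D}}'$ and compare the two strategies pointwise (per realization) rather than in distribution. Concretely, for a sample $\mathbf{t}_{\epsilon/2}\sim \mathbf{D}_{\epsilon/2}$, run $S_{\mathbf{D}}'$: it draws $\mathbf{r}$ coordinate-wise uniformly from the interval $[0,\epsilon/2)$ (conditioned on the correct piece of $D_i$), forms $\mathbf{t}' = \mathbf{t}_{\epsilon/2} + \mathbf{r}$, and simulates $S_{\mathbf{D}}$ on $\mathbf{t}'$. The first step is to observe that, by construction, the marginal distribution of $\mathbf{t}'$ is exactly $\mathbf{D}$. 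Therefore, if we were to use $\mathbf{t}'_i$ as the accepted reward whenever $S_{\mathbf{D}}$ stops at step $i$, the expected collected reward would be precisely $h_{S_{\mathbf{D}}}(\mathbf{D})$.

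Next, I would contrast this with what $S_{\mathbf{D}}'$ actually collects: when $S_{\mathbf{D}}$ stops at step $i$ on $\mathbf{t}'$, $S_{\mathbf{D}}'$ returns the discretized value $(t_i)_{\epsilon/2}$ instead of $t'_i$. Since by construction $\mathbf{r}_i \in [0, \epsilon/2)$, we have
\[
0 \le t'_i - (t_i)_{\epsilon/2} = r_i < \frac{\epsilon}{2}
\]
for every coordinate, and in particular for the coordinate where $S_{\mathbf{D}}$ decides to stop. Hence on every sample path of the coupled process, the reward of $S_{\mathbf{D}}'$ on $\mathbf{D}_{\epsilon/2}$ is at most $\epsilon/2$ less than the reward that $S_{\mathbf{D}}$ would collect on $\mathbf{t}'$.

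Taking expectations over both the draw of $\mathbf{t}_{\epsilon/2}$ and the resampled $\mathbf{r}$, and using that $\mathbf{t}'\sim \mathbf{D}$, this pointwise comparison yields
\[
\E_{S_{\mathbf{D}}'}\bigl(h_{S_{\mathbf{D}}'}(\mathbf{D}_{\epsilon/2})\bigr)
\ge h_{S_{\mathbf{D}}}(\mathbf{D}) - \frac{\epsilon}{2},
\]
which is the desired inequality. The only subtlety — and really the only thing to be careful about — is to make sure the coupling is described as a per-path comparison rather than a distributional one, since the goal is to transfer the expected value of $S_{\mathbf{D}}$ on $\mathbf{D}$ to a lower bound on the expected value of $S_{\mathbf{D}}'$ on $\mathbf{D}_{\epsilon/2}$. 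Everything else (law of total expectation, linearity) is mechanical once the coupling is stated correctly, and no concentration or information-theoretic inequality is needed; the bound is purely an artefact of the rounding gap $\epsilon/2$.
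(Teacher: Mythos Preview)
Your proposal is correct and follows essentially the same coupling argument as the paper: use that $\mathbf{t}'\sim\mathbf{D}$, observe that on every sample path $S_{\mathbf{D}}'$ collects the rounded-down value at the index where $S_{\mathbf{D}}$ stops, losing at most $\epsilon/2$, and take expectations. Your exposition of the per-path comparison is in fact cleaner than the paper's integral chain (which has some inequality signs written in the wrong direction), but the underlying idea is identical.
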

\begin{proof}
    \begin{align*}
        \E_{S_{\mathbf{D}}'}(h_{S_{\mathbf{D}}'}(\mathbf{D}_{\epsilon/2})) &= \int_{\mathbf{t}\in[0,1]^n} \Pr(\mathbf{t'}=\mathbf{t})\cdot h_{S_{\mathbf{D}}'}(\mathbf{t}) d\mathbf{t}\\
        &\le \int_{\mathbf{t}\in[0,1]^n} f_{\mathbf{D}}(\mathbf{t})\cdot \left\lfloor \frac{2 h_{S_{\mathbf{D}}'}(\mathbf{t}) }{\epsilon}\right\rfloor \cdot \frac{\epsilon}{2} d\mathbf{t}\\
        &\le \int_{\mathbf{t}\in[0,1]^n} f_{\mathbf{D}}(\mathbf{t})\cdot h_{S_{\mathbf{D}}}(\mathbf{t})  d\mathbf{t} - \frac{\epsilon}{2}\\
        &=h_{S_{\mathbf{D}}}(\mathbf{D})- \frac{\epsilon}{2}
    \end{align*}
\end{proof}

Now we go to the proof of Theorem~\ref{thm:finite-prophet}. 
Let $\mathbf{E}_{\epsilon/2}$ be the distribution obtained from rounding down the values of each dimension of $\mathbf{E}$ to the nearest supports of $\frac{\epsilon}{2}$. 
We want to show that, when $N\ge C\cdot\frac{n}{\epsilon^3}\log(\frac{n}{\epsilon\delta})$, $h_{S_{\mathbf{E}_{\epsilon/2}}}$ will become the near-optimal hypothesis of $\mathbf{D}$.

Since $\mathbf{E}_{\epsilon/2}$ is also the empirical distribution of $\mathbf{D}_{\epsilon/2}$, and has finite support with size $\frac{1}{\epsilon/2}$ in each dimension, a corollary of Theorem~\ref{thm:error_finite}
shows that when $N \ge C\cdot\frac{n}{\epsilon^3}\log(\frac{n}{\epsilon\delta})$,
\begin{equation}
     h_{ S_{\mathbf{D}_{\epsilon/2}}} (\mathbf{D}_{\epsilon/2}) - h_{S_{\mathbf{E}_{\epsilon/2}}} (\mathbf{D}_{\epsilon/2} ) \le \frac{\epsilon}{2}  \label{eqn:discretize-aprox-prophet}
\end{equation}
Then
\begin{align*}
     h_{S_{\mathbf{D}}}(\mathbf{D}) &\le \E_{S_{\mathbf{D}}'}(h_{S_{\mathbf{D}}'}(\mathbf{D}_{\epsilon/2})) + \frac{\epsilon}{2} && \text{( Lemma~\ref{lem:prophet-coupling-optimal-apx})} \\
     &\le h_{S_{\mathbf{D}_{\epsilon/2}}}(\mathbf{D}_{\epsilon/2}) + \frac{\epsilon}{2} && \text{(Optimality of $S_{\mathbf{D}_{\epsilon/2}}$)}\\
     &\le h_{S_{\mathbf{E}_{\epsilon/2}}}(\mathbf{D}_{\epsilon/2}) + \epsilon && \text{(From (\ref{eqn:discretize-aprox-prophet}))}
\end{align*}

It remains to show $h_{S_{\mathbf{E}_{\epsilon/2}}}(\mathbf{D}_{\epsilon/2})$ approximates $h_{S_{\mathbf{E}_{\epsilon/2}}}(\mathbf{D})$, i.e. the actual expected reward from the learned hypothesis. We elaborate it as follows:

\begin{lemma}\label{lem:prophet-discretize-empirical-mono}
\[
    h_{S_{\mathbf{E}_{\epsilon/2}}}(\mathbf{D}_{\epsilon/2}) \le h_{S_{\mathbf{E}_{\epsilon/2}}}(\mathbf{D})  ~.
\]
\end{lemma}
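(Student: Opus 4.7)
The plan is to prove the lemma via a pointwise coupling argument after first canonicalizing the thresholds of $S_{\mathbf{E}_{\epsilon/2}}$ to lie on the discretization grid. Throughout, couple $\mathbf{t}\sim\mathbf{D}$ with $\mathbf{t}_{\epsilon/2}=\lfloor 2\mathbf{t}/\epsilon\rfloor\cdot(\epsilon/2)\sim\mathbf{D}_{\epsilon/2}$, and let $R(\cdot)$ denote the sample-path reward of $S_{\mathbf{E}_{\epsilon/2}}$; it suffices to show $R(\mathbf{t})\ge R(\mathbf{t}_{\epsilon/2})$ pointwise and then take expectations.

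First, I would argue that, without loss of generality, each threshold $\theta_i$ of $S_{\mathbf{E}_{\epsilon/2}}$ is a multiple of $\epsilon/2$. The backward-induction formula $\theta_i=\opt(\mathbf{E}_{\ge i+1,\epsilon/2})$ is an expectation and need not land on the grid, but since $\mathbf{E}_{\epsilon/2}$ is supported entirely on multiples of $\epsilon/2$, replacing $\theta_i$ by $\lceil 2\theta_i/\epsilon\rceil\cdot(\epsilon/2)$ leaves the acceptance indicator $\mathbf{1}[t_i\ge\theta_i]$ unchanged on the support of $\mathbf{E}_{\epsilon/2}$. The rounded strategy therefore has the same expected reward on $\mathbf{E}_{\epsilon/2}$ and is still optimal, so we may assume $\theta_i=k_i\cdot(\epsilon/2)$ for an integer $k_i$.

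Next, I would carry out the pointwise comparison. With $\theta_i$ on the grid, the chain
\[
    t_i\ge\theta_i \iff \lfloor 2t_i/\epsilon\rfloor\ge k_i \iff (t_i)_{\epsilon/2}\ge\theta_i
\]
shows that $S_{\mathbf{E}_{\epsilon/2}}$ accepts the same index $i^*$ on $\mathbf{t}$ and on $\mathbf{t}_{\epsilon/2}$. The realized rewards are $t_{i^*}$ and $(t_{i^*})_{\epsilon/2}$, and the former dominates because discretization only rounds down; hence $R(\mathbf{t})\ge R(\mathbf{t}_{\epsilon/2})$ pointwise. Taking expectations over $\mathbf{t}\sim\mathbf{D}$ then gives the lemma.

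The main obstacle is the alignment step. If the thresholds of $S_{\mathbf{E}_{\epsilon/2}}$ were arbitrary reals off the grid, then for any $t_i\in[\theta_i,\lceil 2\theta_i/\epsilon\rceil\cdot(\epsilon/2))$ the original run accepts at $i$ while the discretized run skips to some later index $j>i$ with potentially much larger $(t_j)_{\epsilon/2}$; the pointwise inequality, and in fact the lemma itself, can then fail. It is precisely the discrete support of $\mathbf{E}_{\epsilon/2}$ that justifies the WLOG snapping of thresholds to multiples of $\epsilon/2$, and this is the key structural ingredient that makes the coupling argument go through.
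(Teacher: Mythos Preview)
Your proposal is correct and follows essentially the same approach as the paper: round the thresholds of $S_{\mathbf{E}_{\epsilon/2}}$ up to multiples of $\epsilon/2$ (harmless on the discrete support of $\mathbf{E}_{\epsilon/2}$), observe that the acceptance index then coincides on $\mathbf{t}$ and $\mathbf{t}_{\epsilon/2}$, and conclude pointwise via $t_{i^*}\ge (t_{i^*})_{\epsilon/2}$. Your added discussion of why the threshold-alignment step is indispensable is a nice clarification that the paper leaves implicit.
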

\begin{proof}
    Suppose $\mathbf{E}_{\epsilon/2}$ is specified by thresholds $\mathbf{\theta} = (\theta_1,\cdots,\theta_n)$. We can assume without loss of generality that each $\theta_i$ is the multiple of $\frac{\epsilon}{2}$, since rounding the thresholds up does not affect the behavior of the strategy on $\mathbf{E}_{\epsilon/2}$. 

    Therefore, for any type $\mathbf{t}$ and its downward discretization $\mathbf{t}_{\epsilon/2}$, $S_{\mathbf{E}_{\epsilon/2}}$ accepts $\mathbf{t}$ at the $i^{th}$ step if and only if $S_{\mathbf{E}_{\epsilon/2}}$ accepts $\mathbf{t}_{\epsilon/2}$ at the $i^{th}$ step. This gives that

    \begin{align*}
        h_{S_{\mathbf{E}_{\epsilon/2}}}(\mathbf{D}) &= \int_{\mathbf{t}\in[0,1]^n} f_{\mathbf{D}}(\mathbf{t}) h_{S_{\mathbf{E}_{\epsilon/2}}}(\mathbf{t})d\mathbf{t} \\
        &\ge \int_{\mathbf{t}\in[0,1]^n}f_{\mathbf{D}}(\mathbf{t}) h_{S_{\mathbf{E}_{\epsilon/2}}}(\mathbf{t}_{\epsilon/2})d\mathbf{t} \\
        &= \int_{\mathbf{t}\in[0,1]^n}f_{\mathbf{D}_{\epsilon/2}}(\mathbf{t}) h_{S_{\mathbf{E}_{\epsilon/2}}}(\mathbf{t}_{\epsilon/2}) d\mathbf{t}\\
        &= h_{S_{\mathbf{E}_{\epsilon/2}}}(\mathbf{D}_{\epsilon/2}) ~.
    \end{align*}
\end{proof}

With this lemma in hand, we can complete the proof of Theorem~\ref{thm:finite-prophet}.

\subsection{Strong Monotonicity: Proof of Lemma~\ref{lem:strong-monotone-prophet}}
\label{app:strong-monotone-prophet}

From now on, we abuse $h_{\ge i}(\mathbf{D})$ to denote the expected revenue of running the last $n-i+1$ dimension of $S_\mathbf{\tilde{D}_{\ge i}}$ on $\mathbf{D}_{\ge i}$.
We want to show by backward induction on $i$ that $\forall$ product distributions $\mathbf{D}, \tilde{\mathbf{D}}$ such that $\mathbf{D}\succeq \tilde{\mathbf{D}}$, the expected reward of performing $S_{\tilde{\mathbf{D}}}$ on the last $n-i+1$ dimension of $\mathbf{D}$ is at least that of performing $S_{\tilde{\mathbf{D}}}$ on the last $n-i+1$ dimension of $\tilde{\mathbf{D}}$, i.e.
\[
h_{\ge i} ( \mathbf{D}_{\ge i}) \ge h_{\ge i} ( \mathbf{\tilde{D}}_{\ge i})
~.
\]

\textbf{Base case:} When $i=n$, there is only one item with value $t_n$, and $S_\mathbf{\tilde{D}_{\ge n}}$ accepts it with probability $1$ and obtains the reward $t_n$. Therefore, we have
 \begin{align*} 
    h_{\ge n}(\mathbf{D}) &= \E_{D_n}[t_n] = \int_{t=0}^{\infty}q^{D_n}(t) dt\\
    &\ge \int_{t=0}^{\infty}q^{\tilde{D}_n}(t) dt && \text{($D_n \succeq \tilde{D}_n$)}\\
    &= \E_{\tilde{D}_n}[t_n] = h_{\ge n}(\tilde{\mathbf{D}})
    ~.
\end{align*}

\textbf{Inductive step:} Assume the induction hypothesis holds for all $j>i$, i.e. $\forall j>i$ $h_{\ge j}(\mathbf{D}) \ge h_{\ge j}(\tilde{\mathbf{D}})$. Then since $h_{\ge i}(\mathbf{D})$ satisfies the following recursion:
\[
    h_{\ge i}(\mathbf{D})
    = \Pr_{D_i}[t_i \ge \theta_i] \cdot \E_{D_i}[t_i ~|~ t_i \ge \theta_i] + \Pr_{D_i}[t_i < \theta_i]\cdot h_{\ge i+1}(\mathbf{D})
\]
where the first term on the right-hand-side is the expected reward when the $i^{th}$ item is accepted, while the second one is the expected reward when the strategy accepts subsequent item. A similar recursion holds for $h_{\ge i}(\tilde{\mathbf{D}})$:
\[
    h_{\ge i}(\tilde{\mathbf{D}})
    = \Pr_{D_i}[t_i \ge \theta_i] \cdot \E_{D_i}[t_i ~|~ t_i \ge \theta_i] + \Pr_{D_i}[t_i < \theta_i]\cdot h_{\ge i+1}(\mathbf{D})
\]

We then compare the first and the second term of $h_{\ge i}(\mathbf{D})$ and $h_{\ge i}(\tilde{\mathbf{D}})$ respectively. 
\begin{align}
    & h_{\ge i}(\mathbf{D}) \nonumber \\
    = & \Pr_{D_i}[t_i \ge \theta_i] \cdot \E_{D_i}[t_i ~|~ t_i \ge \theta_i] + \Pr_{D_i}[t_i < \theta_i]\cdot h_{\ge i+1}(\mathbf{D}) \nonumber \\
    \ge & \Pr_{D_i}[t_i \ge \theta_i] \cdot \E_{\tilde{D}_i}[t_i ~|~ t_i \ge \theta_i] + (1 - \Pr_{D_i}[t_i \ge \theta_i]) \cdot h_{\ge i+1}(\tilde{\mathbf{D}}) \nonumber \\
    = & \Pr_{\tilde{D}_i}[t_i \ge \theta_i] \cdot \E_{\tilde{D}_i}[t_i ~|~ t_i \ge \theta_i] + (\Pr_{D_i}[t_i \ge \theta_i] -\Pr_{\tilde{D}_i}[t_i \ge \theta_i]) \cdot \E_{\tilde{D}_i}[t_i ~|~ t_i \ge \theta_i] \nonumber \\
    &  + (1 - \Pr_{D_i}[t_i \ge \theta_i]) \cdot h_{\ge i+1}(\tilde{\mathbf{D}})  \label{eqn:bounded_prophet_strongmono}
\end{align}    
where the first inequality comes from $D_i \succeq \tilde{D}_i$ and the induction hypothesis. Furthermore, from the optimality of $S_{\tilde{\mathbf{D}}}$ on $\tilde{\mathbf{D}}$,we must have
\[
\E_{\tilde{D}_i}[t_i ~|~ t_i \ge \theta_i] \ge h_{\ge i+1}(\tilde{\mathbf{D}}) ~,
\]
Otherwise $S_{\tilde{\mathbf{D}}}$ could discard $t_i$ unconditionally and achieve higher expected revenue.    Therefore, 
\begin{align*}
    (\ref{eqn:bounded_prophet_strongmono}) &\ge
    \Pr_{\tilde{D}_i}[t_i \ge \theta_i] \cdot \E_{\tilde{D}_i}[t_i ~|~ t_i \ge \theta_i] + (\Pr_{D_i}[t_i \ge \theta_i] -\Pr_{\tilde{D}_i}[t_i \ge \theta_i]) \cdot h_{\ge i+1}(\tilde{\mathbf{D}})   \\
    &  + (1 - \Pr_{D_i}[t_i \ge \theta_i]) \cdot h_{\ge i+1}(\tilde{\mathbf{D}}) \\
    & = \Pr_{\tilde{D}_i}[t_i \ge \theta_i] \cdot \E_{\tilde{D}_i}[t_i ~|~ t_i \ge \theta_i] + \Pr_{\tilde{D}_i}[t_i < \theta_i] \cdot h_{\ge i+1}(\tilde{\mathbf{D}}) \\
    & = h_{\ge i}(\tilde{\mathbf{D}}) ~.
\end{align*}

\subsection{Prophet Inequality with i.i.d.\ Unbounded Rewards}
\label{app:prophet-unbounded}

In \cite{correa2018prophet}, an $\epsilon$-approximately optimal strategy for known distribution in the unbounded support and i.i.d. case has been introduced. Denote the strategy for distribution $D$ as $R_{\mathbf{D}}$. We restate the algorithm to generate $R_{\mathbf{D}}$ is as follows:

\begin{algorithm}[t]
	\SetAlgoNoLine
    Solve differential equation 
        $y' = y(\log(y)-1)-(\beta-1)$ and $y(0)=1$\\
        where $\beta \approx 1/0.745$ \;
	\For{$i$ from $1$ to $n$}{
	    $\epsilon_i \leftarrow 1-y(i/n)^{1/(n-1)}$\;
	}
	\emph{// online strategy}\;
	$i\leftarrow 1$\;
	\While{$i\le n$}{
	    \If{$\epsilon_i<\frac{\epsilon}{n}$}{
            $\epsilon_i\leftarrow 0$ \emph{// Skip when acceptance probability $<\frac{\epsilon}{n}$}\;
	    }
	    \eIf{$q^{D_i}(t_i) \le \epsilon_i$}{
	        Accept $t_i$ and stop\;
	    }{
	        $i\leftarrow i+1$\;
	    }
	}
	\caption{Approximately Optimal Strategy for Unbounded Support Case}
	\label{alg:apx_for_unbounded_prophet}	
\end{algorithm}

    

In the following discussion, we will show a new sample complexity  bound of achieving $\epsilon$-multiplicative approximation in the unbounded optimal stopping game.

\begin{lemma}\label{lem:ub_unbounded_prophet}
For arbitrary distribution $D$, the sample complexity required for Algorithm \ref{alg:apx_for_unbounded_prophet} is at most $\tilde{O}(\frac{n}{\epsilon^2})$.
\end{lemma}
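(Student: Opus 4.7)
The key observation is that Algorithm~\ref{alg:apx_for_unbounded_prophet} is specified entirely by the quantile thresholds $\epsilon_1, \dots, \epsilon_n$, which depend only on $n$ and $\beta$ and not on $D$. To execute the algorithm without knowing $D$, it therefore suffices to estimate, for each \emph{active} round $i$ (i.e., with $\epsilon_i \ge \epsilon/n$), the top-$\epsilon_i$ quantile value $v_i^\star = \inf\{ v : \Pr_{t \sim D}[t \ge v] \le \epsilon_i \}$. Given $N$ i.i.d.\ samples, I would use the $\lceil \epsilon_i N \rceil$-th largest sample $\hat{v}_i$ as the estimator, and run the algorithm with $\hat{v}_i$ in place of $v_i^\star$.

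The first step is to show, via a Bernstein/DKW-style tail bound and a union bound over the at most $n$ active rounds, that with $N = C \cdot \tfrac{n}{\epsilon^2} \log\tfrac{n}{\epsilon\delta}$ samples one has simultaneously for every active $i$
\[
    \big| \Pr_{t\sim D}[t \ge \hat{v}_i] - \epsilon_i \big| \le O\Big(\sqrt{\tfrac{\epsilon_i \cdot \epsilon^2}{n}} + \tfrac{\epsilon^2}{n}\Big)
\]
with probability at least $1 - \delta$. Set $\hat{\alpha}_i \defeq \Pr_D[t \ge \hat{v}_i]$ for active $i$, and $\hat{\alpha}_i = 0$ otherwise. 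Working in quantile space, both the expected reward of Algorithm~\ref{alg:apx_for_unbounded_prophet} and that of its empirical variant can be written as
\[
    \sum_{i=1}^n \Big(\prod_{j<i}(1-\alpha_j)\Big)\int_0^{\alpha_i} v(q)\,dq, \qquad v(q) \defeq \inf\{v : \Pr_D[t \ge v] \le q\},
\]
with $\alpha_j = \epsilon_j$ in the exact case and $\alpha_j = \hat{\alpha}_j$ in the empirical case. A telescoping term-by-term comparison reduces the problem to bounding $\sum_i \big|\int_{\min(\epsilon_i,\hat{\alpha}_i)}^{\max(\epsilon_i,\hat{\alpha}_i)} v(q)\,dq\big|$ together with the truncation loss from the inactive rounds.

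The main obstacle is the unbounded support: the quantile function $v(q)$ diverges as $q \to 0$, so one cannot bound $\int v(q)\,dq$ crudely. I would control this by the fundamental identity $\E[\max_i t_i] = \int_0^1 \big(1 - (1-q)^n\big) v(q)\,dq$, which implies $\int_0^{1/n} v(q)\,dq \le O\!\big(\tfrac{1}{n}\E[\max_i t_i]\big)$ and, more generally, that integrals of $v$ over intervals of quantile-width $\lesssim \epsilon_i$ are small relative to the optimum $\opt(D) = \E[\max_i t_i]$. Combining this with the quantile concentration bound shows that each term contributes at most $O(\epsilon/n)\cdot \opt(D)$, so the total deviation is $O(\epsilon)\cdot \opt(D)$. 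The inactive-round truncation, by design of the $\epsilon/n$ cutoff, contributes at most another $O(\epsilon)\cdot \opt(D)$. Summing up, the empirical algorithm achieves an $\epsilon$-multiplicative approximation to $R_D(D)$, and hence an $O(\epsilon)$-multiplicative approximation to $\opt(D)$, using $\tilde{O}(n/\epsilon^2)$ samples as claimed.
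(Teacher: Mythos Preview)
Your approach is different from the paper's, and there is a real gap at the step where you control the integrals of the quantile function $v(q)$ over the error intervals. You claim that the telescoping reduces to $\sum_i \big|\int_{\min(\epsilon_i,\hat{\alpha}_i)}^{\max(\epsilon_i,\hat{\alpha}_i)} v(q)\,dq\big|$ and that each term is $O(\epsilon/n)\cdot\opt(D)$. First, the telescoping also produces a term from the change in the weights $\prod_{j<i}(1-\alpha_j)$, which you do not mention; this extra term is $(\sum_k|\epsilon_k-\hat\alpha_k|)\cdot(\text{future reward})$ and must also be bounded by $O(\epsilon)\cdot\opt$. Second, and more seriously, the identity $\E[\max_i t_i]=n\int_0^1 v(q)(1-q)^{n-1}\,dq$ gives only $v(q)\le \opt/(1-(1-q)^n)\approx \opt/(nq)$, so $\int_{\epsilon_k-w_k}^{\epsilon_k} v(q)\,dq\lesssim w_k\cdot\opt/(n\epsilon_k)$ with $w_k\approx\sqrt{\epsilon_k/N}$. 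Summing this over all active rounds, using only the cutoff $\epsilon_k\ge \epsilon/n$, yields a bound of order $\opt\cdot\sqrt{\epsilon}$ rather than $\opt\cdot\epsilon$. To recover the right rate you would need finer control of the joint profile of the $\epsilon_k$, and your sketch does not supply it. Because the error is two-sided (your $\hat v_i$ can be above or below $v_i^\star$), you also cannot avoid the integral term by a simple domination argument.

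The paper sidesteps this difficulty entirely. It runs $R$ on a \emph{dominated} empirical distribution $\tilde E$ (a one-sided shading of $E$) so that $D\succeq \tilde E$ with high probability; hence the learned thresholds are uniformly lower than the true ones and the empirical strategy always stops weakly earlier than $R_{\vec D}$. The loss is then bounded by $\Pr_{\vec t\sim D^n}[\tau(R_{\tilde{\vec E}},\vec t)<\tau(R_{\vec D},\vec t)]\cdot\opt(\vec D)$, because conditioned on $R_{\vec D}$ not having stopped yet its future expected reward is at most $\opt(\vec D)$. This reduces the whole problem to bounding the probability of early stopping, which is at most $\sum_t\tilde O(\sqrt{\epsilon_t(1-\epsilon_t)/N})$; the paper then uses the explicit ODE structure of the $\epsilon_t$ (most are $O(\log n/n)$, and only $O(1)$ of them are larger) to show this sum is $O(\epsilon)$ when $N=\tilde O(n/\epsilon^2)$. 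The key idea you are missing is this one-sided coupling, which replaces all quantile-function integration by a single probability bound.
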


The algorithm is to run the strategy $R$ on a \emph{dominated empirical distribution} $\tilde{E}$, which is defined below:
\begin{equation*}
	F_{\tilde{E}}(x) = \min \{1,  F_{E}(x)-\sqrt{\frac{2F_{E}(x)(1-F_{E}(x))\ln (2Nn\delta^{-1})}{N}} -\frac{4\ln (2Nn\delta^{-1})}{N} \}
	~.
\end{equation*}
In Lemma 5 of \citet{guo2019settling},  it is shown that with high probability $D \succeq  \tilde{E}$ via a standard concentration bound.

In the following discussion, denote the stopping time of running strategy $R$ on input $\mathbf{t}$ as $\tau(R,\mathbf{t})$

\begin{lemma}
With high probability over samples for the algorithm,
\[
h_{R_{\mathbf{D}}}(\mathbf{D}) - h_{R_{\tilde{\mathbf{E}}}}(\mathbf{D}) < Pr_{\mathbf{t}\sim D^n}[\tau(R_{\tilde{\mathbf{E}}},\mathbf{t}) < \tau(R_{\mathbf{D}},\mathbf{t})] \cdot \opt(\mathbf{D}) ~.
\]
\end{lemma}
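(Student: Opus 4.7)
The plan is to couple $R_{\mathbf{D}}$ and $R_{\tilde{\mathbf{E}}}$ on the same reward sequence $\mathbf{t}\sim \mathbf{D}^n$ and exploit the stochastic dominance $\mathbf{D}\succeq \tilde{\mathbf{E}}$. Because the quantile thresholds $\epsilon_i$ in Algorithm~\ref{alg:apx_for_unbounded_prophet} depend only on $n$, both strategies apply identical cutoffs and differ only in which CDF translates a value into its quantile. Since $\mathbf{D}\succeq \tilde{\mathbf{E}}$ implies $q^{D_i}(t)\ge q^{\tilde{E}_i}(t)$ pointwise, whenever $R_{\mathbf{D}}$ accepts at step $i$ (i.e.\ $q^{D_i}(t_i)\le \epsilon_i$) $R_{\tilde{\mathbf{E}}}$ also accepts. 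Therefore $\tau(R_{\tilde{\mathbf{E}}},\mathbf{t}) \le \tau(R_{\mathbf{D}},\mathbf{t})$ deterministically, and on the event $\{\tau(R_{\tilde{\mathbf{E}}},\mathbf{t}) = \tau(R_{\mathbf{D}},\mathbf{t})\}$ both strategies accept the same index, so $h_{R_{\mathbf{D}}}(\mathbf{t})=h_{R_{\tilde{\mathbf{E}}}}(\mathbf{t})$ there.

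On the complementary event $\{\tau(R_{\tilde{\mathbf{E}}},\mathbf{t}) < \tau(R_{\mathbf{D}},\mathbf{t})\}$ I would use non-negativity of rewards to bound $h_{R_{\mathbf{D}}}(\mathbf{t}) - h_{R_{\tilde{\mathbf{E}}}}(\mathbf{t}) \le h_{R_{\mathbf{D}}}(\mathbf{t})$, yielding
\[
h_{R_{\mathbf{D}}}(\mathbf{D})-h_{R_{\tilde{\mathbf{E}}}}(\mathbf{D}) \le \E_{\mathbf{t}}\bigl[\,h_{R_{\mathbf{D}}}(\mathbf{t})\cdot \mathbf{1}[\tau(R_{\tilde{\mathbf{E}}},\mathbf{t}) < \tau(R_{\mathbf{D}},\mathbf{t})]\,\bigr].
\]
The main step is then to show that for every $j\in[n-1]$,
\[
\E\bigl[\,h_{R_{\mathbf{D}}}(\mathbf{t}) \,\big|\, \tau(R_{\tilde{\mathbf{E}}},\mathbf{t})=j,\ \tau(R_{\mathbf{D}},\mathbf{t})>j\,\bigr] \le \opt(\mathbf{D}).
\]
The key observation is that this conditioning event depends only on $t_1,\dots,t_j$, so by independence the tail $(t_{j+1},\dots,t_n)$ still has law $\mathbf{D}_{\ge j+1}$. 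Conditional on this event, $R_{\mathbf{D}}$ simply continues from step $j+1$ with its distribution-independent thresholds, so its conditional expected reward is at most $\opt(\mathbf{D}_{\ge j+1})$; and any strategy for $\mathbf{D}_{\ge j+1}$ can be lifted to the full $n$-round problem by automatically rejecting the first $j$ rounds, giving $\opt(\mathbf{D}_{\ge j+1}) \le \opt(\mathbf{D})$. Summing the conditional bound against $\Pr[\tau(R_{\tilde{\mathbf{E}}},\mathbf{t}) = j,\ \tau(R_{\mathbf{D}},\mathbf{t})>j]$ over $j$ yields the claim.

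I do not foresee a serious obstacle; the argument is essentially a coupling plus conditioning on the stopping time of $R_{\tilde{\mathbf{E}}}$. The only mild care needed is in handling the degenerate case where Algorithm~\ref{alg:apx_for_unbounded_prophet} terminates without accepting anything (so that $\tau(R_{\tilde{\mathbf{E}}},\mathbf{t})=\tau(R_{\mathbf{D}},\mathbf{t})$ still forces the rewards to agree, trivially both equal to $0$), and in justifying the sub-game monotonicity $\opt(\mathbf{D}_{\ge j+1})\le \opt(\mathbf{D})$, which relies only on non-negativity of rewards.
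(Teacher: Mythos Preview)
Your proposal is correct and follows essentially the same approach as the paper's proof: both establish $\tau(R_{\tilde{\mathbf{E}}},\mathbf{t})\le \tau(R_{\mathbf{D}},\mathbf{t})$ from $\mathbf{D}\succeq \tilde{\mathbf{E}}$, drop $h_{R_{\tilde{\mathbf{E}}}}$ by nonnegativity on the event of strict inequality, and then condition on the sets $A_j=\{\tau(R_{\tilde{\mathbf{E}}})=j,\ \tau(R_{\mathbf{D}})>j\}$ to bound the conditional expectation of $h_{R_{\mathbf{D}}}$ by $\opt(\mathbf{D})$. Your write-up is in fact more explicit than the paper's about why the conditioning leaves the tail $(t_{j+1},\dots,t_n)$ with law $\mathbf{D}_{\ge j+1}$ and about the subgame monotonicity $\opt(\mathbf{D}_{\ge j+1})\le \opt(\mathbf{D})$, whereas the paper compresses these into one line.
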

\begin{proof}

    Since $D\succeq \tilde{E}$ with high probability, $\forall t\in[n]$ the value threshold in $R_{\tilde{\mathbf{E}}}$ is lower than that of $R_{\mathbf{D}}$, i.e.
    \[
    F_{\tilde{E}}((F_D)^{-1}(1-\epsilon_t))>F_{D}((F_D)^{-1}(1-\epsilon_t))~,
    \]
    Therefore, fix an input value configuration $\mathbf{t}$, $\tau(R_{\tilde{\mathbf{E}}},\mathbf{t}) \le \tau(R_{\mathbf{D}},\mathbf{t})$, and the only case where the revenue obtained from $R_{\tilde{\mathbf{E}}}$ is smaller than that from $R_D$ should be $\tau(R_{\tilde{\mathbf{E}}},\mathbf{t}) < \tau(R_{\mathbf{D}},\mathbf{t})$.
    
    Now it suffices to show that
    \[\E[h_{R_{\mathbf{D}}}(\mathbf{D})-h_{R_{\tilde{\mathbf{E}}}}(\mathbf{D}) ~|~ \tau(R_{\tilde{\mathbf{E}}},\mathbf{t}) < \tau(R_{\mathbf{D}},\mathbf{t})] = O(\opt(\mathbf{D})) ~.\]
    For all $t\in[n]$, define $A_t$ as the set of input such that $R_{\tilde{\mathbf{E}}}$ accepts at time $t$ but $R_{\mathbf{D}}$ does not accept:
    
    Then we can rewrite the above conditioned expected difference of revenue as follows:
    \begin{align*}
        & \E[h_{R_{\mathbf{D}}}(\mathbf{D})-h_{R_{\tilde{\mathbf{E}}}}(\mathbf{D}) ~|~ \tau(R_{\tilde{\mathbf{E}}},\mathbf{t}) < \tau(R_{\mathbf{D}},\mathbf{t})]\\
        =~ &  \E[h_{R_{\mathbf{D}}}(\mathbf{D})-h_{R_{\tilde{\mathbf{E}}}}(\mathbf{D}) ~|~ \mathbf{t}\in \cup_{t=1}^{n-1}A_t]\\
        \le~ & \E[h_{R_{\mathbf{D}}}(\mathbf{D}) ~|~ \mathbf{t}\in \cup_{t=1}^{n-1}A_t] && (h_{R_{\tilde{\mathbf{E}}}}(\mathbf{D})>0)\\
        =~ & \max_{t\in[n-1]} \E[h_{R_{\mathbf{D}}}(\mathbf{D}) ~|~ \mathbf{t}\in A_t]\\
        =~ & \max_{t\in[n-1]} \E[h_{R_{\mathbf{D}}}(\mathbf{D}) ~|~ R_D \text{ does not accept before }t]\\
        \le~ & \opt(\mathbf{D})
    \end{align*}
\end{proof}

\begin{lemma}
When $m\ge \tilde{O}(n\epsilon^{-2})$, with high probability over samples for the algorithm,
\[
Pr_{\mathbf{t}\sim D^n}[\tau(R_{\tilde{\mathbf{E}}},\mathbf{t}) < \tau(R_{\mathbf{D}},\mathbf{t})] < O(\epsilon) ~.
\]
\end{lemma}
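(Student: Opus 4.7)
The plan is to write the event $\{\tau(R_{\tilde{\mathbf{E}}}, \mathbf{t}) < \tau(R_{\mathbf{D}}, \mathbf{t})\}$ as a disjoint union over the first time $R_{\tilde{\mathbf{E}}}$ accepts, and then bound each piece by a one-dimensional concentration argument. Both strategies use the same distribution-independent quantile targets $\epsilon_t$ produced by the differential equation in Algorithm~\ref{alg:apx_for_unbounded_prophet}; write $v^D_t$ and $v^{\tilde{E}}_t$ for the corresponding value thresholds under $\mathbf{D}$ and $\tilde{\mathbf{E}}$ respectively. Since $\mathbf{D}\succeq \tilde{\mathbf{E}}$ with high probability over the samples, $v^{\tilde{E}}_t \le v^D_t$ for every $t$, and the event ``$R_{\tilde{\mathbf{E}}}$ stops at some $s$ strictly before $R_{\mathbf{D}}$'' forces $t_s\in [v^{\tilde{E}}_s, v^D_s)$. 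Using independence of the coordinates of $\mathbf{t}$, a union bound over $s$ gives
\[
    \Pr_{\mathbf{t}\sim D^n}\!\big[\tau(R_{\tilde{\mathbf{E}}}) < \tau(R_{\mathbf{D}})\big] \;\le\; \sum_{t=1}^n \big(F_D(v^D_t) - F_D(v^{\tilde{E}}_t)\big) \;=\; \sum_{t=1}^n \big(F_{\tilde{E}}(v^{\tilde{E}}_t) - F_D(v^{\tilde{E}}_t)\big),
\]
since $F_D(v^D_t) = F_{\tilde{E}}(v^{\tilde{E}}_t) = 1-\epsilon_t$ at the quantile breakpoints.

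To bound each summand I would invoke the very Bernstein-style concentration that is already encoded in the definition of $\tilde{E}$. Pairing the explicit shading in $F_{\tilde{E}}(v) = F_E(v) - \sqrt{2F_E(v)(1-F_E(v))\log(2Nn\delta^{-1})/N} - 4\log(2Nn\delta^{-1})/N$ with the one-sided Bernstein bound on $|F_E(v) - F_D(v)|$ (uniform in $v$, with probability at least $1-\delta$, as in Lemma 5 of \citet{guo2019settling}), and applying a short self-bounding step to upgrade $1 - F_D(v^{\tilde{E}}_t)$ from ``close to $\epsilon_t$'' to ``$O(\epsilon_t)$'', one obtains
\[
    F_{\tilde{E}}(v^{\tilde{E}}_t) - F_D(v^{\tilde{E}}_t) \;=\; O\!\left(\sqrt{\tfrac{\epsilon_t\,\log(Nn\delta^{-1})}{N}} + \tfrac{\log(Nn\delta^{-1})}{N}\right).
\]

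Summing over the at most $n$ non-skipped steps and applying Cauchy--Schwarz to the square-root terms yields a bound of order
\[
    \sqrt{\tfrac{n\,\log(Nn\delta^{-1})}{N}\cdot \sum_{t=1}^n \epsilon_t} \;+\; \tfrac{n\,\log(Nn\delta^{-1})}{N}.
\]
With $N = \tilde{\Omega}(n/\epsilon^2)$ both terms collapse to $O(\epsilon)$ as soon as $\sum_{t=1}^n \epsilon_t = O(1)$. The last fact I would read off the differential equation $y' = y(\log y - 1) - (\beta - 1)$ with $y(0)=1$: a first-order expansion near $u=0$ gives $y(u) \approx 1 - \beta u$, hence $\epsilon_t = 1 - y(t/n)^{1/(n-1)} = O(t/n^2)$, and $\sum_t \epsilon_t = O(1)$.

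The main obstacle I expect is the self-bounding concentration step, since Bernstein's inequality carries a factor $\sqrt{F_D(1-F_D)}$ rather than $\sqrt{F_E(1-F_E)}$, and one must chase the two bounds carefully to avoid a circular estimate; this is essentially the same subtlety dealt with in Lemma 5 of \citet{guo2019settling}. The analysis of $\sum_t \epsilon_t$, together with the $\epsilon/n$-skipping rule and the endpoint behaviour of $y$ near $u=1$, is more tedious than difficult, but does need explicit bookkeeping to deliver the claimed $O(\epsilon)$.
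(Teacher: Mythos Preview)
Your proposal is essentially correct and follows the same skeleton as the paper: a union bound over the first time the empirical strategy accepts while the true strategy does not, followed by a Bernstein-style bound on the width of each per-step acceptance gap. The paper arrives at the same per-step estimate $O\!\big(\sqrt{\epsilon_t\log(Nn\delta^{-1})/N}+\log(Nn\delta^{-1})/N\big)$ by invoking the doubly-shaded quantile bound from \citet{guo2019settling}; your self-bounding route through $F_{\tilde E}-F_D$ is equivalent.

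The only substantive difference is how you control $\sum_t\sqrt{\epsilon_t}$. The paper does a case split: for $y(t/n)\ge 1/n$ one has the uniform bound $\epsilon_t\le (\log n)/n$, and since $y'\le -(\beta-1)$ there are at most three indices with $y(t/n)<1/n$, each contributing $\sqrt{\epsilon_t}\le 1$. You instead apply Cauchy--Schwarz and reduce to $\sum_t\epsilon_t=O(1)$. Your route is slightly cleaner and even shaves a $\sqrt{\log n}$, but your stated justification---the first-order expansion $y(u)\approx 1-\beta u$ giving $\epsilon_t=O(t/n^2)$---is only valid for $t/n$ bounded away from $1$. Near the endpoint $y(t/n)\to 0$ and $\epsilon_t\to 1$, so $\epsilon_t=O(t/n^2)$ is simply false there. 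The global bound $\sum_t\epsilon_t=O(1)$ is nevertheless true: using $1-z^{1/(n-1)}\le -\tfrac{\log z}{n-1}$ and the two-sided estimate $(\beta-1)(1-u)\le y(u)\le \beta(1-u)$ (from $y'\in[-\beta,-(\beta-1)]$ and $y(1)=0$) gives $\sum_t\epsilon_t\le \tfrac{1}{n-1}\sum_t(-\log y(t/n))=O(1)$ via Stirling. So your plan goes through once you replace the local Taylor heuristic by this global argument---which is effectively the ``endpoint bookkeeping'' you already anticipated.
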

\begin{proof}

\begin{align}
    &Pr_{\mathbf{t}\sim D^n}[\tau(R_{\tilde{\mathbf{E}}},\mathbf{t}) < \tau(R_{\mathbf{D}},\mathbf{t})] \\
    \le &
    Pr_{\mathbf{t}\sim D^n}[\mathbf{t} \in \sum_{t=1}^{n-1}A_t] \nonumber \\
    \le & \sum_{t=1}^{n-1} Pr_{\mathbf{t}\sim D^n}[\epsilon_t\le q^{\tilde{E}}(X_i)\le \epsilon_t+\sqrt{\frac{8\epsilon_t(1-\epsilon_t)\ln (2Nn\delta^{-1})}{N}}+\frac{7\ln (2Nn\delta^{-1})}{N}] \nonumber \\
    \le & \sum_{t=1}^{n-1}(\sqrt{\frac{8\epsilon_t(1-\epsilon_t)\ln (2Nn\delta^{-1})}{N}}+\frac{7\ln (2Nn\delta^{-1})}{N})\\
     = & \sum_{t=1}^{n-1}\sqrt{\frac{8\epsilon_t(1-\epsilon_t)\ln (2Nn\delta^{-1})}{N}}+ O(\epsilon^2)
    ~,\label{eqn:unbounded_prophet_different_stoptime_ub}
\end{align}
the second inequality is also shown in Lemma 7 of \citet{guo2019settling} to be hold with high probability.
Recall from Algorithm~\ref{alg:apx_for_unbounded_prophet} that $\epsilon_t=1-y(\frac{t}{n})^{1/(n-1)}$. Now we bound (\ref{eqn:unbounded_prophet_different_stoptime_ub}) for $y(\frac{t}{n})<\frac{1}{n}$ or $y(\frac{t}{n})>\frac{1}{n}$:

\textbf{Case 1: $y(\frac{t}{n})\ge\frac{1}{n}$. }In this case, 
\[
\epsilon_t = 1-y^{\frac{1}{n-1}} \le 1- e^{\frac{\log y}{n-1}} \le 1- e^{-\frac{\log n}{n-1}}\le \frac{\log n}{n} ~,
\]
Therefore when $m\ge n\epsilon^{-2}\log n$
\[
\sum_{y(\frac{t}{n})\ge\frac{1}{n}}\sqrt{\frac{\epsilon_t(1-\epsilon_t)\ln (2Nn\delta^{-1})}{N}} \le n\cdot \sqrt{\frac{\log n/n \cdot 1\cdot \ln (2Nn\delta^{-1})}{N}} = O(\epsilon) ~.
\]

\textbf{Case 2: $y(\frac{t}{n}) < \frac{1}{n}$. } Since $y(x)\in [0,1]$ when $x\in[0,1]$, we have $\forall x\in[0,1]$,
\[
y'(x) = y(\log y-1)-(\beta-1) \le -(\beta - 1) \le -0.3414
\]
Therefore 
\[
|\{t\in[n-1] \text{, s.t. } y(\frac{t}{n}) < \frac{1}{n}\} | \le \frac{1}{n} \cdot \frac{1}{0.3414} \cdot n \le 3 ~,
\]
and when $N\ge n\epsilon^{-2}\log n$,
\[
\sum_{y(\frac{t}{n})<\frac{1}{n}}\sqrt{\frac{\epsilon_t(1-\epsilon_t)\ln (2Nn\delta^{-1})}{N}} \le 3\cdot \sqrt{\frac{\epsilon_t(1-\epsilon_t)}{N}\ln (2Nn\delta^{-1})}  =O (\frac{\epsilon}{\sqrt{n}}) ~.
\]

Combining the two cases, we have
\begin{align*}
    &Pr_{\mathbf{t}\sim D^n}[\tau(R_{\tilde{\mathbf{E}}},\mathbf{t}) < \tau(R_{\mathbf{D}},\mathbf{t})]\\
    &\le 4\sum_{y(\frac{t}{n})<\frac{1}{n}}\sqrt{\frac{\epsilon_t(1-\epsilon_t)\ln (2Nn\delta^{-1})}{N}} + 4\sum_{y(\frac{t}{n})\ge \frac{1}{n}}\sqrt{\frac{\epsilon_t(1-\epsilon_t)\ln (2Nn\delta^{-1})}{N}} +O(\epsilon^2)\\
    &= O(\epsilon) ~.
\end{align*}

\end{proof}

\section{Missing Proofs about Pandora's Problem}
\label{app:pandora}

\subsection{Optimal Hypothesis}
\label{app:pandora-optimal}

\paragraph{Optimal strategy of Pandora's problem}
An optimal strategy $S_{\mathbf{D}}$ for $\mathbf{D}$, introduced by \cite{weitzman1979optimal}, opens the boxes sequentially according to its \emph{reservation value} $\sigma_i$, the threshold of the maximum realized values below which opening the $i+1^{th}$ box will give rise to a higher expected reward.
A formal definition of $\sigma_i$ is as follows:
\begin{equation*}
    \sigma_i \defeq \inf_{\sigma}(\E_{t_i\sim D_i}[(t_i - \sigma)^+]=c_i)
    ~.
\end{equation*}
we assume without loss of generality that the reserve value is non-increasing with the index of each box, i.e. $\sigma_1\ge \sigma_2 \ge \sigma_n$. 

Also, for convenience we use $U_i$ to denote the maximum value among the first $i$ boxes:
    \begin{equation*}
        U_i\defeq\max_{j\le i}t_j
        ~,
    \end{equation*}
and let $U_0\defeq 0$. 

We restate the optimal strategy $S_{\mathbf{D}}$ in \cite{weitzman1979optimal} as Algorithm~\ref{alg:one}:
\begin{algorithm}[t]
	\SetAlgoNoLine
	$i\leftarrow 1$\;
	\While{$i\le n$}{
	    Open the $i^{th}$ box and set $U_i \leftarrow \max_{j\le i}t_j$\;
	    \eIf{$U_i \ge \sigma_i$}{
	         Accept $U_i$ and stop  \emph{// accept the highest opened box so far}\;
	    }{
	        $i\leftarrow i+1$\;
	    }
	}
	\If{i=n}{
	    Accept $U_n$ and stop \emph{// if no item has been accepted, accept the box with highest reward}\;
	}
	\caption{Frequency Number Computation}
	\label{alg:one}
\end{algorithm}


\subsection{Discretization and Sample Complexity: Proof of Theorem~\ref{thm:finite-pandora}}

The proof of Theorem~\ref{thm:finite-pandora} is almost the same as that of Theorem~\ref{thm:finite-prophet}. We include it here only for completeness.

Let $\mathbf{D}_{\epsilon/2}$ be the discretized version of $\mathbf{D}$ obtained from rounding the values of each marginal distribution $D_i$ down to the nearest multiples of $\epsilon$. 
Also, for all type $\mathbf{t}\sim \mathbf{D}$, let $\mathbf{t}_{\epsilon/2}$ be its downward discretization to the multiples of $\frac{\epsilon}{2}$.
For the optimal strategy $S_{\mathbf{D}}$ define a coupling optimal strategy  $S_{\mathbf{D}}'$ for discretized type $\mathbf{t}_{\epsilon}/2$: 
First re-sample 
\[
    \mathbf{r}\sim \prod_{i=1}^n D_i(t ~|~ t\in [(t_i)_{\epsilon/2}, (t_i)_{\epsilon/2}+\frac{\epsilon}{2})) ~,
\]
then perform the original $S_{\mathbf{D}}$ on $\mathbf{t}'=\mathbf{t}_{\epsilon/2} + \mathbf{r}$ and return the accepted item. It is easy to see that after the re-sample step $\mathbf{t}'$ has the same distribution as $\mathbf{t}\sim \mathbf{D}$.
Hence at any step $i\in[n]$ and for any $j\in[i]$, the probability that $S_{\mathbf{D}}'$ accepts the $j^{th}$ box of $\mathbf{t}_{\epsilon/2}\sim \mathbf{D}_{\epsilon/2}$ equals to that of $S_{\mathbf{D}}$ accepts the $j^{th}$ box of $\mathbf{t}\sim \mathbf{D}$. We further show that the expected reward of the coupling optimal strategy
is an $\epsilon$-additive approximation of the original one:

\begin{lemma}\label{lem:pandora-coupling-optimal-apx}
Under the above definition, we have
\[
    \E_{S_{\mathbf{D}}'}(h_{S_{\mathbf{D}}'}(\mathbf{D}_{\epsilon/2})) \ge h_{S_{\mathbf{D}}}(\mathbf{D}) - \frac{\epsilon}{2} ~.    
\]
\end{lemma}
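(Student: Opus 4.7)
The plan is to couple the execution of $S_{\mathbf{D}}'$ on a discretized type $\mathbf{t}_{\epsilon/2}$ with the execution of $S_{\mathbf{D}}$ on the resampled continuous type $\mathbf{t}' = \mathbf{t}_{\epsilon/2} + \mathbf{r}$, mirroring the proof of Lemma~\ref{lem:prophet-coupling-optimal-apx}. By construction $S_{\mathbf{D}}'$ simulates $S_{\mathbf{D}}$ on $\mathbf{t}'$, so for every realization of $(\mathbf{t}_{\epsilon/2},\mathbf{r})$ the two executions open exactly the same set of boxes $B \subseteq [n]$ in the same order and accept the same index $i^{*} \in B$. The crucial observation is that the per-box costs $c_i$ do not depend on the realized reward, so the total cost $\sum_{i\in B} c_i$ is identical across the two coupled executions.

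Next I would compare the two rewards pointwise. We have
\[
h_{S_{\mathbf{D}}}(\mathbf{t}') \;=\; t'_{i^{*}} - \sum_{i\in B} c_i
\qquad\text{and}\qquad
h_{S_{\mathbf{D}}'}(\mathbf{t}_{\epsilon/2}) \;=\; (t_{\epsilon/2})_{i^{*}} - \sum_{i\in B} c_i .
\]
Since $0 \le t'_{i^{*}} - (t_{\epsilon/2})_{i^{*}} < \epsilon/2$ by the definition of downward discretization to multiples of $\epsilon/2$, the costs cancel and we obtain the pointwise bound
\[
h_{S_{\mathbf{D}}'}(\mathbf{t}_{\epsilon/2}) \;\ge\; h_{S_{\mathbf{D}}}(\mathbf{t}') - \tfrac{\epsilon}{2} .
\]

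Finally I would take the expectation over both $\mathbf{t}_{\epsilon/2}\sim \mathbf{D}_{\epsilon/2}$ and the internal randomness $\mathbf{r}$ of $S_{\mathbf{D}}'$. Because the resampling distribution is designed so that $\mathbf{t}' = \mathbf{t}_{\epsilon/2} + \mathbf{r}$ has exactly the law of $\mathbf{D}$, the right-hand side integrates to $h_{S_{\mathbf{D}}}(\mathbf{D}) - \epsilon/2$, while the left-hand side is by definition $\E_{S_{\mathbf{D}}'}(h_{S_{\mathbf{D}}'}(\mathbf{D}_{\epsilon/2}))$, which yields the lemma. I do not anticipate any genuine obstacle: the argument parallels the prophet case almost verbatim, and the only point that warrants explicit justification is that the costs cancel between the two coupled executions, so that only the discretization loss on the single accepted box's value contributes to the additive gap.
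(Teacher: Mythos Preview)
Your proposal is correct and follows the same coupling approach the paper intends: the paper's own proof simply reads ``Same as the proof of Lemma~\ref{lem:prophet-coupling-optimal-apx}.'' Your write-up is in fact cleaner than the paper's sketch for the prophet case, and you correctly isolate the one extra point needed for Pandora --- that the opened set $B$ (hence the total cost $\sum_{i\in B}c_i$) is identical across the two coupled executions, so only the $\epsilon/2$ discretization loss on the accepted reward survives. One minor remark: in Pandora's problem the algorithm takes the \emph{maximum} observed reward rather than a fixed index, so strictly speaking $h_{S_{\mathbf{D}}'}(\mathbf{t}_{\epsilon/2}) = \max_{i\in B}(t_{\epsilon/2})_i - \sum_{i\in B}c_i \ge (t_{\epsilon/2})_{i^{*}} - \sum_{i\in B}c_i$, but this only strengthens your pointwise inequality and the conclusion is unaffected.
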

\begin{proof}
Same as the proof of Lemma~\ref{lem:prophet-coupling-optimal-apx}.
\end{proof}

Now we go to the proof of Theorem~\ref{thm:finite-pandora}. 
Let $\mathbf{E}_{\epsilon/2}$ be the distribution obtained from rounding down the values of each dimension of $\mathbf{E}$ to the nearest supports of $\frac{\epsilon}{2}$. 
We want to show that, when $N\ge C\cdot\frac{n^3}{\epsilon^3}\log(\frac{n}{\epsilon\delta})$, $h_{S_{\mathbf{E}_{\epsilon/2}}}$ will become the near-optimal hypothesis of $\mathbf{D}$.

Since $\mathbf{E}_{\epsilon/2}$ is also the empirical distribution of $\mathbf{D}_{\epsilon/2}$, and has finite support with size $\frac{1}{\epsilon/2}$ in each dimension. Because the value value is bounded in $[-n,1]$, a corollary of Theorem~\ref{thm:error_finite}
shows that when $N \ge C\cdot\frac{n^3}{\epsilon^3}\log(\frac{n}{\epsilon\delta})$ for a large enough constant $C$,
\begin{equation}
     \frac{h_{ S_{\mathbf{D}_{\epsilon/2}}} (\mathbf{D}_{\epsilon/2})+n}{n+1} - \frac{h_{S_{\mathbf{E}_{\epsilon/2}}} (\mathbf{D}_{\epsilon/2} )+n}{n+1} \le \frac{\epsilon}{2(n+1)}  \label{eqn:discretize-aprox-pandora}
\end{equation}
Then
\begin{align*}
     h_{S_{\mathbf{D}}}(\mathbf{D}) &\le \E_{S_{\mathbf{D}}'}(h_{S_{\mathbf{D}}'}(\mathbf{D}_{\epsilon/2})) + \frac{\epsilon}{2} && \text{( Lemma~\ref{lem:pandora-coupling-optimal-apx})} \\
     &\le h_{S_{\mathbf{D}_{\epsilon/2}}}(\mathbf{D}_{\epsilon/2}) + \frac{\epsilon}{2} && \text{(Optimality of $S_{\mathbf{D}_{\epsilon/2}}$)}\\
     &\le h_{S_{\mathbf{E}_{\epsilon/2}}}(\mathbf{D}_{\epsilon/2}) + \epsilon && \text{( (\ref{eqn:discretize-aprox-pandora}))}
\end{align*}

It remains to show $h_{S_{\mathbf{E}_{\epsilon/2}}}(\mathbf{D}_{\epsilon/2})$ approximates $h_{S_{\mathbf{E}_{\epsilon/2}}}(\mathbf{D})$, i.e. the actual expected reward from the learned hypothesis. We elaborate it as follows:

\begin{lemma}
\[
    h_{S_{\mathbf{E}_{\epsilon/2}}}(\mathbf{D}_{\epsilon/2}) \le h_{S_{\mathbf{E}_{\epsilon/2}}}(\mathbf{D})  ~.
\]
\end{lemma}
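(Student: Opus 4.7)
The plan is to mirror the argument for prophet inequality (Lemma~\ref{lem:prophet-discretize-empirical-mono}), adapted to Weitzman's reservation strategy. Recall that $S_{\mathbf{E}_{\epsilon/2}}$ is specified by reservation values $\sigma_1,\dots,\sigma_n$ computed from $\mathbf{E}_{\epsilon/2}$. First I would observe that one may assume without loss of generality that each $\sigma_i$ is a multiple of $\epsilon/2$: since $\mathbf{E}_{\epsilon/2}$ is supported on multiples of $\epsilon/2$, rounding $\sigma_i$ up to the nearest such multiple does not affect any comparison $U_i \ge \sigma_i$ arising when $U_i$ is itself a multiple of $\epsilon/2$. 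Hence the rounded strategy performs identically on $\mathbf{E}_{\epsilon/2}$ (remaining optimal there), and because $\mathbf{D}_{\epsilon/2}$ is likewise supported on multiples of $\epsilon/2$, the LHS $h_{S_{\mathbf{E}_{\epsilon/2}}}(\mathbf{D}_{\epsilon/2})$ is unchanged by passing to this canonical representative of $S_{\mathbf{E}_{\epsilon/2}}$.

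Under this assumption I would establish the stronger pointwise claim: for every $\mathbf{t} \in [0,1]^n$ with coordinatewise rounding $\mathbf{t}_{\epsilon/2}$, the reward collected by $S_{\mathbf{E}_{\epsilon/2}}$ on $\mathbf{t}$ is at least that on $\mathbf{t}_{\epsilon/2}$. At each step $i$, the only decision is whether $U_i \ge \sigma_i$; since $\sigma_i$ is a multiple of $\epsilon/2$, for any real $t_j$ one has $t_j \ge \sigma_i \iff \lfloor 2t_j/\epsilon \rfloor\cdot\epsilon/2 \ge \sigma_i$, so taking the max over $j \le i$ yields $U_i \ge \sigma_i \iff U_i^{\epsilon/2} \ge \sigma_i$, where $U_i^{\epsilon/2}$ denotes the running max on the discretized input. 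Hence the set of opened boxes $O$ coincides on the two inputs, so the total cost $\sum_{j \in O} c_j$ agrees exactly, while the collected value changes from $\max_{j \in O} t_j$ to $\max_{j \in O}(t_j)_{\epsilon/2} \le \max_{j \in O} t_j$; the former therefore weakly dominates.

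Taking expectations over $\mathbf{t} \sim \mathbf{D}$ and using the fact that the pushforward of $\mathbf{D}$ under coordinatewise rounding is exactly $\mathbf{D}_{\epsilon/2}$ gives $h_{S_{\mathbf{E}_{\epsilon/2}}}(\mathbf{D}) \ge \E_{\mathbf{t}\sim\mathbf{D}}\bigl[h_{S_{\mathbf{E}_{\epsilon/2}}}(\mathbf{t}_{\epsilon/2})\bigr] = h_{S_{\mathbf{E}_{\epsilon/2}}}(\mathbf{D}_{\epsilon/2})$, as desired. The main obstacle is justifying the initial WLOG reduction: the reservation values $\sigma_i$ solve a continuous equation and need not be aligned with the grid a priori, but $S_{\mathbf{E}_{\epsilon/2}}$ is only defined up to any modification that preserves optimality on $\mathbf{E}_{\epsilon/2}$, so we are free to choose the grid-aligned representative, which is precisely what unlocks the clean pointwise coupling above.
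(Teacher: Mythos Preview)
Your proposal is correct and follows essentially the same approach as the paper, which simply says ``Same as the proof of Lemma~\ref{lem:prophet-discretize-empirical-mono}.'' You have supplied the natural adaptation to Weitzman's strategy: grid-align the reservation values (leaving the behavior on any $\epsilon/2$-grid input unchanged), observe that the set of opened boxes then coincides on $\mathbf{t}$ and $\mathbf{t}_{\epsilon/2}$ so costs match exactly while the collected max only drops, and integrate via the pushforward.
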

\begin{proof}
Same as the proof of Lemma~\ref{lem:prophet-discretize-empirical-mono}.
\end{proof}

With this lemma in hand, we can complete the proof of Theorem~\ref{thm:finite-pandora}.

\subsection{Strong Monotonicity: Proof of Lemma~\ref{lem:strong-monotone-pandora}}
\label{app:strong-monotone-pandora}


It suffices to show that for any $\mathbf{D} \succeq \mathbf{\tilde{D}}$,
\[
    h_{S_\mathbf{\tilde{D}}}(\mathbf{D}) \ge h_{S_\mathbf{\tilde{D}}}(\mathbf{\tilde{D}}) ~.
\]


\begin{lemma}\label{lem:pandora_optimality}
    For any $H\le  \sigma_{i+1}$,
    \[
        h_{S_{\tilde{\mathbf{D}}}}(\tilde{\mathbf{D}} | U_i=H) \le \sigma_{i+1}-\sum_{j=1}^i c_j
    \]
\end{lemma}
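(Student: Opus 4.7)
The plan is to recast the lemma as a bound on a value function and prove it by backward induction on $i$. Specifically, let $W_j(H)$ denote the expected net reward (reward minus future costs) of continuing to run $S_{\tilde{\mathbf{D}}}$ from the state in which boxes $1,\ldots,j-1$ have been opened with running max $H$; then $h_{S_{\tilde{\mathbf{D}}}}(\tilde{\mathbf{D}} \mid U_i = H) = W_{i+1}(H) - \sum_{j=1}^i c_j$, so the lemma is equivalent to the cleaner statement $W_{i+1}(H) \le \sigma_{i+1}$ whenever $H \le \sigma_{i+1}$. Two elementary ingredients will drive the whole argument: (a) the defining identity $c_j = \E_{t_j \sim \tilde{D}_j}[(t_j - \sigma_j)^+]$, and (b) the pointwise inequality $(t-a)^+ - (t-b)^+ \le b-a$ for $a \le b$, which in expectation yields $\E[(t-a)^+] - \E[(t-b)^+] \le b-a$. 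Combined with the WLOG ordering $\sigma_1 \ge \cdots \ge \sigma_n$, these are the only facts I will need.

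The base case of the induction is $i = n-1$: with only box $n$ remaining, $W_n(H) = -c_n + \E[\max(H, t_n)] = H - c_n + \E[(t_n - H)^+]$, and substituting (a) and applying (b) with $a = H, b = \sigma_n$ immediately yields $W_n(H) \le \sigma_n$. For the inductive step, the crucial observation is the uniform bound $W_{i+2}(H') \le \max(H', \sigma_{i+2})$ for every $H'$: if $H' \ge \sigma_{i+2}$ the strategy stops so $W_{i+2}(H') = H'$, and if $H' < \sigma_{i+2}$ the inductive hypothesis gives $W_{i+2}(H') \le \sigma_{i+2}$. Plugging this into the one-step recursion $W_{i+1}(H) = -c_{i+1} + \E[W_{i+2}(\max(H, t_{i+1}))]$ and setting $M = \max(H, \sigma_{i+2})$ — which is at most $\sigma_{i+1}$ by $H \le \sigma_{i+1}$ and $\sigma_{i+2} \le \sigma_{i+1}$ — reduces everything to a single application of (a) and (b) with $a = M, b = \sigma_{i+1}$, giving $W_{i+1}(H) \le \sigma_{i+1}$.

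The main obstacle I anticipate is precisely the uniform bound $W_{i+2}(H') \le \max(H', \sigma_{i+2})$; without collapsing the ``stop'' and ``continue'' branches of $S_{\tilde{\mathbf{D}}}$ into a single inequality, the recursion for $W_{i+1}(H)$ splits into cases that do not line up naturally with the defining identity of $\sigma_{i+1}$. Once this observation is isolated, the remainder is routine algebra with reservation-value identities. A minor technical point to double-check is that the conditional expectation in $h_{S_{\tilde{\mathbf{D}}}}(\tilde{\mathbf{D}} \mid U_i = H)$ is well-defined for distributions with continuous parts, but this is a standard measure-theoretic clarification and does not affect the substance of the argument.
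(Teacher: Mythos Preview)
Your proof is correct and takes a genuinely different route from the paper's. The paper argues in two strokes: (i) because $S_{\tilde{\mathbf{D}}}$ is the \emph{optimal} strategy for $\tilde{\mathbf{D}}$, the conditional value $h_{S_{\tilde{\mathbf{D}}}}(\tilde{\mathbf{D}}\mid U_i=H)$ is monotone in $H$, so it suffices to check $H=\sigma_{i+1}$; and (ii) at $H=\sigma_{i+1}$ the reservation-value indifference property of Weitzman's solution says that stopping and continuing yield the same value, namely $\sigma_{i+1}-\sum_{j\le i}c_j$. Both facts are standard but are invoked rather tersely (the phrase ``due to the optimality of the mechanism'' is doing a lot of work, since na\"ive optimality would only give the opposite inequality).

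Your approach instead proves $W_{i+1}(H)\le\sigma_{i+1}$ by a self-contained backward induction, using nothing beyond the defining identity $c_j=\E[(t_j-\sigma_j)^+]$ and the pointwise bound $(t-a)^+-(t-b)^+\le b-a$. The key insight you isolate---collapsing the stop/continue branches into the single bound $W_{i+2}(H')\le\max(H',\sigma_{i+2})$---is exactly what makes the recursion line up cleanly with the reservation-value identity at level $i+1$. This buys you a fully elementary proof that does not appeal to optimality or to Weitzman's indifference characterization as a black box; in effect you are re-deriving the relevant piece of that characterization. The paper's argument is shorter but leans on prior knowledge of the Weitzman solution; yours is longer but more transparent and would survive even if the reader did not already know those properties.
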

\begin{proof}
    Since $S_{\Tilde{\mathbf{D}}}$ is the optimal strategy, $h_{S_{\tilde{\mathbf{D}}}}(\tilde{\mathbf{D}} | U_i=H)$ is monotone in $H$, so it is only necessary to show the case when $U_i=H= \sigma_{i+1}$. But in this case, simply choosing the largest among first $i$ boxes would give a revenue of $\sigma_{i+1}-\sum_{j=1}^i c_j$, so 
    \[
        h_{S_{\tilde{\mathbf{D}}}}(\tilde{\mathbf{D}} | U_i=\sigma_{i+1})\le \sigma_{i+1}-\sum_{j=1}^i c_j 
    \]
    is true due to the optimality of the mechanism.
\end{proof}

We will use backward induction from $n$ to $0$ to prove the following statement.
\begin{lemma}\label{lem:pandora_induction}
    For any $0\le i\le n$ and any $u_i'\le u_i\le \sigma_{i+1}$,
    \[
    h_{S_{\tilde{\mathbf{D}}}}(\tilde{\mathbf{D}} | U_i=u_i) > h_{S_{\tilde{\mathbf{D}}}}(\tilde{\mathbf{D}} | U_i=u_i')
    ~.
    \]
\end{lemma}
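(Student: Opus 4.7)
The plan is to prove the lemma by backward induction on $i$, establishing the slightly stronger claim that the continuation value is monotone on all of $[0,\infty)$. Define
\[
V_i(u) \defeq h_{S_{\tilde{\mathbf{D}}}}(\tilde{\mathbf{D}} \mid U_i = u) + \sum_{j=1}^i c_j,
\]
the expected payoff from step $i+1$ onward (final accepted reward minus costs of boxes opened strictly after step $i$) under $S_{\tilde{\mathbf{D}}}$, conditional on the running maximum of the first $i$ boxes being $u$. Since $\sum_{j=1}^i c_j$ is constant in $u$, it suffices to show $V_i$ is non-decreasing on $[0,\infty)$ for every $i$ and then specialize to $u_i', u_i \le \sigma_{i+1}$.

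From Algorithm~\ref{alg:one}, the Bellman recursion for $S_{\tilde{\mathbf{D}}}$ gives
\[
V_i(u) = \begin{cases} u & \text{if } u \ge \sigma_i, \\ -c_{i+1} + \E_{t_{i+1} \sim \tilde{D}_{i+1}}\bigl[V_{i+1}(\max(u, t_{i+1}))\bigr] & \text{if } u < \sigma_i, \end{cases}
\]
for $i < n$, with base case $V_n(u) = u$ (the algorithm always accepts $U_n$ at termination). The base is trivially monotone. For the inductive step, assume $V_{i+1}$ is non-decreasing. On the piece $u \ge \sigma_i$, $V_i(u) = u$ is monotone. On $u < \sigma_i$, the map $u \mapsto \max(u, t)$ is non-decreasing pointwise in $t$, so composing with the non-decreasing $V_{i+1}$ keeps the integrand non-decreasing in $u$, and expectation preserves this.

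The only place the argument can fail is the gluing at $u = \sigma_i$: we need $\lim_{u \uparrow \sigma_i} V_i(u) \le V_i(\sigma_i) = \sigma_i$. For $u \in [\sigma_{i+1}, \sigma_i)$ we have $\max(u, t_{i+1}) \ge u \ge \sigma_{i+1}$, so by the upper-piece of $V_{i+1}$'s definition the integrand simplifies: $V_{i+1}(\max(u, t_{i+1})) = \max(u, t_{i+1})$. Letting $u \uparrow \sigma_i$ yields
\[
V_i(\sigma_i^-) = \sigma_i - c_{i+1} + \E_{t_{i+1}}\bigl[(t_{i+1} - \sigma_i)^+\bigr].
\]
By the defining property of $\sigma_{i+1}$, $\E[(t_{i+1} - \sigma_{i+1})^+] = c_{i+1}$; since $\sigma \mapsto \E[(t_{i+1}-\sigma)^+]$ is non-increasing and $\sigma_i \ge \sigma_{i+1}$, we obtain $\E[(t_{i+1} - \sigma_i)^+] \le c_{i+1}$, hence $V_i(\sigma_i^-) \le \sigma_i$.

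The main obstacle, though mild in hindsight, is precisely this boundary calculation: without invoking the Weitzman reservation-value identity $\E[(t_{i+1}-\sigma_{i+1})^+] = c_{i+1}$ together with the ordering $\sigma_i \ge \sigma_{i+1}$, the two pieces of $V_i$ need not glue into a monotone function. The strict inequality stated in the lemma requires mild non-degeneracy of $\tilde{\mathbf{D}}$ on $[u_i', u_i]$; the proof actually yields the weak inequality, which is all that the downstream strong-monotonicity argument for Pandora's problem uses.
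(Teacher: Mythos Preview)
Your argument is correct for the lemma as literally stated and follows the same backward-induction skeleton as the paper. The packaging differs: you write an explicit Bellman recursion for the continuation value $V_i$ and handle the boundary via the Weitzman identity $\E[(t_{i+1}-\sigma_{i+1})^+]=c_{i+1}$, whereas the paper splits into three cases (both above, straddling, both below the next threshold) and invokes Lemma~\ref{lem:pandora_optimality} for the straddling case. Your boundary computation and the paper's Lemma~\ref{lem:pandora_optimality} are really the same fact seen from two sides, so the two proofs are close cousins rather than genuinely different routes. One indexing remark: you take the stopping threshold in $V_i$ to be $\sigma_i$, matching Algorithm~\ref{alg:one} as written; the paper's proofs (and the standard Weitzman rule) use $\sigma_{i+1}$ after box $i$. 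With $\sigma_{i+1}$ your gluing becomes an exact equality $V_i(\sigma_{i+1}^-)=\sigma_{i+1}$, which also removes the need for the interval $[\sigma_{i+1},\sigma_i)$ to be nonempty.

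There is one substantive point worth flagging. The paper's proof of this lemma actually establishes the stronger inequality
\[
h_{S_{\tilde{\mathbf{D}}}}(\mathbf{D}\mid U_i=u_i)\ \ge\ h_{S_{\tilde{\mathbf{D}}}}(\tilde{\mathbf{D}}\mid U_i=u_i')
\]
for $\mathbf{D}\succeq\tilde{\mathbf{D}}$, using the dominance $D_{i+1}\succeq\tilde D_{i+1}$ in the coupling step; the lemma statement with $\tilde{\mathbf{D}}$ on both sides appears to be a typo. That stronger version, evaluated at $i=0$, is exactly strong monotonicity. Your proof only yields monotonicity of $u\mapsto V_i(u)$ under the single distribution $\tilde{\mathbf{D}}$, so by itself it does not give Lemma~\ref{lem:strong-monotone-pandora}. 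The fix is easy within your framework: carry a second continuation value $V_i^{\mathbf D}$ and prove by the same induction that $V_i^{\mathbf D}(u)\ge V_i^{\tilde{\mathbf D}}(u')$ whenever $u\ge u'$, using your global monotonicity of $V_{i+1}^{\tilde{\mathbf D}}$ together with a coupling of $t_{i+1}$ under $D_{i+1}\succeq\tilde D_{i+1}$. Your closing remark about weak versus strict inequality is accurate.
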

\begin{proof}
    This holds trivially when $i=n$. In the following discussion, assume $i<n$ and the lemma holds for $i+1$.

    If $U_i\le \sigma_{i+1}$, the mechanism will choose to open the next box. 
    In this case, because $D_{i+1}\succeq \tilde{D_{i+1}}$, it suffices to show that for any $t_{i+1}\ge t_{i+1}'$, 
    \begin{equation*}
        h_{S_{\tilde{\mathbf{D}}}}(\mathbf{D} | U_i=u_i,X_{i+1}=t_{i+1}) \ge h_{S_{\tilde{\mathbf{D}}}}(\tilde{\mathbf{D}} | U_i=u_i',X_{i+1}=t_{i+1}')
        ~.
    \end{equation*}
    So it is enough to show that for any $u_{i+1}\ge u_{i+1}'$,
    \begin{equation*}
        h_{S_{\tilde{\mathbf{D}}}}(\mathbf{D} |U_{i+1}=u_{i+1}) \ge h_{S_{\tilde{\mathbf{D}}}}(\tilde{\mathbf{D}} |U_{i+1}=u_{i+1}')
        ~.
    \end{equation*}
    
    We will consider three cases. In the first case, $u_{i+1}\ge u_{i+1}'> \sigma_{i+2}$. Then 
    \[ h_{S_{\Tilde{\mathbf{D}}}}(\mathbf{D})=u_{i+1}-\sum_{j=1}^i c_i\ge u_{i+1}'-\sum_{j=1}^i c_j=h_{S_{\Tilde{\mathbf{D}}}}(\Tilde{\mathbf{D}}) ~.\]
    
    In the second case, $u_{i+1}>\sigma_{i+2}\ge u_{i+1}'$, then we can apply Lemma~\ref{lem:pandora_optimality} and have
    \[
    h_{S_{\Tilde{\mathbf{D}}}}(\Tilde{\mathbf{D}})\le \sigma_{i+2}-\sum_{j=1}^{i+1} c_i< u_{i+1} -\sum_{j=1}^{i+1}c_k = h_{S_{\Tilde{\mathbf{D}}}}(\mathbf{D})
    \]
    
    In third case, $\sigma_{i+2}\ge u_{i+1}\ge u_{i+1}'$, the inequality follows directly from induction assumption.
\end{proof}

\subsection{Tight Bounds: Proof of Theorem~\ref{thm:pandora}}
\label{app:pandora-improved}

\subsubsection{Upper Bound}

We start by recalling the main obstacle for getting an $\tilde{O} \big( \frac{n}{\epsilon^2} \big)$ sample complexity upper bound as a direct corollary of Theorem~\ref{thm:strong-monotone-upper-bound}.
In Pandora's problem, an algorithm may pay a cost up to $1$ to open each box and thus, the range of the realized objective is $[-n, 1]$ instead of $[0, 1]$.
In the main text, we use a simple hypothesis class $\hypoclass$, which has a hypothesis $\hypo_A$ for each algorithm $A$, normalizing its value to be in $[0, 1]$ by letting it be the realized objective of $A$ plus $n$ and scaled by $\frac{1}{n+1}$.
Therefore, to get an $\epsilon$-additive approximation in Pandora's problem, we need a $\frac{\epsilon}{n+1}$-additive approximation w.r.t.\ the general learning problem $\hypoclass$.
Therefore, applying Theorem~\ref{thm:strong-monotone-upper-bound} to this hypothesis class $\hypoclass$ gives only an $\tilde{O} \big( \frac{n^3}{\epsilon^2} \big)$ sample complexity bound.

Although the objective could be as small as $-n$ in the worst cases, intuitively the chance of getting such a bad objective shall be negligible if the algorithm is reasonable w.r.t.\ the underlying distribution.
Indeed, we will reason that it is without loss of generality to consider algorithms that stop whenever the cost exceeds $\log \frac{1}{\epsilon}$.
As a result, we avoid scaling the value of the hypotheses by a $n+1$ factor.

In particular, we consider the following notion of rational algorithms w.r.t.\ a given distribution.

\begin{definition}[Rational Algorithms]
    For any distribution $\vec{D}$ and any cost vector $\vec{c}$, an algorithm $A$ for the Pandora's problem is \emph{rational} w.r.t.\ $\vec{D}$ and $\vec{c}$ if whenever $A$ opens a box $i$, the expected increase in the best observed reward is greater than or equal to the cost $c_i$.
\end{definition}

The next lemma follows by the definition of the optimal algorithm.

\begin{lemma}
    \label{lem:pandora-rational-alg}
    Suppose $A$ is the optimal algorithm w.r.t.\ a distribution $\vec{\tilde{D}}$ and a cost vector $\vec{c}$.
    Then, $A$ is rational w.r.t.\ any distribution $\vec{D}$ that stochastically dominates (including $\vec{\tilde{D}}$ itself), and $\vec{c}$.
\end{lemma}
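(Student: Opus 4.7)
The plan is to leverage Weitzman's characterization of the optimal algorithm for Pandora's problem in terms of reservation values, already reviewed in Appendix~\ref{app:pandora-optimal}. Recall that the reservation value $\sigma_i$ w.r.t.\ $\tilde{D}_i$ is defined by $\E_{t_i \sim \tilde{D}_i}[(t_i - \sigma_i)^+] = c_i$, and the optimal algorithm $A$ for $\vec{\tilde{D}}$ opens a box $i$ only when the best reward observed so far is strictly below $\sigma_i$. I will use this threshold structure together with the independence of coordinates in $\vec{D}$ to carry out the argument.

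First, I fix an arbitrary run of $A$ on a sample drawn from $\vec{D}$, and condition on the event that $A$ opens some box $i$ with the best observed reward before this opening equal to some value $u$. By Weitzman's characterization, this event can only occur if $u < \sigma_i$: the algorithm's decision rule is the same regardless of the distribution it is executed on, since its thresholds $\sigma_i$ are computed from $\vec{\tilde{D}}$ and are part of the algorithm's specification. Next, I observe that the event of reaching this state depends only on the realized rewards of the previously opened boxes. Since $\vec{D}$ is a product distribution, conditioning on this event leaves the marginal distribution of $t_i$ unchanged, so $t_i$ is still distributed as $D_i$. Thus the expected increase in the best observed reward upon opening box $i$ is exactly $\E_{t_i \sim D_i}[(t_i - u)^+]$.

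Finally, I chain two monotonicity observations. Because $D_i$ stochastically dominates $\tilde{D}_i$ and $x \mapsto (x - u)^+$ is non-decreasing, one has $\E_{t_i \sim D_i}[(t_i - u)^+] \ge \E_{t_i \sim \tilde{D}_i}[(t_i - u)^+]$. Because $u < \sigma_i$ and $\E_{t_i \sim \tilde{D}_i}[(t_i - y)^+]$ is non-increasing in $y$, one has $\E_{t_i \sim \tilde{D}_i}[(t_i - u)^+] \ge \E_{t_i \sim \tilde{D}_i}[(t_i - \sigma_i)^+] = c_i$ by the very definition of $\sigma_i$. Combining these gives $\E_{t_i \sim D_i}[(t_i - u)^+] \ge c_i$, which is precisely the rationality condition for box $i$ w.r.t.\ $\vec{D}$ and $\vec{c}$.

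The only delicate step in this plan is the conditioning argument that $t_i \sim D_i$ remains valid after conditioning on the event that box $i$ is opened with best-so-far equal to $u$; this is where the product-distribution assumption on $\vec{D}$ is essential and where one has to be careful to state that the event depends only on coordinates $j$ with $\sigma_j > \sigma_i$ (i.e., boxes opened before $i$ in Weitzman's order), which are independent of $t_i$. Everything else reduces to the two monotonicity inequalities above and the defining equation of $\sigma_i$.
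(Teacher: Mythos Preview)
Your proof is correct and is precisely the kind of detailed argument the paper has in mind; the paper merely states that the lemma ``follows by the definition of the optimal algorithm'' without spelling out the steps. Your use of Weitzman's threshold characterization ($A$ opens box $i$ only when the best-so-far $u$ is below $\sigma_i$), combined with independence and the two monotonicity inequalities, is exactly the intended route.
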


We will need a standard Bernstein type concentration bound for submartingales.
\begin{lemma}
    \label{lem:bernstein-submartingale}
    Let $S_0,S_1\cdots,S_n$ be a submartingale with respect to filtration $\mathcal{F}_0,\mathcal{F}_1,\cdots,\mathcal{F}_k$. Suppose $S_0=0$, $|S_i-S_{i-1}|\le M$, $\sum_{i=1}^n\E[(S_i-S_{i-1})^2|\mathcal{F}_{i-1}]\le L$, then for any positive $\Delta$,
    \[
        \Pr[S_n< -\Delta]\le \exp\left(\frac{\Delta^2}{2L+(2/3)M\Delta}\right)
        ~.
    \]
\end{lemma}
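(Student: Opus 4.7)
The plan is to carry out a standard exponential-moment argument for the supermartingale $-S_n$. Let $X_i = S_i - S_{i-1}$, so that $\E[X_i \mid \mathcal{F}_{i-1}] \ge 0$ because $(S_i)$ is a submartingale, and $|X_i| \le M$ by hypothesis. For any $\lambda > 0$ with $\lambda M < 3$, I will use the pointwise bound
\[
    e^{-\lambda x} \le 1 - \lambda x + \frac{\lambda^2 x^2}{2\,(1 - \lambda M/3)} \quad \text{for all } |x| \le M,
\]
which is the standard Bennett/Bernstein auxiliary inequality (obtained by comparing $\sum_{k \ge 2} (-\lambda x)^k/k!$ termwise with a geometric series in $\lambda M/3$). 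Taking conditional expectations and using $\E[X_i \mid \mathcal{F}_{i-1}] \ge 0$ and $1+y \le e^y$, this gives
\[
    \E\!\left[e^{-\lambda X_i} \,\middle|\, \mathcal{F}_{i-1}\right] \le \exp\!\left( \frac{\lambda^2\, \E[X_i^2 \mid \mathcal{F}_{i-1}]}{2\,(1-\lambda M/3)}\right).
\]

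Next I would iterate this inequality via the tower property. Writing $e^{-\lambda S_n} = \prod_{i=1}^n e^{-\lambda X_i}$, conditioning successively on $\mathcal{F}_{n-1}, \mathcal{F}_{n-2},\dots$, and using the predictable-variance hypothesis $\sum_i \E[X_i^2 \mid \mathcal{F}_{i-1}] \le L$, I obtain
\[
    \E\!\left[e^{-\lambda S_n}\right] \le \exp\!\left( \frac{\lambda^2 L}{2\,(1-\lambda M/3)}\right).
\]
A Markov/Chernoff step then yields, for every $\lambda \in (0, 3/M)$,
\[
    \Pr\!\left[S_n < -\Delta\right] = \Pr\!\left[e^{-\lambda S_n} > e^{\lambda \Delta}\right] \le \exp\!\left( -\lambda \Delta + \frac{\lambda^2 L}{2\,(1-\lambda M/3)}\right).
\]

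Finally, I would optimize over $\lambda$. Choosing $\lambda = \Delta/(L + M\Delta/3)$ (which indeed lies in $(0,3/M)$), a direct calculation shows that the exponent becomes $-\Delta^2 / \bigl(2L + (2/3)M\Delta\bigr)$, giving exactly the claimed bound (with the implicit minus sign in the exponent on the right-hand side). The only subtlety/obstacle is verifying the auxiliary pointwise inequality and confirming that the chosen $\lambda$ lies in the admissible range; both are routine but must be done carefully to get the precise constants $2$ and $2/3$ in the denominator. Everything else is a standard martingale exponentiation argument, so no new ingredient beyond the submartingale/predictable-variance hypotheses is needed.
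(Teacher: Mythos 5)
The paper does not prove this lemma; it is cited as a ``standard Bernstein type concentration bound for submartingales,'' so there is no in-paper argument to compare against. Your proposal is the correct standard route (Freedman/Bernstein for martingales via exponential moments), and the optimization at the end does recover the stated exponent $-\Delta^2/\bigl(2L + (2/3)M\Delta\bigr)$; you are also right that the paper's statement omits a minus sign in the exponent.

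Two places where your sketch is slightly imprecise and would need care if written out. First, the auxiliary pointwise bound $e^{-\lambda x}\le 1-\lambda x+\frac{\lambda^2 x^2}{2(1-\lambda M/3)}$ is \emph{not} proved by a termwise geometric comparison on $\sum_{k\ge2}(-\lambda x)^k/k!$, because for $x>0$ the terms alternate in sign; the clean way is to note $g(y)=e^y-1-y\le y^2/2$ for $y\le0$ (so the case $x>0$ is immediate) and $g(y)\le \frac{y^2}{2(1-y/3)}$ for $0\le y<3$ via $k!/2\ge3^{k-2}$ (so the case $x<0$ follows using $|x|\le M$). Second, ``conditioning successively'' on $\mathcal{F}_{n-1},\mathcal{F}_{n-2},\dots$ does not immediately extract the factor $\exp\bigl(\frac{\lambda^2 V_n}{2(1-\lambda M/3)}\bigr)$ from the inner expectation, because $V_i:=\E[X_i^2\mid\mathcal{F}_{i-1}]$ is $\mathcal{F}_{i-1}$-measurable but not $\mathcal{F}_{i-2}$-measurable. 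The standard fix, which makes your outline rigorous, is to check that $Z_i := \exp\bigl(-\lambda S_i - \frac{\lambda^2}{2(1-\lambda M/3)}\sum_{j\le i}V_j\bigr)$ is a supermartingale (this uses exactly your one-step MGF bound and the $\mathcal{F}_{i-1}$-measurability of $V_i$), giving $\E[Z_n]\le1$; then the a.s. bound $\sum_{j\le n}V_j\le L$ yields $\E[e^{-\lambda S_n}]\le\exp\bigl(\frac{\lambda^2 L}{2(1-\lambda M/3)}\bigr)$. With these two points patched, your argument is complete and matches the intended lemma.
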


\begin{lemma}
    \label{lem:pandora-large-cost-bound}
    Suppose an algorithm $A$ is rational w.r.t.\ a distribution $\vec{D}$ and a cost vector $\vec{c}$.
    Then, the probability that $A$ pays a cost more than $\Omega \big( \log \frac{1}{\epsilon} \big)$ is at most $\epsilon$.
\end{lemma}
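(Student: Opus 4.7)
The plan is to exhibit a submartingale capturing ``observed reward minus cost paid,'' and then apply the Bernstein bound for submartingales (Lemma~\ref{lem:bernstein-submartingale}) to show that large total cost is exponentially unlikely. Let $\tau$ be the random number of boxes opened before $A$ stops, let $i_k$ denote the (possibly random) index of the $k$-th box that $A$ opens (defined arbitrarily for $k > \tau$), let $U_k$ be the best reward observed after $k$ openings with $U_0 := 0$, let $C_k := \sum_{j=1}^{k} c_{i_j}$, and consider the stopped process $S_k := U_{k \wedge \tau} - C_{k \wedge \tau}$ on the natural filtration $\mathcal{F}_k$ (augmented with $A$'s internal randomness). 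Then $S_0 = 0$.

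First I would check that $S_k$ is a submartingale. For $k > \tau$ this is automatic since $S_k = S_{k-1}$. For $k \le \tau$, the identity of the next box $i_k$ is determined by $\mathcal{F}_{k-1}$, and rationality w.r.t.\ $\vec{D}$ gives $\E[U_k - U_{k-1} \mid \mathcal{F}_{k-1}] \ge c_{i_k}$, hence $\E[S_k - S_{k-1} \mid \mathcal{F}_{k-1}] \ge 0$. Next I would bound the increments. Because $U_k - U_{k-1} \in [0,1]$ and $c_{i_k} \in [0,1]$, we have $|S_k - S_{k-1}| \le 1$, so in particular $(S_k - S_{k-1})^2 \le |S_k - S_{k-1}| \le (U_k - U_{k-1}) + c_{i_k}$. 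Taking conditional expectation and using rationality once more,
\[
\E\big[(S_k - S_{k-1})^2 \,\big|\, \mathcal{F}_{k-1}\big] \;\le\; 2\,\E[U_k - U_{k-1} \mid \mathcal{F}_{k-1}].
\]
Telescoping and taking unconditional expectations yields $\sum_{k=1}^{n} \E[(S_k - S_{k-1})^2 \mid \mathcal{F}_{k-1}] \le 2\,\E[U_\tau] \le 2$, so we may set $M = 1$ and $L = 2$ in Lemma~\ref{lem:bernstein-submartingale}.

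Applying that lemma, for any $\Delta > 0$,
\[
\Pr\big[S_n < -\Delta\big] \;\le\; \exp\!\left(-\frac{\Delta^2}{4 + (2/3)\Delta}\right).
\]
Finally, whenever the cost exceeds a threshold $T$, i.e.\ $C_\tau > T$, the stopped value satisfies $S_n = S_\tau = U_\tau - C_\tau \le 1 - T$; setting $\Delta = T - 1$ and choosing $T = \Theta(\log(1/\epsilon))$ makes the right-hand side at most $\epsilon$. This gives $\Pr[C_\tau > \Omega(\log(1/\epsilon))] \le \epsilon$, as required.

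The main subtle point is justifying the drift and variance bounds in the presence of a possibly randomized stopping time and an algorithm whose choice $i_k$ is itself random: the key is that rationality is a \emph{pointwise} statement, so conditioning on $(\mathcal{F}_{k-1}, i_k)$ first and then averaging over $i_k$ preserves both the drift inequality and the variance inequality. Everything else is a routine application of the stated Bernstein bound.
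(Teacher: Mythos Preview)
Your proof is essentially the same as the paper's: both set up the submartingale $S_k = (\text{best observed reward}) - (\text{cost paid})$, verify the drift via rationality, bound $|S_k - S_{k-1}| \le 1$, bound the total quadratic variation by $2$ (you do this by using rationality once more to control $c_{i_k}$, the paper does it by splitting $|X_i - X_{i-1}|$ into positive and negative parts), and then invoke Lemma~\ref{lem:bernstein-submartingale}. The only cosmetic difference is your explicit handling of the stopping time and the randomized choice of $i_k$, which the paper leaves implicit.
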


\begin{proof}
    For $1 \le i \le n$, let $X_i$ be objective after round $i$;
    if the algorithm stops before round $i$, let $X_i = X_{i-1}$.
    Let $X_0 = 0$.
    Then, by that $A$ is rational, we have:
    \[
        \E \big[ X_i | X_1, X_2, \dots, X_{i-1} \big] \ge X_{i-1}
        ~.
    \]
    
    That is, $X_i$'s form a discrete-time submartingale.
    
    We have $-1 \le X_i - X_{i-1} \le 1$ by definition.
    Further, for any round $1 \le i \le n$, $X_i - X_{i-1}$ is upper bounded by the increment in the best observed reward in the round. 
    Therefore, $\sum_{i : X_i \ge X_{i-1}} (X_i - X_{i-1})$ is at most the best observed reward at the end, which is upper bounded by $1$.
    Hence, we have:
    \begin{align*}
        \sum_{i=1}^n \E \big[ (X_i - X_{i-1})^2 \big]
        & 
        \le \sum_{i=1}^n \E \big[ \big| X_i - X_{i-1} \big| \big]
        && \text{($-1 \le X_i - X_{i-1} \le 1$)} \\
        &
        = \E \bigg[ \sum_{i=1}^n (X_{i-1} - X_i) + 2 \sum_{i : X_i \ge X_{i-1}} (X_i - X_{i-1}) \bigg] \\
        &
        = - \E \big[ X_n \big] + 2 \cdot \E \bigg[ \sum_{i : X_i \ge X_{i-1}} (X_i - X_{i-1}) \bigg] \\
        &
        \le 2 
        ~.
        && \text{($\textstyle \sum_{i : X_i \ge X_{i-1}} (X_i - X_{i-1}) \le 1$)}
    \end{align*}
    
    Since the cost is at most $-X_n$ by definition, it suffices to upper bound the probability that $X_n \le - \Omega \big( \log \frac{1}{\epsilon} \big)$.
    Then, the lemma follows by Bernstein's inequality for submartingales.
\end{proof}

In the following arguments, consider a hypothesis class $\hypoclass$, which has a hypothesis for any algorithm $A$ such that its value equals the objective of $A$, \emph{without scaling}.
Further, fixed the cost vector $\vec{c}$, let $\hypo_{\vec{D}}$ be the hypothesis that corresponds to the optimal algorithm for $\vec{D}$ and $\vec{c}$.
Finally, let $\bar{\hypo}_{\vec{D}}$ be the hypothesis that corresponds to a truncated version of the optimal algorithm for $\vec{D}$, which stops whenever the cost exceeds $\Omega \big( \log \frac{1}{\epsilon} \big)$.

\begin{lemma}
    \label{lem:pandora-truncation}
    Fixed any cost vector $\vec{c}$.
    For any $\vec{D} \succeq \vec{\tilde{D}}$, the truncated version of optimal algorithm w.r.t.\ $\vec{\tilde{D}}$ gets an expected value greater than or equal to that of the untruncated version minus $\epsilon$:
    \[
        \bar{\hypo}_{\vec{\tilde{D}}} \big( \vec{D} \big) \ge \hypo_{\vec{\tilde{D}}} \big( \vec{D} \big) - \epsilon
        ~.
    \]
\end{lemma}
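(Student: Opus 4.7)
\medskip
\noindent
\textbf{Proof plan for Lemma~\ref{lem:pandora-truncation}.}
The plan is to couple the untruncated and truncated runs on the same realization $\vec{\type}\sim\vec{D}$ and bound the per-realization difference on the (rare) event that truncation kicks in. Let $A$ denote the optimal algorithm for $\vec{\tilde D}$, and $\bar A$ its truncation that halts as soon as the accumulated cost exceeds the threshold $C^\star=\Omega(\log\tfrac{1}{\epsilon})$. Let $\tau$ be the stopping time of $A$ and let $\tau'$ be the first round at which the cost exceeds $C^\star$ (so $\bar A$ stops at $\min\{\tau,\tau'\}$). On the event $\{\tau\le \tau'\}$ truncation never triggers and the two algorithms are literally identical, contributing $0$ to the difference. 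Hence all the loss of the truncated algorithm is concentrated on the event $\{\tau'<\tau\}$.

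Next I would bound the per-realization loss on $\{\tau'<\tau\}$. Write $\bar{\hypo}_{\vec{\tilde D}}(\vec{\type})-\hypo_{\vec{\tilde D}}(\vec{\type})=(U_{\tau'}-R_A)+(C_A-C_{\tau'})$, where $U_{\tau'}$ and $C_{\tau'}$ are the best observed reward and total cost at the truncation time, and $R_A=U_\tau$ and $C_A$ are the corresponding quantities for $A$. Using only that rewards lie in $[0,1]$ and costs are nonnegative, one gets $U_{\tau'}-R_A\ge -1$ and $C_A-C_{\tau'}\ge 0$, hence the realized loss is at least $-1$. Thus $\bar{\hypo}_{\vec{\tilde D}}(\vec{D})-\hypo_{\vec{\tilde D}}(\vec{D})\ge -\Pr[\tau'<\tau]$.

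Finally, I would invoke Lemma~\ref{lem:pandora-rational-alg} to conclude that $A$ is rational on the (larger) distribution $\vec{D}$, and then Lemma~\ref{lem:pandora-large-cost-bound} to obtain $\Pr[\tau'<\tau]\le\epsilon$, since $\tau'<\tau$ is exactly the event that $A$'s total cost exceeds the threshold $C^\star=\Omega(\log\tfrac{1}{\epsilon})$. Combining with the previous display gives $\bar{\hypo}_{\vec{\tilde D}}(\vec{D})\ge\hypo_{\vec{\tilde D}}(\vec{D})-\epsilon$, as desired.

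The main conceptual subtlety is making sure the rationality in Lemma~\ref{lem:pandora-large-cost-bound} is applied with respect to the \emph{evaluation} distribution $\vec{D}$, not the distribution $\vec{\tilde D}$ the algorithm was optimized for; this is exactly what Lemma~\ref{lem:pandora-rational-alg} supplies (rationality is preserved under stochastic dominance since the optimal algorithm's reservation-value test remains valid when marginals are shifted upward). Everything else is a clean coupling argument and should be routine.
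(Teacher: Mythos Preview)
Your proposal is correct and follows essentially the same approach as the paper: invoke Lemma~\ref{lem:pandora-rational-alg} to obtain rationality of $A$ on $\vec{D}$, apply Lemma~\ref{lem:pandora-large-cost-bound} to bound the probability that truncation triggers by $\epsilon$, and observe that on that event the untruncated algorithm gains at most $1$ in reward (and can only pay more cost). Your decomposition $(U_{\tau'}-R_A)+(C_A-C_{\tau'})\ge -1+0$ makes explicit what the paper states more tersely as ``$\hypo_{\vec{\tilde D}}$ gets at most $1$ extra reward in subsequent rounds.''
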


\begin{proof}
    By Lemma~\ref{lem:pandora-rational-alg}, $\hypo_{\vec{\tilde{D}}}$ is rational w.r.t.\ $\vec{D}$ and $\vec{c}$.
    Hence, by Lemma~\ref{lem:pandora-large-cost-bound}, the probability that the truncated version $\bar{\hypo}_{\vec{\tilde{D}}}$ and the original version $\hypo_{\vec{\tilde{D}}}$ give different outcomes is at most $\epsilon$.
    Finally, whenever they are different, $\hypo_{\vec{\tilde{D}}}$ gets at most $1$ extra reward in subsequent rounds.
    Putting together proves the lemma.
\end{proof}

We now prove the stated sample complexity upper bound.

\begin{proof}[Proof of Theorem~\ref{thm:pandora} (Upper Bound)]
    We show that the truncated version of PERM gets the stated sample complexity bound.
    We prove an $O(\epsilon)$-additive approximation with the understanding that changing $\epsilon$ by a constant factor does not affect the stated sample complexity bound asymptotically.
    
    It follows from a sequence of inequalities below, similar to those in Section~\ref{sec:strong-monotone}:
    \begin{align*}
        \bar{\hypo}_{\vec{E}} \big( \vec{D} \big)
        &
        \ge \bar{\hypo}_{\vec{E}} \big( \vec{\hat{D}} \big) - O \big( \log \tfrac{1}{\epsilon} \big) \cdot \delta \big( \vec{\hat{D}}, \vec{D} \big)
        && \text{($\bar{\hypo}_{\vec{E}}$ bounded in $[-O \big( \log \tfrac{1}{\epsilon} \big), 1]$)} \\
        &
        \ge \hypo_{\vec{E}} \big( \vec{\hat{D}} \big) - \epsilon - O \big( \log \tfrac{1}{\epsilon} \big) \cdot \delta \big( \vec{\hat{D}}, \vec{D} \big) 
        && \text{(Lemma~\ref{lem:pandora-truncation})} \\
        & 
        \ge \hypo_{\vec{E}} \big( \vec{E} \big) - \epsilon - O \big( \log \tfrac{1}{\epsilon} \big) \cdot \delta \big( \vec{\hat{D}}, \vec{D} \big) 
        && \text{(strong monotonicity, $\vec{\hat{D}} \succeq \vec{E}$)} \\
        & 
        = \opt \big( \vec{E} \big) - \epsilon - O \big( \log \tfrac{1}{\epsilon} \big) \cdot \delta \big( \vec{\hat{D}}, \vec{D} \big) 
        && \text{(definition of $\opt\big(\vec{E}\big)$)} \\
        &
        \ge \opt \big( \vec{\check{D}} \big) - \epsilon - O \big( \log \tfrac{1}{\epsilon} \big) \cdot \delta \big( \vec{\hat{D}}, \vec{D} \big) 
        && \text{(weak monotonicity, $\vec{E} \succeq \vec{\check{D}}$)} \\
        & 
        \ge \bar{\hypo}_{\vec{D}} \big( \vec{\check{D}} \big) - \epsilon - O \big( \log \tfrac{1}{\epsilon} \big) \cdot \delta \big( \vec{\hat{D}}, \vec{D} \big) 
        && \textrm{(definition of $\opt\big(\vec{\check{D}}\big)$)} \\
        & 
        \ge \bar{\hypo}_{\vec{D}} \big( \vec{D} \big) - \epsilon - O \big( \log \tfrac{1}{\epsilon} \big) \cdot \big( \delta \big( \vec{\hat{D}}, \vec{D} \big) + \delta \big( \vec{\check{D}}, \vec{D} \big) \big)
        && \text{($\bar{\hypo}_{\vec{D}}$ bounded in $[- O \big( \log \tfrac{1}{\epsilon} \big), 1]$)} \\
        &
        \ge \hypo_{\vec{D}} \big( \vec{D} \big) - 2 \epsilon - O \big( \log \tfrac{1}{\epsilon} \big) \cdot \big( \delta \big( \vec{\hat{D}}, \vec{D} \big) + \delta \big( \vec{\check{D}}, \vec{D} \big) \big)
        && \text{(Lemma~\ref{lem:pandora-truncation})} \\
        &
        = \opt \big( \vec{D} \big) - 2 \epsilon - O \big( \log \tfrac{1}{\epsilon} \big) \cdot \big( \delta \big( \vec{\hat{D}}, \vec{D} \big) + \delta \big( \vec{\check{D}}, \vec{D} \big) \big) 
        ~.
        &&
        \textrm{(definition of $\opt\big(\vec{D}\big)$)}
    \end{align*}
    
    By Lemma~\ref{lem:hellinger-and-total-variation}, Lemma~\ref{lem:strong-monotone-Dhat-and-D} and Lemma~\ref{lem:strong-monotone-Dcheck-and-D}, we get an $O(\epsilon)$-additive approximation.
\end{proof}

\subsubsection{Lower Bound}

Consider $n$ boxes with cost $\frac{1}{n}$ each.
Consider $2^n$ potential instances, in which the reward distribution of each box is either $D^+$ or $D^-$, defined by the following probability mass functions respectively:
\[
    f_{D^+}(x) = \begin{cases}
        \frac{1+\epsilon}{n} & x = 1 ~; \\
        1 - \frac{1+\epsilon}{n} & x = 0 ~.
    \end{cases}
    \quad
    f_{D^-}(x) = \begin{cases}
        \frac{1-\epsilon}{n} & x = 1 ~; \\
        1 - \frac{1-\epsilon}{n} & x = 0 ~.
    \end{cases}
\]

We will refer to each of these $2^n$ instances by the distribution $\vec{D}$, since the cost vector $\vec{c}$ is fixed.

The next lemma follows by the above definition and and simple calculations which we omit.

\begin{lemma}
    \label{lem:pandora-lb-hellinger}
    The squared Hellinger distance between $D^+$ and $D^-$ is bounded by:
    \[
        \Hellinger^2 \big( D^+, D^- \big) = O \bigg( \frac{n}{\epsilon^2} \bigg)
        ~.
    \]
\end{lemma}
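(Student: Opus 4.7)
The plan is to compute $\Hellinger^2(D^+, D^-)$ directly from its definition on the two-point sample space $\{0,1\}$. Since both distributions are supported on $\{0,1\}$, the Hellinger sum has only two terms: one for $x=1$ with masses $(1\pm\epsilon)/n$, and one for $x=0$ with masses $1-(1\pm\epsilon)/n$. Concretely, I would start from
\[
    \Hellinger^2(D^+, D^-) = \tfrac{1}{2}\big(\sqrt{f_{D^+}(1)} - \sqrt{f_{D^-}(1)}\big)^2 + \tfrac{1}{2}\big(\sqrt{f_{D^+}(0)} - \sqrt{f_{D^-}(0)}\big)^2.
\]

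For each term, the key manipulation is the standard algebraic identity $(\sqrt{a} - \sqrt{b})^2 = (a-b)^2/(\sqrt{a}+\sqrt{b})^2$, which converts a difference of square roots into the more tractable ratio of a squared mass difference to a squared sum of square roots. In both terms the numerator is $(2\epsilon/n)^2$ since the masses of $D^+$ and $D^-$ differ by exactly $2\epsilon/n$ at each point.

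The estimates of the two denominators are then done separately. For the $x=1$ term, the denominator $(\sqrt{(1+\epsilon)/n} + \sqrt{(1-\epsilon)/n})^2 = \Theta(1/n)$, provided $\epsilon$ is bounded away from $1$, so the $x=1$ contribution is $\Theta(\epsilon^2/n)$ — this is the dominant term. For the $x=0$ term, $1 - (1\pm\epsilon)/n \to 1$ as $n$ grows, so the denominator is $\Theta(1)$ and the contribution is only $\Theta(\epsilon^2/n^2)$, which is lower order.

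Combining the two contributions gives the stated bound, up to a constant that depends only on how far $\epsilon$ is kept from $1$. I do not anticipate any conceptual obstacle; the only care required is to keep $\epsilon$ a sufficiently small constant (say, $\epsilon \le 1/2$) and $n \ge 2$ so that both denominators are bounded away from zero, and to verify that the $x=1$ term indeed dominates in the stated regime. The calculation is otherwise purely mechanical.
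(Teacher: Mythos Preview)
Your approach is correct and is exactly what the paper has in mind: it simply states that the lemma ``follows by the above definition and simple calculations which we omit,'' and your direct two-term computation via $(\sqrt{a}-\sqrt{b})^2=(a-b)^2/(\sqrt{a}+\sqrt{b})^2$ is the natural way to carry this out.

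One remark: your calculation yields $\Hellinger^2(D^+,D^-)=\Theta(\epsilon^2/n)$, not $\Theta(n/\epsilon^2)$. The displayed bound $O(n/\epsilon^2)$ in the lemma statement is a typo in the paper --- the intended (and needed) bound is $O(\epsilon^2/n)$, which is precisely what makes the downstream argument in Lemma~\ref{lem:pandora-lb-local-mistake-bound} work: with $N \le c\cdot n/\epsilon^2$ samples one gets $N\cdot \Hellinger^2(D^+,D^-)=O(c)$, hence the two product measures are close in total variation. Your computation establishes the correct bound; just be aware that ``the stated bound'' as printed is not the one actually required.
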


To distinguish the algorithm for Pandora's problem and the learning algorithm, we will refer to the former as a hypothesis.

Since the rewards are either $0$ or $1$, any hypothesis is characterized by an ordered subsequence $i_1, i_2, \dots, i_k$ of the boxes such that it opens the boxes one by one until it gets a reward $1$;
if all $k$ rewards are $0$, it stops and leaves the remaining $n - k$ boxes unopened. 
The optimal hypothesis chooses a box into the subsequence if and only if its distribution equals $D^+$ (order is irrelevant since they are identical). 
Therefore, for any instance $\vec{D}$ defined above, any hypothesis $\hypo$, and any box $1 \le i \le n$, we say that $\hypo$ makes a mistake on box $i$ w.r.t.\ $\vec{D}$ if either $D_i = D^+$ but $i$ isn't in the subsequence chosen by $\hypo$, or $D_i = D^-$ but $i$ is in the subsequence.
We simply say that the algorithm makes a mistake on box $i$ w.r.t\ $\vec{D}$ if it selects a hypothesis that makes such a mistake.
Whether a given learning algorithm makes a mistake might be a random event if it is randomized.

In the rest of the argument, we first argue that the additive approximation error scales linearly with number of mistakes made by the chosen hypothesis.
Then, we argue through a sequence of lemmas that for any algorithm that takes less than $c \cdot \frac{n}{\epsilon^2}$ samples for some sufficiently small constant $c > 0$, there is an instance $\vec{D}$ for which it picks a hypothesis that makes $\Omega \big( n \big)$ mistakes with at least constant probability.

\begin{lemma}
    \label{lem:pandora-lb-mistakes-and-error}
    For any instance $\vec{D}$, if a hypothesis $\hypo$ makes $k$ mistakes, then we have:
    \[
        \hypo \big( \vec{D} \big) \le \opt \big( \vec{D} \big) - \Omega \bigg( \frac{k \epsilon}{n} \bigg)
        ~.
    \]
\end{lemma}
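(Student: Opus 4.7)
The plan is to write the value of any opening sequence $(i_1,\ldots,i_m)$ as $\sum_{j=1}^m P_j(p_{i_j}-1/n)$, where $P_j = \prod_{\ell<j}(1-p_{i_\ell})$ is the probability of reaching position $j$. In the construction above, $p_i\in\{(1-\epsilon)/n,(1+\epsilon)/n\}$, so $p_{i_j}-1/n = \pm\epsilon/n$ (plus for $D^+$ boxes, minus for $D^-$), and because every $p_i\le 2/n$, every $P_j$ with $j\le n$ satisfies $P_j \ge (1-2/n)^{n-1} \ge c_0$ for an absolute constant $c_0>0$.

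I would decompose the set $S$ that $\hypo$ opens as $A\sqcup B$, where $A = S\cap\{i:D_i=D^+\}$ and $B = S\setminus A$ (the $k^-$ type-(b) mistakes), and let $C$ denote the missed good boxes ($k^+$ type-(a) mistakes), so $k = k^-+k^+$. I then transform $\hypo$ into $\opt$ in three stages and track the value change each time. \emph{Stage 1:} remove bad boxes from $\hypo$'s sequence one at a time, always choosing the rightmost remaining bad box. At such a step the tail after the removed box consists entirely of good boxes (since all later bad boxes have already been removed), so its conditional expected value $V_{\text{tail}}$ is non-negative, and a direct one-step calculation gives a value increase of $P_j(p_{i^-}V_{\text{tail}}+\epsilon/n) \ge c_0\epsilon/n$. \emph{Stage 2:} append the boxes in $C$ to the end of the current (all-good) sequence; each appended good box contributes $(q^+)^{\ell-1}\epsilon/n \ge c_0\epsilon/n$ to the value, where $q^+ = 1-(1+\epsilon)/n$. \emph{Stage 3:} the resulting sequence opens exactly the set $\{i:D_i=D^+\}$, and since these are i.i.d.\ Bernoulli, the expected value is permutation-invariant and equals $\opt(\vec{D})$.

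Summing across the stages gives $\opt(\vec{D})-\hypo(\vec{D})\ge \Omega(k^-\epsilon/n) + \Omega(k^+\epsilon/n) = \Omega(k\epsilon/n)$, as desired.

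The main obstacle will be justifying the right-to-left ordering in Stage 1. If bad boxes were removed in an arbitrary order, the tail after a bad box could still contain further bad boxes, making $V_{\text{tail}}$ possibly negative; then the term $P_j(p_{i^-}V_{\text{tail}}+\epsilon/n)$ could in principle be negative and the one-step argument would collapse. Processing bad boxes from right to left guarantees that $V_{\text{tail}}\ge 0$ whenever we touch one. Beyond this observation, the proof reduces to elementary algebra together with the uniform lower bound $P_j\ge c_0$, which holds because every success probability is $O(1/n)$ so no prefix of length at most $n$ can collapse the reach probability below a constant.
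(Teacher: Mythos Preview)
Your argument is correct and takes a genuinely different route from the paper's. The paper upper bounds $\hypo(\vec{D})$ by the value of the specific ordering that opens the $n^+-k^+$ good boxes first and the $k^-$ bad boxes afterwards, then subtracts this from the explicit geometric-sum expression for $\opt(\vec{D})$ and lower bounds the resulting tail sums via $(1-(1+\epsilon)/n)^n \ge e^{-2-2\epsilon}$. In particular, the paper relies on (but does not prove) the fact that the good-first ordering maximizes the value over all orderings of $\hypo$'s box set.

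Your transformation-based proof sidesteps that monotonicity claim entirely: you work on $\hypo$'s \emph{actual} sequence and convert it to the optimal one by local moves, each with a provable $\Omega(\epsilon/n)$ gain. The right-to-left removal order is the key idea, since it guarantees $V_{\text{tail}}\ge 0$ at the moment a bad box is removed; processing in any other order could indeed make the one-step increment negative, as you note. The cost of your approach is only a slightly less explicit constant; the benefit is that it avoids an unproved (though true) exchange argument in the paper. Both proofs ultimately rest on the same reach-probability bound $\prod_{\ell<j}(1-p_{i_\ell}) = \Omega(1)$.

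One minor technicality: your bound $P_j \ge (1-2/n)^{n-1}$ collapses to $0$ at $n=2$. Using $P_j \ge (1-(1+\epsilon)/n)^{n-1}$, as the paper effectively does, handles all $n\ge 2$.
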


\begin{proof}
    Suppose the instance have $n^+$ and $n^-$ boxes with reward distributions equal to $D^+$ and $D^-$ respectively.
    Further suppose $\hypo$ makes $k^+$ and $k^-$ mistakes on the two types of boxes.
    Hence, $\hypo$ includes $n^+ - k^+$ boxes with distributions equal to $D^+$ and $k^-$ boxes with distributions equal to $D^-$ in its subsequence.
    
    The expected reward minus cost for opening a box with distribution $D^+$ is $\frac{\epsilon}{n}$; 
    opening a box with distribution $D^-$ gives $- \frac{\epsilon}{n}$.
    Further, the probability of opening the $i$-th box in the sequence is equal to the probability that the first $i-1$ rewards are all $0$.
    
    Hence, the optimal is:
    \[
        \opt \big( \vec{D} \big) = \frac{\epsilon}{n} \bigg( 1 + \bigg(1 - \frac{1+\epsilon}{n} \bigg) + \dots + \bigg(1 - \frac{1+\epsilon}{n} \bigg)^{n^+-1} \bigg) \\
        ~.
    \]
    
    The expected value of the hypothesis is at most (when it opens the $n^+ - k^+$ boxes with reward distributions equal to $D^+$ first):
    \[
        \hypo \big( \vec{D} \big) \le \frac{\epsilon}{n} \bigg( \sum_{i=1}^{n^+ - k^+} \bigg(1 - \frac{1+\epsilon}{n} \bigg)^{i-1} - \bigg(1 - \frac{1+\epsilon}{n} \bigg)^{n^+ - k^+} \sum_{i=1}^{k^-} \bigg(1 - \frac{1-\epsilon}{n} \bigg)^{i-1} \bigg)
    \]
    
    Therefore, we have:
    \begin{align*}
        \opt \big( \vec{D} \big) - \hypo \big( \vec{D} \big) 
        &
        \ge \frac{\epsilon}{n} \bigg(1 - \frac{1+\epsilon}{n} \bigg)^{n^+ - k^+} \bigg( \sum_{i=1}^{k^+} \bigg(1 - \frac{1+\epsilon}{n} \bigg)^{i-1} + \sum_{i=1}^{k^-} \bigg(1 - \frac{1-\epsilon}{n} \bigg)^{i-1} \bigg) \\
        &
        \ge \frac{\epsilon}{n} \bigg( \sum_{i=1}^{k^+} \bigg(1 - \frac{1+\epsilon}{n} \bigg)^n + \sum_{i=1}^{k^-} \bigg(1 - \frac{1+\epsilon}{n} \bigg)^n \bigg) \\[1ex]
        &
        = \frac{\epsilon k}{n} \bigg(1 - \frac{1+\epsilon}{n} \bigg)^n \\[2ex]
        &
        \ge \frac{\epsilon k}{n} \exp \big( - 2 - 2\epsilon \big) 
        ~.
    \end{align*}
    
    The last inequality is due to $1-x>e^{-2x}$ for $0 < x < \frac{1}{2}$.
\end{proof}

\begin{lemma}
    \label{lem:pandora-lb-local-mistake-bound}
    For any algorithm $A$, any box $1 \le i \le n$, and any two neighboring instances $\vec{D^+}$ and $\vec{D^-}$ that differ only in the $i$-th coordinate, we have:
    \[
        \Pr \big[ \text{$A$ makes a mistake on box $i$ w.r.t.\ $\vec{D^+}$} \big] + \Pr \big[ \text{$A$ makes a mistake on box $i$ w.r.t.\ $\vec{D^-}$} \big] \ge \Omega(1)
        ~.
    \]
\end{lemma}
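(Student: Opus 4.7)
The plan is a standard Le Cam two-point indistinguishability argument against the samples. Since the two instances $\vec{D^+}$ and $\vec{D^-}$ agree in every coordinate except the $i$-th, the $N$-sample distributions $P^+$ and $P^-$ over the inputs fed to $A$ are product distributions with identical factors except in the coordinate corresponding to box $i$; by Lemma~\ref{lem:hellinger-decompose} the Hellinger-distance computation therefore reduces entirely to comparing $(D^+)^N$ with $(D^-)^N$, and by the data-processing inequality the TV distance between the (possibly randomized) output distributions is bounded by $\delta(P^+,P^-)$.

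First, I would observe that whether $A$ "makes a mistake on box $i$" is determined solely by whether box $i$ appears in its output hypothesis's opening sequence: for $\vec{D^+}$ (where $D_i = D^+$) a mistake means $i$ is omitted, while for $\vec{D^-}$ a mistake means $i$ is included. These two events are \emph{complementary} subsets of the output space. So if $R$ denotes the set of sample outcomes on which $A$'s output omits box $i$ (WLOG $A$ is deterministic after conditioning on its random coins), then $\Pr[\text{mistake on }\vec{D^+}] = P^+(R)$ and $\Pr[\text{mistake on }\vec{D^-}] = 1 - P^-(R)$, whence
\[
    \Pr[\text{mistake on }\vec{D^+}] + \Pr[\text{mistake on }\vec{D^-}] \;\ge\; 1 - \bigl|P^+(R) - P^-(R)\bigr| \;\ge\; 1 - \delta(P^+, P^-).
\]

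Second, I would bound $\delta(P^+,P^-)$ via the Hellinger distance and tensorization. By Lemma~\ref{lem:hellinger-decompose} applied to the $N$-sample product, $\Hellinger^2(P^+, P^-) = 1 - (1 - \Hellinger^2(D^+,D^-))^N \le N \cdot \Hellinger^2(D^+,D^-)$; combining with Lemma~\ref{lem:pandora-lb-hellinger}, which gives $\Hellinger^2(D^+, D^-) = O(\epsilon^2/n)$, and with the assumed sample bound $N \le c \cdot n/\epsilon^2$ (the regime in which the overall lower bound is claimed), we get $\Hellinger^2(P^+, P^-) = O(c)$. Then Lemma~\ref{lem:hellinger-and-total-variation} gives $\delta(P^+,P^-) = O(\sqrt{c})$, which is at most a small constant when $c$ is chosen small enough. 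Substituting into the previous display yields the $\Omega(1)$ lower bound.

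The only genuinely non-routine step is the first one: recognizing that the two mistake events on the two instances are complementary events on the \emph{same} output space, which is what lets a single TV bound directly lower-bound their sum. After that observation, the rest is mechanical: tensorization of Hellinger distance, the Hellinger-to-TV conversion, and the Hellinger bound from Lemma~\ref{lem:pandora-lb-hellinger}.
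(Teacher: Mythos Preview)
Your proposal is correct and follows essentially the same Le Cam two-point argument as the paper: both proofs hinge on the observation that ``omits box $i$'' and ``includes box $i$'' are complementary output events, so the sum of mistake probabilities is at least $1 - \delta(P^+,P^-)$, and then both bound this TV distance via Lemma~\ref{lem:pandora-lb-hellinger} together with the sample-count assumption $N \le c\cdot n/\epsilon^2$. You are simply more explicit than the paper about the Hellinger tensorization over the $N$ samples and the Hellinger-to-TV conversion (and you correctly read Lemma~\ref{lem:pandora-lb-hellinger} as $\Hellinger^2(D^+,D^-)=O(\epsilon^2/n)$, fixing what is evidently a typo in the statement).
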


\begin{proof}
    By definition, any hypothesis $h$ makes a mistake on box $i$ w.r.t.\ either $\vec{D^+}$ or $\vec{D^-}$.
    Let $\hypoclass^+$ and $\hypoclass^-$ denote the two subsets of hypotheses respectively.
    On the one hand, we have:
    \[
         \Pr \big[ \text{$A$ picks $\hypo \in \hypoclass^+$ given samples from $\vec{D^+}$} \big] + \Pr \big[ \text{$A$ picks $\hypo \in \hypoclass^-$ given samples from $\vec{D^+}$} \big] = 1
         ~.
    \]
    
    On the other hand, with less than $c \cdot \frac{n}{\epsilon^2}$ samples for some sufficiently constant $c > 0$, and by Lemma~\ref{lem:pandora-lb-hellinger}, we have:
    \begin{align*}
        & \Pr \big[ \text{$A$ picks $\hypo \in \hypoclass^-$ given samples from $\vec{D^+}$} \big] \\
        \ge &\Pr \big[ \text{$A$ picks $\hypo \in \hypoclass^-$ given samples from $\vec{D^-}$} \big] - O(1)
        ~,
    \end{align*}
    for a sufficiently small constant inside the big-O notation.
    Putting together proves the lemma.
\end{proof}

As a direct corollary, we have the following via a simple counting argument.

\begin{lemma}
    \label{lem:pandora-lb-global-mistake-bound}
    There is an instance $\vec{D}$ for which the algorithm makes $\Omega(n)$ mistakes in expectation.
\end{lemma}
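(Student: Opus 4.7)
The plan is a standard averaging (double-counting) argument that reduces the global mistake bound to the per-coordinate bound already established in Lemma~\ref{lem:pandora-lb-local-mistake-bound}. Fix any learning algorithm $A$. For each instance $\vec{D}$ in the $2^n$-family and each box $i \in [n]$, let
\[
    p_i(\vec{D}) \defeq \Pr\bigl[\,A \text{ makes a mistake on box } i \text{ w.r.t.\ } \vec{D}\,\bigr],
\]
so that the expected number of mistakes $A$ makes on $\vec{D}$ equals $\sum_{i=1}^n p_i(\vec{D})$.

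The core step is to bound the average of $p_i(\vec{D})$ over the uniform distribution on the $2^n$ instances. For each fixed $i$, group the $2^n$ instances into $2^{n-1}$ pairs $(\vec{D^+}, \vec{D^-})$ that agree on every coordinate except $i$. Lemma~\ref{lem:pandora-lb-local-mistake-bound} applies to each such pair and gives $p_i(\vec{D^+}) + p_i(\vec{D^-}) \ge \Omega(1)$, provided the number of samples is at most $c \cdot \frac{n}{\epsilon^2}$ for a sufficiently small constant $c > 0$. Summing over the $2^{n-1}$ pairs yields
\[
    \sum_{\vec{D}} p_i(\vec{D}) \;\ge\; \Omega\bigl(2^{n-1}\bigr),
\]
and summing this over $i \in [n]$ and exchanging the order of summation gives
\[
    \frac{1}{2^n} \sum_{\vec{D}} \sum_{i=1}^n p_i(\vec{D}) \;\ge\; \Omega(n).
\]
By averaging, at least one instance $\vec{D}^*$ satisfies $\sum_i p_i(\vec{D}^*) \ge \Omega(n)$, which is precisely the conclusion of the lemma.

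To close out the lower bound of Theorem~\ref{thm:pandora}, combine this with Lemma~\ref{lem:pandora-lb-mistakes-and-error}: on $\vec{D}^*$ the hypothesis chosen by $A$ incurs an expected additive loss of $\Omega\bigl(\frac{n \cdot \epsilon}{n}\bigr) = \Omega(\epsilon)$, so $A$ cannot be an $\epsilon$-additive approximation. Since the entire argument was developed for Lemma~\ref{lem:pandora-lb-local-mistake-bound}, there is no real obstacle here; the only thing to double-check is the pairing-and-averaging bookkeeping, in particular that each instance appears in exactly one pair for each coordinate $i$, which is immediate from the Boolean-hypercube structure of the family.
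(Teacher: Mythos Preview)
Your proof is correct and follows essentially the same averaging argument as the paper's own proof, just spelled out in more detail: the paper simply notes that under the uniform distribution over the $2^n$ instances, Lemma~\ref{lem:pandora-lb-local-mistake-bound} implies a constant per-box mistake probability, and then appeals to the same pigeonhole step you carry out explicitly.
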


\begin{proof}
    By Lemma~\ref{lem:pandora-lb-local-mistake-bound}, if $\vec{D}$ is chosen from the $2^n$ possible instances uniformly at random, the algorithm makes a mistake on each box $i$ with constant probability.
    So the lemma follows.
\end{proof}

\begin{proof}[Proof of Theorem~\ref{thm:pandora} (Lower Bound)]
    Consider the instance in Lemma~\ref{lem:pandora-lb-global-mistake-bound}.
    Suppose the algorithm makes $\alpha n$ mistakes in expectation where $\alpha > 0$ is a constant.
    Then, by a standard probability argument, the probability that it makes at least $\frac{\alpha n}{2}$ mistakes is at least $\frac{1}{2}$.
    Hence, by Lemma~\ref{lem:pandora-lb-mistakes-and-error}, the expected additive error is at least $\Omega(\epsilon)$.
\end{proof}

\end{document}